\newtcolorbox{construction}[2][]
{
	breakable,
	colframe = gray!50,
	colback  = gray!10,
	coltitle = gray!10!black,
	before skip = 10pt,
	after skip = 10pt,
	title    = \textbf{#2},
	#1,
}
\newtcolorbox{graphview}[2][]
{
	breakable,
	colframe = black!30,
	colback  = black!0,
	coltitle = gray!10!black,
	before skip = 10pt,
	after skip = 10pt,
	title    = \textbf{#2},
	#1,
}
\newenvironment{reminder}[1]{\bigskip
	\noindent {\bf Reminder of #1.  }\em}{\smallskip}
\renewcommand{\epsilon}{\varepsilon}
\newcommand{\Ex}{\mathop{\mathbb{E}}}
\newcommand{\poly}{\mathop{\mathrm{poly}}}
\newcommand{\polylog}{\mathop{\mathrm{polylog}}}
\newcommand{\eps}{\varepsilon}
\newcommand{\N}{\mathbb{N}}
\newcommand{\bits}{\{0,1\}}
\newcommand{\WT}{\widetilde}
\newcommand{\zeroTon}[1]{\{0,1,\dotsc,#1\}}
\newcommand{\QMA}{\mathsf{QMA}\xspace}
\newcommand{\NP}{\mathsf{NP}\xspace}
\def \ro {\text{r.o.-}}
\newcommand{\cro}[1]{$c$\text{-}\ro}
\newcommand{\PCP}{\mathsf{PCP}\xspace}
\newcommand{\Enc}{\mathsf{Enc}}
\newcommand{\Dec}{\mathsf{Dec}}
\newcommand{\negl}{\mathsf{negl}}
\newcommand{\ie}{\textit{i}.\textit{e}.\@\xspace}
\newcommand{\eg}{\textit{e}.\textit{g}.\@\xspace}
\def\poly{\mathrm{poly}}
\def\polylog{\mathrm{polylog}}
\def\caC{\mathcal{C}}
\def\caH{\mathcal{H}}
\def\caP{\mathcal{P}}
\def\caS{\mathcal{S}}
\def\caD{\mathcal{D}}
\def\caG{\mathcal{G}}
\newcommand{\makeName}[1]{%
	\expandafter\newcommand\csname#1\endcsname{\mathsf{#1}}}
\newtheorem{theorem}{Theorem}[section]
\newtheorem*{theorem*}{Theorem}
\newaliascnt{definition}{theorem}
\theoremstyle{definition}
\newtheorem{definition}[definition]{Definition}
\newtheorem*{definition*}{Definition}
\theoremstyle{plain}
\newaliascnt{lemma}{theorem}
\newtheorem{lemma}[lemma]{Lemma}
\newtheorem*{lemma*}{Lemma}
\newaliascnt{claim}{theorem}
\newtheorem{claim}[claim]{Claim}
\newtheorem*{claim*}{Claim}
\newaliascnt{fact}{theorem}
\newtheorem*{fact*}{Fact}
\newaliascnt{observation}{theorem}
\newtheorem*{observation*}{Observation}
\newaliascnt{conjecture}{theorem}
\newtheorem{conjecture}[conjecture]{Conjecture}
\newtheorem*{conjecture*}{Conjecture}
\newaliascnt{corollary}{theorem}
\newtheorem{corollary}[corollary]{Corollary}
\newtheorem*{corollary*}{Corollary}
\newaliascnt{remark}{theorem}
\newtheorem{remark}[remark]{Remark}
\newtheorem*{remark*}{Remark}
\newaliascnt{proposition}{theorem}
\newtheorem*{proposition*}{Proposition}
\patchcmd{\ALG@step}{\addtocounter{ALG@line}{1}}{\refstepcounter{ALG@line}}{}{}
\newcommand{\ALG@lineautorefname}{Line}
\tikzset{snake it/.style={decorate, decoration=snake}}
\newcommand{\Pisuc}{\protocol_{\sf succinct}}
\newcommand{\LH}{\mathsf{LH}}
\newcommand{\QPCP}{\mathsf{QPCP}}
\newcommand{\lmin}{\lambda_{\sf min}}
\newcommand{\tr}{\mathrm{tr}}
\newcommand{\Tr}{\mathrm{Tr}}
\newcommand{\spz}[1]{|#1\rangle}
\newcommand{\rpz}[1]{\langle #1 |}
\newcommand{\yes}{\mathsf{yes}}
\newcommand{\no}{\mathsf{no}}
\newcommand{\SimQMA}{\mathsf{SimQMA}}
\newcommand{\ayes}{L_{\sf yes}}
\newcommand{\ano}{L_{\sf no}}
\newcommand{\protocol}{\Pi}
\newcommand{\msg}{\mathsf{msg}}
\newcommand{\CNOT}{\mathsf{CNOT}}
\newcommand{\EncprojLH}{\mathsf{ELH}}
\newcommand{\state}{\mathsf{state}}
\newcommand{\witness}{\mathsf{witness}}
\newcommand{\ancilla}{\mathsf{ancilla}}
\newcommand{\data}{\mathsf{data}}
\newcommand{\out}{\mathsf{out}}
\newcommand{\clock}{\mathsf{clock}}
\newcommand{\idx}{\mathsf{idx}}
\newcommand{\midx}{\mathsf{midx}}
\newcommand{\echk}{\mathsf{Echk}}
\newcommand{\edata}{\mathsf{Edata}}
\newcommand{\eout}{\mathsf{Eout}}
\newcommand{\eidx}{\mathsf{Eidx}}
\newcommand{\emidx}{\mathsf{Emidx}}
\newcommand{\Hstate}{\caH_{\state}}
\newcommand{\Hclock}{\caH_{\clock}}
\newcommand{\proj}[1]{\spz{#1}\rpz{#1}}
\newcommand{\innerprod}[2]{\rpz{#1}#2\rangle}
\newcommand{\Hamprop}{H^{\sf prop}}
\newcommand{\Haminit}{H^{\sf in}}
\newcommand{\Hamout}{H^{\sf out}}
\newcommand{\Hamhis}{H^{\sf history}}
\newcommand{\Hamcheck}{H^{\sf check}}
\newcommand{\Hamstab}{H^{\sf stab}}
\newcommand{\reg}{\mathsf{reg}}
\newcommand{\caV}{\mathcal{V}}
\newcommand{\Venc}{\caV^{\mathsf{enc}}}
\newcommand{\VHenc}{\caV^{\mathsf{enc}\text{-}\mathsf{H}}}
\newcommand{\Utestenc}{U_{\sf etest}}
\newcommand{\UcheckE}{U_{\sf checkE}}
\newcommand{\supp}{\mathrm{supp}}
\newcommand{\other}{\mathsf{other}}
\newcommand{\Pizk}{\Pi_{\sf zk}}
\newcommand{\sQ}{\mathsf{Q}}
\newcommand{\syndrome}{\mathsf{syndrome}}
\newcommand{\Hsyndrome}{\caH_{\syndrome}}
\newcommand{\Chk}{\mathsf{Chk}}
\newcommand{\History}{\mathsf{History}}
\newcommand{\footremember}[2]{%
    \footnote{#2}
    \newcounter{#1}
    \setcounter{#1}{\value{footnote}}
}
\newcolumntype{C}[1]{%
 >{\vbox to 4ex\bgroup\vfill\centering}%
 p{#1}%
 <{\egroup}} 
\def\ShowAuthNotes{1}
\newcommand{\authnote}[2]{\ \\ \textcolor{red}{\parbox{0.9\linewidth}{[{\footnotesize {\bf #1:} { {#2}}}]}}\newline}
\newcommand{\authnote}[2]{}
\newcommand{\lnote}[1]{\authnote{Lijie}{#1}}
\newcommand{\ramis}[1]{\textcolor{magenta}{[RM: \emph{#1}]}}
\newcommand{\ramis}[1]{}
\title{Making Quantum Local Verifiers Simulable with Potential Applications to Zero-Knowledge} % Adds your title
\date{\small \today} % Adds the current date to your “cover” page; leave empty if you do not want to add a date
\author{Lijie Chen\footremember{UC Berkeley}{Miller Institute for Basic Research in Science, University of California Berkeley, Berkeley, CA, 94720, U.S.A.}
\and Ramis Movassagh\footremember{ibmcambridge}{IBM Quantum Research, MIT-IBM Watson AI Research Lab, Cambridge, MA, 02142, U.S.A.}}
\newcommand{\QHROM}{\mathsf{QHROM}}
\newcommand{\Haar}{\mathbb{U}}
\newcommand{\getsR}{\in_{\sf R}}
\newcommand{\unary}{\mathsf{unary}}
\newcommand{\QROM}{\mathsf{QROM}}
\newcommand{\commit}{\mathsf{commit}}
\newcommand{\decommit}{\mathsf{decommit}}
\newcommand{\LocalQMA}{\mathsf{LocalQMA}}
\newcommand{\otp}{\mathsf{otp}}
\newcommand{\eotp}{\mathsf{Eotp}}
\newcommand{\CZ}{\mathsf{C}\text{-}\mathsf{Z}}
\newcommand{\eT}{\mathsf{Emagic}}
\newcommand{\eancC}{\mathsf{Eanc}}
\newcommand{\Had}{\mathsf{H}}
\newcommand{\nwit}{n_{\mathsf{wit}}}
\newcommand{\nanc}{n_{\mathsf{anc}}}
\newcommand{\highlightuline}[1]{\medskip\noindent\uline{\textbf{#1}.}}
\begin{document}
	
	\maketitle
	%\listoffixmes 

\begin{abstract}
Recently Chen and Movassagh proposed the quantum Merkle tree~\cite{ChenM22}, which is a quantum analogue of the well-known classical Merkle tree. It gives a succinct verification protocol for quantum state commitment. Although they only proved security against semi-honest provers, they conjectured its general security.

% They conjectured that the quantum Merkle tree allows an agent (the prover) to first commit to a quantum state and later reveal parts of the state to another agent (the verifier) together with short proofs showing that the original quantum state is intact. 
	
Using the proposed quantum Merkle tree,~\cite{ChenM22} gave a quantum analogue of Kilian's succinct argument for $\NP$, which is based on probabilistically checkable proofs (PCPs). A nice feature of Kilian's argument is that it can be extended to a zero-knowledge succinct argument for $\NP$, if the underlying PCP is zero-knowledge. Hence, a natural question is whether one can also make the quantum succinct argument by Chen and Movassagh zero-knowledge as well.
	
This work makes progress on this problem. We generalize the recent result of Broadbent and Grilo~\cite{BroadbentG20} to show that any local quantum verifier can be made \emph{simulable} with a minor reduction in completeness and soundness. Roughly speaking, a local quantum verifier is simulable if in the yes case, the local views of the verifier can be computed without knowing the actual quantum proof; it can be seen as the quantum analogue of the classical zero-knowledge PCPs. Hence we conjecture that applying the proposed succinct quantum argument of~\cite{ChenM22} to a simulable local verifier is indeed zero-knowledge.
	
\end{abstract}

\section{Introduction}

Chen and Movassagh~\cite{ChenM22} recently proposed the quantum Merkle tree, together with a candidate quantum succinct argument construction for the Gap-$k$-$\LH$ problem.

\begin{definition}
	(Gap-$k$-Local Hamiltonian Problem) Given  $\alpha,\beta$ with $0<\alpha<\beta\le 1$ and a $k$-local Hamiltonian with $m$ local terms $\{ H_i \}_{i \in [m]}$ such that $0 \le  H_i \le I$, decide whether $\lmin(\sum_{i=1}^{m}H_i)$ is at most $\alpha m$ or at least $\beta m$. Below we abbreviate this problem by $(\alpha,\beta)\text{-}k\text{-}\LH$.
\end{definition}

\paragraph*{Motivation: making the Chen-Movassagh construction zero-knowledge?} Assuming the quantum PCP conjecture, which says the above problem is $\QMA$-complete for some constants $\alpha < \beta$ (see~\autoref{sec:QPCP}), the proposed quantum succinct argument applies to all of $\QMA$. The construction from~\cite{ChenM22} can be seen as a quantum analogue of the well-known succinct argument for $\NP$ by Kilian~\cite{Kilian92}, which is based on probabilistically checkable proofs (PCPs). An important feature of the construction of~\cite{Kilian92} is that if the underlying $\PCP$ is zero-knowledge~\cite{DworkFKNS92,KilianPT97}, then so is the succinct argument.\footnote{It is zero-knowledge even in the quantum random oracle model ($\QROM$); see~\cite{ChiesaMS19}.}

A natural question, which is asked explicitly in~\cite{ChenM22}, is whether one can make their candidate quantum succinct argument zero-knowledge as well. Recall that a language $L \in \NP$ admits a succinct classical zero-knowledge PCP, if there is a \emph{local verifier $V_L$} that queries $\polylog(n)$ bits of a given proof $\pi$ such that (1) if $x \in L$, given a corresponding witness $w$, a poly-time algorithm can sample a proof $\pi$ from a distribution $\caD_{x,w}$, such that $V_L$ accepts $\pi$ with probability $1 - \negl(n)$, yet the local views of $V_L$ can be simulated \emph{without knowing} the proof $\pi$ and (2) if $x \notin L$, then $V_L$ accepts any proof $\pi$ with probability at most $\negl(n)$; see~\cite[Section~3.4]{ChiesaMS19} for a formal definition. To summarize, $L$ admits a zero-knowledge PCP if there is a \emph{locally simulable verifier} $V_L$ for $L$ with $1 - \negl(n)$ completeness and $\negl(n)$ soundness. 

Hence, the first step of making their construction zero-knowledge, is to obtain a quantum analogue of a simulable local verifier for the Gap-$k$-$\LH$ problem. Recently, \cite{BroadbentG20} proved that every language $L \in \QMA$ has a \emph{locally simulable verifier} $V_L$, such that (1) $V_L$ only queries $O(1)$ qubits in the witness; (2) given a yes instance $x \in L_{\sf yes}$, there exists a witness $\sigma$ that makes $V_L$ accepts with probability $1-n^{-\omega(1)}$, such that $V_L$'s local views\footnote{If $V_L$ decides to query an $O(1)$-size subset $S$, then her local view is $\Tr_{\bar{S}}[\sigma]$.} can be simulated \emph{without knowing} $\sigma$; (3) for every no instance $x \in L_{\no}$, $V_L$ rejects with probability at least $1/n^{c}$, for some constant $c \ge 3$. In other words, they proved $\QMA \subseteq O(1)\text{-}\SimQMA_{1-n^{-c},1-n^{-\omega(1)}}$ (see~\autoref{defi:simQMA} for the formal definition).~\cite{BroadbentG20}'s construction works for all of $\QMA$ but it has a very poor soundness of $1 - 1/n^{c}$. One may hope that it can achieve the require $\negl(n)$ soundness if we apply it only to the Gap-$k$-$\LH$ problem instead of $\QMA$ in general.
%They then construct the aforementioned zero-knowledge proof system for $O(1)$-$\SimQMA$.

\paragraph*{Our result: locally simulable verifiers with minor loss in parameters.} \cite{BroadbentG20}'s result builds on the techniques of simulable codes/proofs from~\cite{GriloSY19}. It gives a transformation from a general quantum verifier $V$ for $L \in \QMA$ to a locally simulable verifier $V_L$ for $L$. However, due to the use of the Feynman-Kitaev clock construction~\cite{KSV02}, the rejection probability of $V_L$ in the no case is \emph{only inverse polynomial}, even if we apply their construction to the Gap-$k$-$\LH$ problem instead of $\QMA$ in general. Thus, it is not immediately clear how to use their construction of locally simulable verifier for our purpose.

In this paper, we generalize the result of~\cite{BroadbentG20} and show that if one starts from a local verifier $V$ for $L$ with a constant completeness/soundness gap (\eg, the natural local verifier for the Gap-$k$-$\LH$ problem), we can obtain a \emph{locally simulable verifier} with a constant completeness/soundness gap, at the cost of increasing the adaptivity by $1$.

Formally, we use $(k,\gamma)\text{-}\LocalQMA_{s,c}$ to denote all languages with a local verifier $V$ such that (1) $V$ queries at most $k$ qubits non-adaptively, and then applies a $\gamma$-size quantum circuit on the queried qubits to decide to accept or not, and (2) $V$ has completeness $c$ and soundness $s$ (see~\autoref{defi:localQMA} for the formal definition). We also use $(k,\ell)\text{-}\SimQMA$ to denote all languages with a locally simulable verifier that can ask $\ell$ rounds of queries (see~\autoref{defi:simQMA}).

\begin{theorem}\label{theo:LocalQMA-in-SimQMA}
	For every $k,\gamma \in \N$, $0 < \beta < 1$, and negligible function $\alpha$, there are $s \in (0,1)$ and $c \colon \N \to [0,1]$ such that $1 - c(n) \le \negl(n)$ and the following holds
	\[
	(k,\gamma)\text{-}\LocalQMA_{\beta,1-\alpha(n)} \subseteq (O(\log n),2)\text{-}\SimQMA_{s,c(n)}.
	\]
\end{theorem}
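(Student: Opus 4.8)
The plan is to follow the Broadbent--Grilo template (which builds on the simulable codes of Grilo--Slofstra--Yuen), but to avoid the Feynman--Kitaev clock construction, which is what destroys the constant gap in the naive approach. The key observation is that we are already handed a \emph{local} verifier $V$ for $L$, so there is no need to re-encode a general quantum circuit into a history state; the local structure we want is already present. So the first step is: given a $(k,\gamma)\text{-}\LocalQMA_{\beta,1-\alpha(n)}$ verifier $V$ that reads $k$ qubits and applies a $\gamma$-size circuit, rewrite its accept/reject decision as a sum of bounded local Hamiltonian terms, each acting on $O(k+\gamma)=O(1)$ qubits of the witness, so that the acceptance probability is $1 - \tfrac1m\lmin(\sum_i H_i)$ for the right normalization, with a constant promise gap between the yes and no cases inherited from $\beta$ and $\alpha$. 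This is the ``natural local verifier for Gap-$k$-$\LH$'' alluded to in the statement.

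The second step is to apply a simulable quantum error-correcting code to the witness: the honest prover sends, instead of $\sigma$, an encoding $\mathrm{Enc}(\sigma)$ under a code (e.g.\ a concatenated/CSS code with the local-simulability property from~\cite{GriloSY19}) whose logical local operators are supported on $O(\log n)$ physical qubits and whose reduced density matrices on any $O(\log n)$-qubit set, restricted to the codespace, are \emph{independent of the encoded message} --- that is the simulability property. The new verifier $V_L$ then does the following: with appropriate probability it either (a) runs a \emph{code-check} step, measuring a randomly chosen stabilizer of the code, or (b) runs the original local Hamiltonian test, but now each local term $H_i$ on $O(1)$ witness qubits is replaced by its logical version $\widetilde{H_i}$ acting on $O(\log n)$ physical qubits via the code's local logical operators. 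Measuring $\widetilde{H_i}$ requires reading $O(\log n)$ qubits; the stabilizer checks likewise read $O(\log n)$ qubits; and since choosing which of the two tests to run, and which term/stabilizer, is a single nonadaptive coin flip followed by the measurement circuit, the adaptivity is $2$ rounds, matching the $(O(\log n),2)\text{-}\SimQMA$ target. Completeness: on a true encoded ground state all stabilizer checks pass and the Hamiltonian test accepts with probability $1-\negl(n)$ by choice of code parameters and since $\alpha$ is negligible. Simulability: on any state accepted with high probability, the stabilizer checks force it to be $\negl$-close to the codespace, and then \emph{all} of $V_L$'s possible local views are, up to $\negl(n)$, the fixed code-reduced-density-matrices, which the simulator outputs without knowing $\sigma$.

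The third and most delicate step is soundness, i.e.\ showing there is a genuine constant $s<1$ and that no instances are rejected with constant probability. This is where the clock-free setup pays off: a cheating prover sends an arbitrary state $\rho$ on the physical qubits; project (in analysis only) onto the codespace. If $\rho$ is $\epsilon$-far from the codespace, the stabilizer test rejects with probability $\Omega(\epsilon)$ (using a distance/expansion property of the code's stabilizer group, e.g.\ via a simple union/averaging bound since each bad syndrome bit is caught with constant probability). If $\rho$ is $\epsilon$-close to the codespace, it decodes to some logical state $\sigma'$, and the logical Hamiltonian test on $\rho$ accepts with probability at most the acceptance probability of the original $V$ on $\sigma'$ plus $O(\epsilon)$; since $x\notin L$, the original $V$ rejects $\sigma'$ with probability $\ge 1-\beta=\Omega(1)$, so $V_L$ rejects with constant probability. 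Balancing the probability mass between the two sub-tests gives a single constant $s$. I expect the main obstacle to be the quantitative book-keeping here: one must make sure that replacing each $O(1)$-local term by its $O(\log n)$-qubit logical encoding does not blow up the perturbations (the ``close to codespace $\Rightarrow$ close to a logical state'' step has to be done carefully so the error does not accumulate over the $m$ terms), and that the soundness-vs-completeness balancing still leaves $c(n)-s = \Omega(1)$ while $1-c(n)=\negl(n)$ --- which is exactly the constant-gap preservation that fails in~\cite{BroadbentG20}. The rest (defining $\LocalQMA$, $\SimQMA$, the exact code, the simulator) is routine once this core estimate is in hand.
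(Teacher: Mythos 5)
Your plan to discard the Feynman--Kitaev clock breaks the simulability step, and this is a genuine gap rather than book-keeping. In your construction the honest witness is $\Enc(\sigma)$ and the verifier, to test a term $H_i$, must query all physical qubits of the code blocks carrying the logical qubits in $S_i$ and measure a logical version of $H_i$ there. The verifier's local view is then $\Enc(\sigma_{S_i})$ (up to the code-check part), which manifestly depends on $\sigma_{S_i}$: a set containing whole code blocks is not a correctable set, so the code distance gives no hiding, and indeed the view must depend on the witness, since the verifier extracts the acceptance statistics of $C_i$ from it. Hence the claim that ``all of $V_L$'s possible local views are the fixed code-reduced-density-matrices'' is false, and no simulator can output these views without knowing $\sigma$, which is exactly what the simulable-completeness condition of $(O(\log n),2)$-$\SimQMA$ demands (the reduced matrices of the honest witness on the queried sets must be computable from the instance alone). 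One-time-padding the encoded data does not rescue the direct-measurement approach either: the verifier would then have to query both the pad blocks and the data blocks, and their joint reduced state again determines $\sigma_{S_i}$.

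The paper's resolution is the opposite of avoiding the clock: it is the clock construction (applied to the \emph{encoded, fault-tolerant} verifier $\Venc$, with one-time-padded encoded data and an encoded index register $\eidx$ held in superposition over which check is performed) that makes every local view of the honest history state independent of the witness. The constant-gap problem you correctly identify in~\cite{BroadbentG20} is solved not by removing the clock but by observing that $\Venc$ decomposes into only $T=O(k+\gamma)=O(1)$ sub-unitaries, so the $\poly(T,B)$ loss of \autoref{lemma:close-to-history} is a constant and the gap survives (\autoref{lemma:HamEnc-analysis}); the price is that some sub-unitaries are non-local (controlled on $\eidx$), which is handled by the generalized simulation lemma for sets with small per-block intersection (\autoref{lemma:n-copy-sim}) and by the $2$-round adaptive verifier $\VHenc$ that first measures $\eidx$ in the basis $\{\Enc(i)\}$ and then measures the selected local term $J_i$. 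Your soundness sketch (stabilizer test vs.\ decoded logical state) is reasonable in isolation, but without a mechanism hiding the witness from the verifier's queries the theorem's simulability requirement cannot be met, so the proposal does not prove \autoref{theo:LocalQMA-in-SimQMA}.
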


\highlightuline{Techniques} We remark that it is not clear how to directly adapt~\cite{BroadbentG20}'s transformation to prove~\autoref{theo:LocalQMA-in-SimQMA}. Indeed, we have to apply the Feynman-Kitaev clock construction together with the simulable codes to a sequence of \emph{non-local unitaries} (see~\autoref{sec:decompose-unitary}). The crucial observation we made here is that the local simulation of simulable codes in~\cite{BroadbentG20,GriloSY19} indeed works for every subset $S$ that has small intersections with every encoding block (see~\autoref{lemma:n-copy-sim}). See~\autoref{sec:locl-sim} for a proof of~\autoref{theo:LocalQMA-in-SimQMA}.

\paragraph*{Application to quantum zero-knowledge PCP and zero-knowledge succinct quantum arguments.}

The quantum PCP conjecture can be equivalently stated as follows\footnote{For technical reason, we will need a stronger version with $1 - \negl(n)$ completeness.}:

\begin{conjecture}[Quantum PCP conjecture with almost perfect completeness, an equivalent formulation]\label{conj:quantum-pcp-intro}
	There are constants $k,\gamma \in \N$ and $s \in (0,1)$ such that $\QMA \subseteq (k,\gamma)\text{-}\LocalQMA_{s,1 - \negl(n)}$.
\end{conjecture}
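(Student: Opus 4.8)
The plan is to prove that \autoref{conj:quantum-pcp-intro} is equivalent to the quantum PCP conjecture---the statement that $(\alpha,\beta)\text{-}k\text{-}\LH$ is $\QMA$-complete for some constants $0<\alpha<\beta\le 1$ (see \autoref{sec:QPCP})---with the caveat that the $1-\negl(n)$ completeness written into the conjecture is really a mild strengthening of the $\LocalQMA$ side, which I address separately. Both directions of the equivalence merely pass between a $k$-local Hamiltonian and the ``reject Hamiltonian'' read off from a local verifier, so I expect them to be routine; the one delicate point is that almost-perfect completeness, which I isolate at the end.

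\emph{From the $\LocalQMA$ formulation to the quantum PCP conjecture.} Suppose $\QMA\subseteq(k,\gamma)\text{-}\LocalQMA_{s,c}$ with $k,\gamma$ and $s<1$ constant and $c=1-\negl(n)$. I would fix a $\QMA$-complete language $L$---concretely, Kitaev's $1/\poly$-gap local Hamiltonian problem---and take its $(k,\gamma)$-local verifier $V$, which without loss of generality first samples $r$ uniformly from $[N]$, $N=\poly(n)$, then reads off a $k$-qubit set $Q_r$ and a $\gamma$-gate circuit $C_r$ on $Q_r$ (plus ancillas), accepting iff the output qubit is $1$. Put $H_r:=\langle 0^a|\,C_r^{\dagger}\Pi_{\mathrm{rej}}C_r\,|0^a\rangle$, an operator on the $\le k$ qubits of $Q_r$ with $0\le H_r\le I$, and $H:=\sum_{r=1}^{N}H_r$; then $\lmin(H)/N=\min_{\rho}\Pr[\,V\text{ rejects }\rho\,]$. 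Hence a yes-instance of $L$ gives $\lmin(H)\le\negl(n)\cdot N$ and a no-instance gives $\lmin(H)\ge(1-s)N$, a polynomial-time reduction from $L$ to $(\negl(n),1-s)\text{-}k\text{-}\LH$ whose gap $1-s-\negl(n)$ is $\Omega(1)$. Since every local Hamiltonian problem is already in $\QMA$ (submit a ground state, measure a uniformly random term), $(\negl(n),1-s)\text{-}k\text{-}\LH$ is $\QMA$-complete, establishing the quantum PCP conjecture (indeed in a form with near-perfect completeness).

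\emph{From the quantum PCP conjecture to the $\LocalQMA$ formulation.} Suppose $(\alpha,\beta)\text{-}k\text{-}\LH$ is $\QMA$-complete for constants $0<\alpha<\beta\le 1$. Given $L\in\QMA$ and an input $x$, run the reduction to get a $k$-local Hamiltonian $\{H_i\}_{i\in[m]}$ with $0\le H_i\le I$, and build the verifier that samples $i\in_R[m]$, queries the $\le k$ qubits on which $H_i$ acts, realizes the two-outcome measurement $\{I-H_i,H_i\}$ by a Naimark dilation---one ancilla qubit and the unitary $|\psi\rangle|0\rangle\mapsto(\sqrt{I-H_i}\,|\psi\rangle)|0\rangle+(\sqrt{H_i}\,|\psi\rangle)|1\rangle$, which a $2^{O(k)}=O(1)$-gate circuit implements---and accepts iff the ancilla reads $0$. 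Then $\Pr[\text{accept}]=1-\tfrac1m\Tr[(\sum_iH_i)\rho]$ on witness $\rho$, so taking $\rho$ to be the global ground state gives completeness $1-\alpha$ on yes-instances, while soundness on no-instances is $1-\beta$. Thus $L\in(k,2^{O(k)})\text{-}\LocalQMA_{1-\beta,1-\alpha}$, with $k$ nonadaptive queries and an $O(1)$-gate decision circuit; this settles the equivalence in its constant-completeness form.

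\emph{The main obstacle.} The construction just given achieves only the constant completeness $1-\alpha$, whereas \autoref{conj:quantum-pcp-intro} demands $1-\negl(n)$, and boosting completeness while staying local is where I expect the real difficulty. Marriott--Watrous in-place amplification reaches completeness $1-\negl(n)$ but replaces the verifier by a $\poly(n)$-gate circuit, destroying locality, and naive parallel repetition only lowers completeness, so neither tool applies here. Lacking a locality-preserving completeness booster for the constant-gap local Hamiltonian problem---a $\mathsf{QMA}_1$-flavored statement not known to follow from the quantum PCP conjecture---I would treat the $1-\negl(n)$-completeness version as a mild strengthening rather than a literal reformulation (precisely the point of the footnote), and note that it is this stronger form that is fed into \autoref{theo:LocalQMA-in-SimQMA} in the applications.
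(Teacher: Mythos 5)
The statement you were asked about is a \emph{conjecture}: the paper never proves it, and its only supporting material is the footnote (``we will need a stronger version with $1-\negl(n)$ completeness''), the $\QPCP_{\negl}$ conjecture in~\autoref{sec:QPCP}, and the unproved ``immediate'' \autoref{cor:QPCP-and-LocalQMA} and \autoref{cor:QPCP-negl-and-LocalQMA}, which record exactly one direction of the translation (a Gap-$k$-$\LH$ instance yields the natural local verifier that samples a term and measures it). Your proposal is therefore not competing with a proof in the paper; it is a reconstruction and justification of the paper's informal claim of ``equivalence,'' and it gets the logical situation right. Your reverse direction (reading off the reject Hamiltonian $H_r$ from a $(k,\gamma)$-local verifier and observing $\lmin(\sum_r H_r)/N=\min_\rho\Pr[V\text{ rejects }\rho]$) correctly shows that the conjecture as stated implies $\QMA$-completeness of $(\negl(n),1-s)$-$k$-$\LH$, i.e.\ $\QPCP_{\negl}$; your forward direction matches the paper's Corollaries; and your closing observation --- that no locality-preserving completeness amplification is known, so the $1-\negl(n)$-completeness version is genuinely a strengthening equivalent to $\QPCP_{\negl}$ rather than to plain $\QPCP$ --- is precisely the point the paper compresses into its footnote and into the separate $\QPCP_{\negl}$ conjecture. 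One shared technicality worth flagging: \autoref{defi:localQMA} requires the decision circuit $C_i$ to be a $\gamma$-size Clifford${}+{}$T circuit, and neither you nor the paper addresses that the two-outcome measurement $\{I-H_i,H_i\}$ (or your Naimark dilation) can in general only be \emph{approximated} over that gate set; with constant $\gamma$ this costs a constant approximation error, which is harmless for the constant-gap statement but, taken literally, interferes with claiming $1-\negl(n)$ completeness unless one assumes the terms produced by the $\QPCP_{\negl}$ reduction are exactly implementable by constant-size Clifford${}+{}$T circuits. This caveat applies equally to \autoref{cor:QPCP-negl-and-LocalQMA} as stated in the paper, so your proposal is at the same level of rigor as the source it is reconstructing.
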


The following is an immediate corollary of~\autoref{theo:LocalQMA-in-SimQMA}.
\begin{corollary}
	Assuming~\autoref{conj:quantum-pcp-intro} holds, $\QMA \subseteq  (O(\log n),2)\text{-}\SimQMA_{s,1 - \negl(n)}$ for some constants $s \in (0,1)$.
\end{corollary}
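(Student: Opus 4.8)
The plan is to instantiate \autoref{theo:LocalQMA-in-SimQMA} with the parameters handed to us by \autoref{conj:quantum-pcp-intro}. First I would invoke \autoref{conj:quantum-pcp-intro} to obtain constants $k,\gamma \in \N$ and $s_0 \in (0,1)$, together with a negligible function $\alpha_0$ (the completeness error packaged inside "$1-\negl(n)$"), such that $\QMA \subseteq (k,\gamma)\text{-}\LocalQMA_{s_0,\,1-\alpha_0(n)}$. Then I would apply \autoref{theo:LocalQMA-in-SimQMA} with these same $k,\gamma$, with $\beta := s_0$ (which is a constant in $(0,1)$, as required), and with the negligible function $\alpha := \alpha_0$. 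The theorem yields a constant $s \in (0,1)$ and a function $c\colon\N\to[0,1]$ with $1-c(n)\le\negl(n)$ for which $(k,\gamma)\text{-}\LocalQMA_{s_0,\,1-\alpha_0(n)} \subseteq (O(\log n),2)\text{-}\SimQMA_{s,c(n)}$. Chaining the two inclusions gives $\QMA \subseteq (O(\log n),2)\text{-}\SimQMA_{s,c(n)}$; writing $\negl(n) := 1-c(n)$ this is exactly $\QMA \subseteq (O(\log n),2)\text{-}\SimQMA_{s,\,1-\negl(n)}$.

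The only point that needs care — and precisely the reason \autoref{conj:quantum-pcp-intro} is phrased with \emph{almost perfect} completeness $1-\negl(n)$ rather than merely a constant bounded away from $s_0$ — is that the hypothesis of \autoref{theo:LocalQMA-in-SimQMA} explicitly demands a local verifier whose completeness error is negligible. One cannot bridge that gap cheaply by amplification: running a $k$-local verifier on $t$ independent copies of the witness and taking a majority vote stays $O(1)$-local only when $t=O(1)$, which boosts completeness to $1-2^{-\Omega(t)}$ but not to $1-\negl(n)$; forcing negligible completeness error this way would require $t=\omega(1)$ queried qubits, leaving the class $(k,\gamma)\text{-}\LocalQMA$ with constant parameters. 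So we simply take the stronger completeness as part of the hypothesis and feed it into the theorem.

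There is essentially no further obstacle: all the content is already in \autoref{theo:LocalQMA-in-SimQMA}, whose proof (applying the Feynman--Kitaev clock construction to a sequence of non-local unitaries together with the simulable-code machinery of~\cite{BroadbentG20,GriloSY19}) is carried out in \autoref{sec:locl-sim}. Given that theorem, the corollary is a one-line substitution.
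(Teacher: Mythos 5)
Your proposal is correct and matches the paper, which treats this as an immediate corollary of \autoref{theo:LocalQMA-in-SimQMA}: instantiate \autoref{conj:quantum-pcp-intro} to get $\QMA \subseteq (k,\gamma)\text{-}\LocalQMA_{s_0,1-\alpha_0(n)}$ and chain with the theorem applied to $\beta = s_0$ and $\alpha = \alpha_0$. Your side remark on why almost-perfect completeness must be assumed rather than amplified is a reasonable observation but not needed for the argument.
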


The above corollary can be interpreted as quantum PCPs implies quantum \emph{zero-knowledge} PCPs. 

Finally, applying the candidate quantum succinct argument from~\cite{ChenM22} to $(O(\log n),2)$-$\SimQMA$, we obtain a candidate construction of quantum succinct zero-knowledge argument in $\QHROM$; see~\autoref{sec:zk-suc-cand} for more details.

\subsection{Related Works}

\paragraph*{Zero-knowledge protocols for $\QMA$.} Zero-knowledge protocols for $\QMA$ were recently studied in various works. Most of these works need assumptions stronger than the existence of post-quantum OWFs, such as the quantum hardness of LWE (QLWE), the existence of quantum-secure Fully-Homomorphic Encryption (QFHE), and the existence of quantum-secure indistinguishability obfuscation (QiO) (or combinations of them)~\cite{AlagicCGH20,ColadangeloVZ20,BitanskyS20,Bartusek21,Shmueli20,ChiaCY20,ChardouvelisM21,BartusekKLMMVVY22}. We refer the interested readers to these papers for more details.

\paragraph*{Succinct arguments against quantum adversaries.}~\cite{ChiesaMS19} proved that the succinct arguments for $\NP$ in~\cite{Kilian92,Micali00} are also secure in $\QROM$. Recently, it was proved that Kilian's four-message succinct argument for $\NP$~\cite{Kilian92} is secure in the standard model against quantum adversaries under QLWE~\cite{Chiesa21}. In~\cite{ChiaCY20}, under QLWE, QFHE, QiO, and some other assumptions, a succinct argument for $\mathsf{QTIME}(T)$ was constructed, in which the communication between the prover and the verifier is \emph{classical}. Later, in~\cite{BartusekKLMMVVY22}, the same succicnt argument was constructed under only QLWE and QiO.

%In~\cite{BitanskyS20}, assuming the quantum hardness of learning with errors problem and the existence of quantum fully homomorphic encryption, a constant-round zero-knowledge quantum protocol for $\QMA$ with negligible soundness was constructed.

\section*{Organization}

In~\autoref{sec:prelim}, we provide the necessary preliminaries for this paper. In~\autoref{sec:locl-sim}, we show that every local verifier can be made simulable and prove~\autoref{theo:LocalQMA-in-SimQMA}. In~\autoref{sec:zk-suc-cand}, we present our candidate zero-knowledge succinct argument for the Gap-$k$-$\LH$ problem.
	
\section{Preliminaries}\label{sec:prelim}

\subsection{Notation}

We always denote by $\log$ the logarithm in base $2$. We denote by $[n]$ the set of integers $\{1,2,\dots,n\}$.  Let $\reg$ be a register of $n$ qubits. For each $i \in [n]$, $\reg(i)$ denotes the $i$-th qubit in $\reg$, and $\reg[\ell,r]$ denotes the qubits from $\reg(\ell)$ to $\reg(r)$. The corresponding Hilbert space is denoted by $\caH_{\reg}$. For $k$ pairwise-disjoint sets $S_1,\dotsc,S_k$, we use $\bigsqcup_{i \in [k]} S_i$ to denote their union. We say a function $\alpha \colon \N \to [0,1]$ satisfies $\alpha(n) \le \negl(n)$ (\ie, $\alpha$ is \emph{negligible}), if for all constant $k \ge 1$, $\lim_{n \to \infty} \alpha(n) \cdot n^k = 0$ (\ie, $\alpha(n) = o(1/n^k)$ for every $k \in \N$).

For a quantum state $\sigma$ on $n$ qubits and a subset $S \subseteq [n]$, $\sigma_{S} \equiv \Tr_{[n] \setminus S}[\sigma]$ is the reduced density matrix. For a quantum state $\spz{\psi} \in \caH_{\reg}$, for simplicity we sometimes use $\psi$ to denote the corresponding density matrix $\psi=\proj{\psi}$. Given a unitary sequence $U_{1},\dotsc,U_{T}$, we write $U_{[\ell,r]}$ to denote the product $U_{r}U_{r-1}\cdots U_{\ell}$ for ease of notation.

For two quantum states $\sigma$ and $\rho$, we use $\| \sigma - \rho \|_1$ to denote their trace distance. We also write $x \getsR A$ to mean that $x$ is drawn from the set $A$ uniformly at random.

\subsection{History States and History Hamiltonians}

We will use the Feynman-Kitaev clock construction~\cite{KSV02} in this paper. Below we introduce the definitions and some important related results.

For $T \in \N$ and $t \in \zeroTon{T}$, we define the $T$-qubit state $\unary(t,T)$ as
\[
\spz{\unary(t,T)} =  \spz{1^{t}} \otimes \spz{0^{(T-t)}}.
\]

For simplicity, we often write it as $\unary(t)$ when $T$ is clear from the context.

\begin{definition}[History states]
	Let $U$ be a unitary that acts on $\Hstate = \caH_{\witness} \otimes \caH_{\ancilla}$, with $n_1$ qubits in $\witness$ and $n_2$ qubits in $\ancilla$. Let  $U_1,\dotsc,U_{T}$ be a sequence of unitaries such that $U = U_T \dotsc U_1$ (note that these $U_i$'s may not be local). Then a state $\spz{\Psi} \in \caH_{\clock} \otimes \caH_{\state}$ is a history state, if
	\[
	\spz{\Psi} = \frac{1}{\sqrt{T+1}} \sum_{t=0}^{T} \spz{\unary(t)}_{\clock} \otimes \spz{\psi_t}_{\state},
	\]
	where $\spz{\psi_t} = U_{[1,t]} \spz{\psi_{0}}$, and $\spz{\psi_0} = \spz{\phi}_{\witness} \otimes \spz{0^{n_2}}_{\ancilla}$ for some $n_1$-qubit state $\spz{\phi}$.
\end{definition}

\begin{definition}[History Hamiltonians]\label{defi:history-Hamiltonians}
	Let $U$ be a unitary that acts on $\Hstate = \caH_{\witness} \otimes \caH_{\ancilla}$, with $n_1$ qubits in $\witness$ and $n_2$ qubits in $\ancilla$. Let $S_1,\dotsc,S_{B}$ be a partition of $[n_2]$ (\ie, $[n_2] = \bigsqcup_{i \in [B]} S_i$). Let $U_1,\dotsc,U_{T}$ be a sequence of unitaries such that $U = U_T \dotsc U_1$. 
	
	We define the following Hamiltonians:\\
	
	\noindent$\bullet$ Propagation terms:
	\begin{eqnarray}
	\Hamprop_1 &\equiv& \frac{1}{2} \left( \proj{0}_{\clock(1)} + \proj{10}_{\clock[1,2]} - U_1 \spz{1}\rpz{0}_{\clock(1)} - U_1^\dagger \spz{0}\rpz{1}_{\clock(1)} \right),\\
	\Hamprop_{T} &\equiv& \frac{1}{2} \left( \proj{10}_{\clock[T-1,T]} + \proj{1}_{\clock(T)} - U_{T} \spz{1}\rpz{0}_{\clock(T)} - U_T^\dagger \spz{0}\rpz{1}_{\clock(T)} \right),
	\end{eqnarray}
	and for $t \in \{2,\dotsc,T-1\}$, we set
	\begin{equation}
	\Hamprop_{t}\equiv\frac{1}{2} \left( \proj{10} + \proj{11} - U_{t} \spz{11}\rpz{10} - U_t^\dagger \spz{10}\rpz{11} \right)_{\clock[t-1,t]} \otimes \proj{0}_{\clock(t+1)}.
	\end{equation}
	% 			\begin{align*}
	% 			   ~&\Hamprop_{t} \\
	% 			=~&\frac{1}{2} \left( \proj{10}_{\clock[t-1,t]} + \proj{11}_{\clock[t-1,t]} - U_{t} \spz{11}\rpz{10}_{\clock[t-1,t]} - U_t^\dagger \spz{10}\rpz{11}_{\clock[t-1,t]} \right) \otimes \proj{0}_{\clock(t+1)}.
	% 			\end{align*}
	
	\noindent$\bullet$ For each $t \in [T-1]$, we set the $t$-th stabilizing Hamiltonian as
	\[
	\Hamstab_t = \proj{01}_{\clock[t,t+1]}.
	\]
	\noindent$\bullet$ For each $i \in [B]$, we set the $i$-th initialization Hamiltonian as
	\[
	\Haminit_i \coloneqq (I - \proj{0^{|S_i|}})_{\ancilla(S_i)} \otimes \proj{0}_{\clock(1)}.
	\]
	\noindent$\bullet$ We then define the {\it History} Hamiltonian as
	\[
	H \coloneqq \Hamprop + \Haminit + \Hamstab\;,
	\]
	where
	\[
	\Hamprop \coloneqq \sum_{t \in [T]} \Hamprop_t,\qquad\Haminit \coloneqq \sum_{i \in [B]} \Haminit_i,\qquad \Hamstab \coloneqq \sum_{t \in [T-1]} \Hamstab_t~~.
	\]
	% 		\end{itemize}
	
	We call $H$ the history Hamiltonian of sequence $U_1,\dotsc,U_T$. At times we slightly abuse notation and call $H$ the history Hamiltonian of the unitary $U$ for simplicity.
\end{definition}

%\sout{We also need the following lemma saying that every state with low energy regarding the history Hamiltonian is close to some history state.}
%\ramis{We also need the following lemma stating that every state with a low energy with respect to the history Hamiltonian is close to some history state.}
We also need the following lemma stating that every state with a low energy with respect to the history Hamiltonian is close to some history state.

\begin{lemma}[{\cite[Theorem~21]{NirkheVY18}}]\label{lemma:close-to-history}
	Let $U$ be a unitary that acts on $\Hstate = \caH_{\witness} \otimes \caH_{\ancilla}$, with $n_1$ qubits in $\witness$ and $n_2$ qubits in $\ancilla$. Let $S_1,\dotsc,S_{B}$ be a partition of $[n_2]$. Let $U_1,\dotsc,U_{T}$ be a sequence of unitaries such that $U = U_T \dotsc U_1$,  and $H$ be the history Hamiltonian of $U$ on $\caH_{\clock} \otimes \caH_{\state}$. Then for every state $\spz{\psi} \in \caH_{\clock} \otimes \caH_{\state}$ such that $\rpz{\psi} H \spz{\psi} \le \delta$, there is a history state $\spz{\eta}$ such that
	\[
	\| \proj{\eta} - \proj{\psi}  \|_1 \le \poly(T,B) \cdot \sqrt{\delta}.
	\]
\end{lemma}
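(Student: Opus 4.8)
The plan is to reduce the statement to the standard fact that the Feynman--Kitaev history Hamiltonian has an inverse-polynomial spectral gap above its ground space, together with the observation that this ground space consists exactly of history states; this is essentially the route of~\cite{NirkheVY18}. We may assume throughout that $\delta$ is smaller than any prescribed inverse polynomial in $T$ and $B$, since otherwise the claimed bound already exceeds $2$ and one may take $\spz{\eta}$ to be an arbitrary history state.

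\textbf{Reduction to a gap bound.} First one checks directly from~\autoref{defi:history-Hamiltonians} that (i) every history state $\spz{\eta}$ of $U_1,\dots,U_T$ satisfies $H\spz{\eta}=0$: the propagation terms annihilate states of the history form, the initialization terms annihilate states whose $t=0$ slice carries an all-zero ancilla, and the stabilizer terms annihilate legal-clock states; hence $\ker H \ne 0$. Conversely, $\ker H = \ker\Hamstab \cap \ker\Hamprop \cap \ker\Haminit$: the first constraint confines the clock register to $\mathrm{span}\{\spz{\unary(t)}\}_{t\in\zeroTon{T}}$; on this subspace, conjugating by the clockwork unitary $W=\sum_t\proj{\unary(t)}_{\clock}\otimes U_{[1,t]}^{\dagger}$ turns $\Hamprop$ into $\tfrac12 L_{\mathrm{path}}\otimes I_{\state}$, where $L_{\mathrm{path}}$ is the path-graph Laplacian on $\{0,\dots,T\}$, whose kernel is the uniform superposition over $t$; and the last constraint then forces the $t=0$ slice to carry an all-zero ancilla. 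So $\ker H$ equals the span of history states, and in particular every unit vector in $\ker H$ is a history state. Now suppose we have shown the spectral-gap bound $\lambda \coloneqq \lambda_{\min}\!\left(H\big|_{(\ker H)^\perp}\right) \ge 1/\poly(T,B)$. Given $\spz{\psi}$ with $\rpz{\psi}H\spz{\psi}\le\delta$, let $\Pi_0$ be the projector onto $\ker H$; then $\delta \ge \rpz{\psi}H\spz{\psi} \ge \lambda\,\|(I-\Pi_0)\spz{\psi}\|^2$, so $\|(I-\Pi_0)\spz{\psi}\| \le \sqrt{\delta/\lambda}$. Taking $\spz{\eta}\coloneqq\Pi_0\spz{\psi}/\|\Pi_0\spz{\psi}\|$ (a history state) and using $\|\Pi_0\spz{\psi}\|\ge\sqrt{1-\delta/\lambda}$ gives $\|\spz{\psi}-\spz{\eta}\| \le 2\sqrt{\delta/\lambda}$, whence
\[
\|\proj{\eta}-\proj{\psi}\|_1 = 2\sqrt{1-|\innerprod{\psi}{\eta}|^2} \le 2\|\spz{\psi}-\spz{\eta}\| \le 4\sqrt{\delta/\lambda} \le \poly(T,B)\sqrt{\delta}.
\]

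\textbf{Proving the gap.} It remains to show $\lambda \ge 1/\poly(T,B)$, which is the classical spectral-gap analysis of Kitaev Hamiltonians carried out with the partitioned initialization terms of~\autoref{defi:history-Hamiltonians}. The ingredients are: (a) $\Hamstab$ is clock-diagonal and takes value $\ge 1$ on every illegal clock configuration, so $\Hamstab \succeq I-\Pi_{\mathrm{legal}}$, where $\Pi_{\mathrm{legal}}$ projects onto legal-clock states; (b) on the legal subspace $W^\dagger\Hamprop W$ restricts to $\tfrac12 L_{\mathrm{path}}\otimes I_{\state}$, and $L_{\mathrm{path}}$ has spectral gap $\Omega(1/T^2)$ above its kernel; (c) on the legal subspace $\proj{0}_{\clock(1)}$ is exactly the projector onto the $t=0$ slice, and $\sum_{i\in[B]}(I-\proj{0^{|S_i|}})_{\ancilla(S_i)} \succeq I - \proj{0^{n_2}}_{\ancilla}$, so $\Haminit$ penalizes any $t=0$ slice with nonzero ancilla; and (d) these three facts are combined through the standard geometric lemma for sums of positive semidefinite operators (see~\cite{KSV02}), which turns ``each piece has an inverse-polynomial gap above its kernel and the kernels meet at a bounded angle'' into an inverse-polynomial gap above $\ker H$, with $T$ and $B$ entering polynomially.

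\textbf{Main obstacle.} The crux is step (d) together with the bookkeeping that makes the error scale as $\sqrt{\delta}$ rather than, say, $\delta^{1/4}$: one cannot simply project $\spz{\psi}$ onto $\Pi_{\mathrm{legal}}$ and re-expand the energy, since $\|\Hamprop\| = \Theta(T)$ would inflate the bound; the geometric lemma is needed precisely to make the three inverse-polynomial gaps compose into a single one. A related technical point is that a near-ground state places only an $\Theta(1/T)$ fraction of its weight on the $t=0$ slice, so the penalty imposed by $\Haminit$ is effectively diluted by $\Theta(T)$; this dilution is part of the source of the polynomial blow-up in $\lambda^{-1}$ and must be tracked through the geometric-lemma estimate. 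Everything else — the path-graph gap, the operator inequalities in (a) and (c), and the final vector-to-trace-distance conversion — is routine.
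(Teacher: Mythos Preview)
The paper does not prove this lemma; it is quoted from~\cite[Theorem~21]{NirkheVY18}, with only the remark that the slightly different (partitioned) initialization term does not affect the argument. Your sketch is correct and is precisely the standard Kitaev spectral-gap analysis that underlies the cited result: identify $\ker H$ with the span of history states, lower-bound the gap above $\ker H$ by $1/\poly(T,B)$ via the geometric lemma applied to $\Hamstab$, $\Hamprop$, and $\Haminit$, and convert the energy bound into a trace-distance bound. Your observation in (c), namely $\sum_{i\in[B]}(I-\proj{0^{|S_i|}})_{\ancilla(S_i)} \succeq I - \proj{0^{n_2}}_{\ancilla}$, is exactly the one-line check needed to justify the paper's remark that the proof ``easily goes through'' with the partitioned $\Haminit$; in fact this inequality shows the $B$-dependence in the bound is not even strictly necessary.
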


%TODO: add a proof if I have time indeed...

\begin{remark}
	We remark that our the definition of the initialization term $\Haminit$ in our construction above is slightly different from the standard construction used in~\cite[Theorem~21]{NirkheVY18}. But nontheless, their proof easily goes through.
\end{remark}

\subsection{The Quantum Haar Random Oracle Model}

\newcommand{\View}{\mathsf{View}}

We will consider the \emph{Quantum Haar random oracle model} ($\QHROM$), introduced by~\cite{ChenM22}, in which every agent (prover and verifier) gets access to a Haar random oracle $\caG$ acting on $\lambda$ qubits and its inverse $\caG^\dagger$, where $\lambda$ is the so-called \emph{security parameter}.

We denote by $\Haar(N)$ the set of all $N \times N$ unitaries. By $\caG \getsR \Haar(N)$ we mean that $\caG$ is an $N \times N$ unitary drawn from the Haar measure.

\begin{definition}\label{defi:QHROM}
	An interactive protocol $\Pi$ between the prover $\caP$ and verifier $\caV$ is a proof system for a promise problem $L=(\ayes,\ano)$ with completeness $c(n,\lambda)$ and soundness $s(n,t,\lambda)$ in $\QHROM$, if the following holds:
	\begin{description}
		\item[] \textbf{$\caP$ and $\caV$:} $\caP$ and $\caV$ are both given an input $x \in \ayes \cup \ano$. $\caV$ is polynomial-time and outputs a classical bit indicating acceptance or	rejection of $x$, and $\caP$ is unbounded. Both $\caV$ and $\caP$ are given access to a Haar random quantum oracle $\caG$ and its inverse $\caG^\dagger$ that act on $\lambda$ qubits (that is, $\caG \getsR \Haar(2^\lambda)$). Let $n = |x|$. 
		\item[] \textbf{Completeness:}
		{ If $x\in\ayes$, 
			\[
			 \Ex_{\caG \getsR \Haar(2^\lambda)} \Pr[(\caV^{\caG,\caG^\dagger} \leftrightarrows \caP^{\caG,\caG^\dagger} )(x) = 1] \geq c(n,\lambda),
   % \footnote{We use $\leftrightarrows$ to denote the interactive nature of the protocol between $\caP$ and $\caV$.}
			\]
     where we use $\leftrightarrows$ to denote the interactive nature of the protocol between $\caP$ and $\caV$.
		}

		\item[] \textbf{Soundness:} {If $x\in\ano$, for every $t \in \N$ and any unbounded prover $\caP^*$ making at most $t$ total queries to $\caG$ and $\caG^\dagger$, we have that
			\[
			\Ex_{\caG \getsR \Haar(2^\lambda)} [(\caV^{\caG,\caG^\dagger}  \leftrightarrows (\caP^*)^{\caG,\caG^\dagger} )(x) = 1)] \leq s(n,t,\lambda).
			\]
		}
	
		\item[]  \textbf{Computational zero-knowledge:} {For any $x \in \ayes$ and any polynomial-time $\caV^*$ that receives the inputs $x$ and some state	$\zeta$, there is a polynomial-time quantum channel $\mathcal{S}_{\caV^*}$ that also receives~$x$ and $\zeta$ as inputs such that for all polynomial-time quantum algorithms $D$ that takes a quantum state and outputs a single bit, we have
			$$
			\left|\Pr_{\caG \getsR \Haar(2^\lambda)}\left[D(\View_{\caV^*}((\caV^*)^{\caG,\caG^\dagger} \leftrightarrows \caP^{\caG,\caG^\dagger})(x)) = 1\right] - \Pr_{\caG \getsR \Haar(2^\lambda)}\left[D(\mathcal{S}_{\caV^*}^{\caG,\caG^\dagger}(x,\zeta)) = 1\right] \right| \le \negl(n),
			$$ 
			where $\View_{\caV^*}(\caV^* \leftrightarrows \caP)(x)$ denotes the quantum state of $\caV^*$ at the end of protocol.
		}
	\end{description}
\end{definition}

We remark that in the soundness part, the only restriction on a malicious prover $\caP^*$ is the number of queries it can make to $\caG$ and $\caG^\dagger$. In particular, this means that even if $\caP^*$ has unbounded computational power, as long as it makes a small number of queries to $\caG$ and $\caG^\dagger$, it cannot fool the verifier.

\subsection{Local Proofs and Locally simulable Proofs}

Next, we provide formal definitions of $\LocalQMA$ and $\SimQMA$.

\begin{definition}[$(k,\gamma)$-$\LocalQMA$]\label{defi:localQMA}
	For $k\colon \N \to \N$ and $\gamma \colon \N \to \N$, a promise problem $L = (L_{\yes}, L_{\no})$ is in the complexity class $(k,\gamma)$-$\LocalQMA$ with soundness $s(n)$ and completeness $c(n)$ if there are polynomials $m,p$ such that the following hold:
	
	\begin{description}
		\item[] \textbf{A $k$-local verifier $V_L$:} Let $n = |x|$. There is a verifier $V_L$ that acts as follows:
		\begin{enumerate}
			\item $V_L$ gets access to a $p(n)$-qubit proof $\sigma$ for $L$, it also draws $i \getsR [m(n)]$, $V_L$ then computes in $\poly(n,k,\gamma)$ time a $k$-size subset $S_i \subseteq [p(n)]$ and a $\gamma(n)$-size quantum circuit $C_i$ that is over the Clifford + T gate-set and acts on $k$ qubits. $C_i$ may use $\gamma$ ancilla qubits, with the first ancilla qubit being the output qubit.
			
			\item $V_L$ next applies $C_i$ to the restriction of $\sigma$ on qubits in $S_i$ and measures the first ancilla qubit. $V_L$ accepts if the outcome is $1$ and rejects otherwise.
		\end{enumerate}
		
		\item[] \textbf{Completeness:} If $x \in L_{\yes}$, there is a $p(n)$-qubit state $\sigma$ such that $V_L$ accepts $\sigma$ with probability at least $c(n)$.			
		
		\item[] \textbf{Soundness:} If $x \in L_{\no}$, $V_L$ accepts every $p(n)$-qubit state $\sigma$ with probability at most $s(n)$.
		
		\item[] \textbf{Strongly explicit:} Moreover, we say that $V_L$ is strongly explicit, if $V_L$ computes $S_i$ and $C_i$ in $\poly(\log n,k,\gamma)$ time instead of $\poly(n,k,\gamma)$ time.
	\end{description}
\end{definition}

We will use $(k,\gamma)$-$\LocalQMA_{s,c}$ to denote the class above for notational convenience.

\begin{definition}[$(k,\ell)$-$\SimQMA$]\label{defi:simQMA}
	For $k\colon \N \to \N$ and $\ell \in \N$, a promise problem $L = (L_{\yes}, L_{\no})$ is in the complexity class $(k,\ell)$-$\SimQMA$ with soundness $s(n)$ and completeness $c(n)$ if there are polynomials $m,p$ and a negligible function $\eps \colon \N \to [0,1]$ such that the following hold:
	
	\begin{description}
		\item[] \textbf{An $\ell$-adaptive verifier $V_L$:} Let $n = |x|$. There is a verifier $V_L$ that acts as follows:
		\begin{enumerate}
			\item $V_L$ gets access to a $p(n)$-qubit proof $\sigma$ for $L$, it also draws $\tau_0 \getsR [m(n)]$. Then $V_L$ proceeds in $\ell$ rounds. In the $i$-th round $V_L$ performs some measurements and obtains an outcome $\tau_i$. We use $\tau_{\le i}$ to denote the sequence $\tau_0,\dotsc,\tau_{i}$.
			
			\item At the beginning of the $i$-th round, based on $\tau_{\le i-1}$, $V_L$ computes in $\poly(n,k,\gamma)$ time a subset $S_i \subseteq [p(n)]$ such that $S_i \cap S_j  = \emptyset$ for all $j < i$ and $|S_i| \le k(n)$, together with a POVM $\{ \Pi_{j} \}_{j \in [m(n)]}$ on $|S_i|$ qubits.
			
			\item $V_L$ next measures $\sigma_{S_i}$ with the $\{ \Pi_{j} \}_{j \in [m(n)]}$, and sets $\tau_i = j$ if it sees $\Pi_j$.
			
			\item Finally, $V_L$ decides whether it accepts or not based on the sequence $\tau_{\le \ell}$.
		\end{enumerate}
		
		\item[] \textbf{Simulable completeness:} If $x \in L_{\yes}$, there is a $p(n)$-qubit state $\sigma$ such that:
		\begin{enumerate}
			\item $V_L$ accepts $\sigma$ with probability at least $c(n)$.
			
			\item Let $t \in [\ell]$. For every possible sequence $\tau_{\le t - 1} \in [m(n)]^{t}$, let $S_1,\dotsc,S_t$ be the corresponding query sets of $V_L$ (they are uniquely determined from $\tau_{\le t-1}$) and $S_{\le t} = \bigsqcup_{i \in [t]} S_i$, one can compute the classical description of a density matrix $\sigma'$ in $\poly(n,2^{|S_{\le t}|})$ time, such that $\| \sigma' - \sigma_{S_{\le t}} \|_1 \le \eps(n)$.
		\end{enumerate}
		
		We call the $\sigma$ above the \emph{simulable witness} of $V_L$ given the input $x$.
		
		\item[] \textbf{Soundness:} If $x \in L_{\no}$, $V_L$ accepts every $p(n)$-qubit state $\sigma$ with probability at most $s(n)$.
		
		\item[] \textbf{Strongly explicit:} Moreover, we say that $V_L$ is strongly explicit, if $V_L$ computes $S_i$ and measures $\{ \Pi_j \}_{j \in m(n)}$ in $\poly(\log n,k,\gamma)$ time instead of $\poly(n,k,\gamma)$ time.
		
	\end{description}
\end{definition}

We will also use $(k,\ell)$-$\SimQMA_{s,c}$ to denote the class above for notational convenience. We remark that our definition of $(k,\ell)$-$\SimQMA$ is a generalization of $k$-$\SimQMA$ in~\cite{BroadbentG20}, which corresponds to the non-adaptive case that $\ell = 1$.

\subsection{The Quantum $\PCP$ Conjecture}\label{sec:QPCP}

We first recall the quantum $\PCP$ conjecture~\cite{AharonovALV09,AharonovAV13}.

\begin{conjecture}[$\QPCP$ conjecture]\label{conj:QPCP}
	There are constants $k \in \N$ and $\alpha,\beta$ satisfying $0 < \alpha < \beta \le 1$ such that $(\alpha,\beta)$-$k$-$\LH$ is $\QMA$-complete.
\end{conjecture}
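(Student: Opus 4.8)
\autoref{conj:QPCP} is the \emph{quantum PCP conjecture}, a central open problem; no proof is known, so what follows is a description of what a proof must supply and where the difficulty lies rather than an actual argument. Proving that $(\alpha,\beta)\text{-}k\text{-}\LH$ is $\QMA$-complete means establishing two things: (i) \emph{containment}, $(\alpha,\beta)\text{-}k\text{-}\LH \in \QMA$, and (ii) \emph{$\QMA$-hardness}, a polynomial-time many-one reduction from every $L \in \QMA$ to $(\alpha,\beta)\text{-}k\text{-}\LH$. Half (i) is routine; half (ii) is the entire content of the conjecture, and I expect it to be the main obstacle.

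I would dispatch (i) first, and in full. Given a $k$-local Hamiltonian $\sum_{i \in [m]} H_i$ with $0 \le H_i \le I$, the verifier takes a purported ground state $\spz{\psi}$, samples $i \getsR [m]$, applies the two-outcome POVM $\{I - H_i, H_i\}$, and accepts on the first outcome. If $\lmin(\sum_i H_i) \le \alpha m$ then a ground state is accepted with probability $1 - \tfrac{1}{m}\sum_i \rpz{\psi}H_i\spz{\psi} \ge 1 - \alpha$; if $\lmin(\sum_i H_i) \ge \beta m$ then every $\spz{\psi}$ is accepted with probability at most $1 - \beta$. Since $\alpha < \beta$ are fixed constants this is already a valid $\QMA$ protocol --- a noticeable (here even constant) completeness--soundness gap suffices --- and Marriott--Watrous in-place amplification boosts it to standard parameters if desired. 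This direction holds for every $0 < \alpha < \beta \le 1$, confirming that all the difficulty is in (ii).

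For (ii) the natural plan mimics Dinur's combinatorial proof of the classical PCP theorem. One starts from Kitaev's circuit-to-Hamiltonian construction, which already yields $\QMA$-hardness of $(\alpha_0,\beta_0)\text{-}k_0\text{-}\LH$ with an \emph{inverse-polynomial} relative gap $\beta_0 - \alpha_0 = 1/\poly(n)$, and then iterates a gap-amplification step $O(\log n)$ times, each iteration multiplying the relative gap by a constant factor while keeping the locality $k$, the per-term energy range $[0,I]$, and the number of terms $m$ polynomially bounded. Each iteration would combine an ``expanderization'' sub-step --- re-express the Hamiltonian on a constant-degree expanding interaction graph with comparable spectral gap, using the detectability lemma as the quantum stand-in for the expander-mixing inequality --- with a ``powering'' sub-step amplifying the fraction of violated local terms, for which one needs a quantum analogue of the soundness analysis of the low-degree test or of parallel repetition. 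The main obstacle, and the reason the conjecture is open, is precisely that both tools resist quantization: an adversarial witness on the amplified instance can be globally entangled, so the classical list-decoding and random-walk-on-the-constraint-graph arguments that bound the violated fraction have no known entangled counterpart, and no notion of ``quantum local testability'' that survives composition under the amplification loop is available. The recent unconditional results --- the detectability lemma, asymptotically good quantum LDPC codes, and the NLTS theorem (a necessary consequence of $\QPCP$) --- supply ingredients and close off some feared obstructions, but not the amplification loop itself.

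Finally, I would point out that the present paper does not in fact need \autoref{conj:QPCP} proved: it invokes only the reformulation $\QMA \subseteq (k,\gamma)\text{-}\LocalQMA_{s,1-\negl(n)}$ of \autoref{conj:quantum-pcp-intro} as a hypothesis. The easy direction of the corresponding equivalence is immediate --- the part-(i) verifier already realizes $(\alpha,\beta)\text{-}k\text{-}\LH$ as a $(k,\gamma)$-local verifier with constant completeness $1 - \alpha$ and soundness $1 - \beta$ --- and the strengthening to near-perfect completeness $1 - \negl(n)$, flagged in the paper's footnote, would go through the clock construction of \autoref{defi:history-Hamiltonians} together with the low-energy-to-history-state bound of \autoref{lemma:close-to-history} (a genuine accepting history state incurs only the negligible output-term energy, while every no-instance state stays bounded away from acceptance by a constant). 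None of this, however, resolves \autoref{conj:QPCP}, which remains open.
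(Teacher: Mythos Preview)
Your identification is correct: \autoref{conj:QPCP} is stated in the paper as an open conjecture, with no proof offered---the paper simply records it (with a citation to the literature) as background for the subsequent corollaries. There is therefore no ``paper's own proof'' to compare against, and your decision to explain that the statement is open, to verify the easy containment direction, and to sketch where the hardness direction gets stuck is the appropriate response.

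One small correction to your closing paragraph: the paper does \emph{not} derive the near-perfect-completeness variant $\QPCP_{\negl}$ from $\QPCP$ via the clock construction as you suggest. It simply posits $\QPCP_{\negl}$ as a separate, slightly stronger conjecture and offers two plausibility arguments (that known inverse-polynomial-gap $\QMA$-hardness results already have negligible $\alpha$, and that classical PCPs achieve $\alpha = 0$). Your proposed route through \autoref{defi:history-Hamiltonians} and \autoref{lemma:close-to-history} would not obviously work here, since applying the clock construction to a constant-gap instance destroys the constant gap (this is exactly the difficulty the paper's \autoref{theo:LocalQMA-in-SimQMA} is designed to circumvent for the simulability question, not for completeness amplification). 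But this is peripheral commentary; on the main point---that the conjecture is open and the paper does not prove it---you are right.
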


In particular, the following corollary is immediate from the definition of $(\alpha,\beta)$-$k$-$\LH$.

\begin{corollary}\label{cor:QPCP-and-LocalQMA}
	If $\QPCP$ holds, then there are constants $k,\gamma \in \N$ and $c,s \in [0,1]$ satisfying that $s < c$, such that
	\[
	\QMA \subseteq (k,\gamma)\text{-}\LocalQMA_{s,c}.
	\]
\end{corollary}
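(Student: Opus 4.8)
The plan is to derive the inclusion directly from the $\QMA$-completeness of $(\alpha,\beta)\text{-}k\text{-}\LH$ together with the elementary observation that, for fixed constants $s<c$, the class $(k,\gamma)\text{-}\LocalQMA_{s,c}$ is closed under polynomial-time many-one reductions: if $L$ reduces to $L'$ via a polynomial-time map $f$ and $L'$ has a $(k,\gamma)$-local verifier, then the verifier for $L$ on input $x$ first computes $f(x)$ in $\poly(|x|)$ time and then runs the verifier for $L'$ on $f(x)$, which preserves the proof length, the query size $k$, the circuit size $\gamma$, the completeness, and the soundness (all as polynomials/constants in $|x|$). Hence it suffices to exhibit, for suitable constants $\gamma,s,c$ with $s<c$, a $(k,\gamma)$-local verifier for $(\alpha,\beta)\text{-}k\text{-}\LH$ itself.

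First I would write down the obvious verifier for a $k$-local Hamiltonian $\{H_i\}_{i\in[m]}$ acting on $n'=\poly(n)$ qubits: the purported proof $\sigma$ is a low-energy state; the verifier samples $i\getsR[m]$, sets $S_i=\supp(H_i)$ (of size $\le k$), and on $\sigma_{S_i}$ performs the two-outcome POVM $\{I-H_i,\,H_i\}$, accepting on the outcome $I-H_i$. Since $0\le H_i\le I$, this measurement has a Naimark dilation using one ancilla qubit: the unitary $U_i$ with $U_i(\spz{\psi}\spz{0}) = (\sqrt{I-H_i}\spz{\psi})\spz{1} + (\sqrt{H_i}\spz{\psi})\spz{0}$ makes the ancilla (playing the role of the output qubit of~\autoref{defi:localQMA}) read $1$ with probability $\tr[(I-H_i)\sigma_{S_i}] = 1-\tr[H_i\sigma]$. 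Summing over $i$, the acceptance probability of $\sigma$ is exactly $1-\tfrac{1}{m}\tr[(\sum_i H_i)\,\sigma]$; this is $\ge 1-\alpha$ for the yes-witness and $\le 1-\beta$ for every $\sigma$ when $x$ is a no-instance.

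The one substantive step is realizing $U_i$ — a unitary on the constant number $k+1$ of qubits, depending on the instance — by a Clifford$+$T circuit $C_i$ of a fixed size $\gamma$ that the verifier can compute from the input. I would form $U_i$ explicitly via square roots of $H_i$ and $I-H_i$ (a $\poly(n)$-time linear-algebra computation, since $k$ is constant) and then apply the Solovay--Kitaev theorem to obtain a Clifford$+$T circuit $C_i$ of size $\gamma = \poly\!\big(2^k,\log(1/\eps)\big)$ that approximates $U_i$ to operator-norm error $\eps$; for constant $k$ and constant $\eps$ this $\gamma$ is a constant and $C_i$ is computable in $\poly(n)$ time. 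A routine estimate shows that replacing $U_i$ by an $\eps$-close circuit perturbs each branch probability by $O(\eps)$, so the resulting verifier has completeness $c = 1-\alpha-O(\eps)$ and soundness $s = 1-\beta+O(\eps)$; choosing $\eps$ small enough that the combined $O(\eps)$ slack is below $\beta-\alpha$ gives constants with $s<c$. Combined with the closure property of the first paragraph, this yields $\QMA\subseteq(k,\gamma)\text{-}\LocalQMA_{s,c}$. The only place requiring care is this gate-approximation bookkeeping; as the statement advertises, the rest is immediate from the definitions.
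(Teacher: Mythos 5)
Your proposal is correct and matches what the paper intends: the paper states this corollary as immediate from the definition, the underlying argument being exactly the natural local verifier for $(\alpha,\beta)$-$k$-$\LH$ (sample a term $H_i$, measure $\{I-H_i,H_i\}$ on its support) composed with the $\QMA$-completeness reduction. Your additional bookkeeping --- Naimark dilation of the two-outcome measurement and a Solovay--Kitaev approximation by a constant-size Clifford$+$T circuit, absorbing the $O(\eps)$ loss into the constant gap $s<c$ --- is just the routine detail the paper leaves implicit, and it is handled correctly.
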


We will also consider the following slightly stronger version of $\QPCP$.

\begin{conjecture}[$\QPCP_{\negl}$ conjecture]
	There are constants $k \in \N$ and $\beta \in (0,1)$ and a negligible function $\alpha$ such that $(\alpha,\beta)$-$k$-$\LH$ is $\QMA$-complete.
\end{conjecture}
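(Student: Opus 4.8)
Since $\QPCP_{\negl}$ is stated as a conjecture, a proof of it would constitute a quantum PCP theorem with near-perfect completeness, which is open; the plan below is the natural attack, together with the point where it stalls. The strategy has two stages, and only the first is unconditional. \textbf{Stage 1: from $\QMA$ to a near-frustration-free local Hamiltonian with an inverse-polynomial gap.} Given $L \in \QMA$, first amplify its verifier to completeness $1 - 2^{-n}$ and soundness $2^{-n}$ by running it on $\poly(n)$ copies of the witness and taking a majority vote (in the no case, even witnesses entangled across the copies are rejected with probability $\ge 1-2^{-n}$ by a convexity argument). Write the amplified circuit as a product of $T = \poly(n)$ elementary gates and let $H = \Hamprop + \Haminit + \Hamstab + \Hamout$ be the corresponding history Hamiltonian of \autoref{defi:history-Hamiltonians}, augmented with the standard output penalty $\Hamout$; it has $m = \poly(n)$ local terms. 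In the yes case the history state of the verifier on the optimal multi-witness is frustration-free for all but the output term and pays only $\tfrac{1-c}{T+1} \le \tfrac{2^{-n}}{T+1}$ there, so its energy is an exponentially small fraction $\alpha$ of $m$. In the no case, \autoref{lemma:close-to-history} shows any state whose energy on $\Hamprop+\Haminit+\Hamstab$ is $\delta$ is $\poly(T,B)\sqrt{\delta}$-close to a history state, whose output energy is $\ge \tfrac{1-2^{-n}}{T+1}$; reconciling these forces the ground energy to be $\ge 1/\poly(n)$, so the no-case energy is a $1/\poly(n)$ fraction $\beta_1$ of $m$. Hence $L$ reduces to $(\alpha,\beta_1)$-$k_1$-$\LH$ with $\alpha$ negligible and $\beta_1 = 1/\poly(n)$.

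\textbf{Stage 2: gap amplification that preserves near-frustration-freeness.} It remains to push the $1/\poly(n)$ gap to a constant while keeping the yes-case energy negligible. The intended tool is a quantum analogue of Dinur-style gap amplification (expanderization of the interaction graph, powering/dilation, locality reduction back to $O(1)$), iterated $O(\log n)$ times. Such a procedure should be \emph{approximation preserving}: it maps frustration-free instances to frustration-free instances and, more generally, multiplies the yes-case energy fraction by at most a constant per round --- hence by at most $\poly(n)$ overall, which keeps $\alpha$ negligible --- while turning the $1/\poly(n)$ no-case fraction into a constant $\beta$. Composing Stage 1 with this procedure yields $(\alpha,\beta)$-$k$-$\LH$ with $\alpha$ negligible, $\beta$ and $k = O(1)$ constant, $\QMA$-hardness inherited from Stage 1, and $\QMA$-membership immediate for a constant-gap $O(1)$-local Hamiltonian; that is exactly $\QPCP_{\negl}$.

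\textbf{The main obstacle.} Stage 2 is precisely the (still open) quantum PCP conjecture, and no such amplification is known. Moreover, \autoref{conj:QPCP} as stated only gives $\QMA$-completeness of one fixed-parameter $\LH$ problem, not a gap-amplifying \emph{reduction}, so it cannot be invoked as a black box here: reducing the Stage-1 instance to the $\QMA$-complete $(\alpha_0,\beta_0)$-$k_0$-$\LH$ would map yes-instances to energy $\alpha_0 m'$, a \emph{constant} fraction, destroying the negligible completeness. The genuinely new requirement --- that amplification preserve near-frustration-freeness --- is tied to whether $\mathsf{QMA} = \mathsf{QMA}_1$, which is itself open. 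Thus a realistic intermediate goal is to derive $\QPCP_{\negl}$ from the quantum PCP conjecture in its \emph{procedural} (gap-amplification) form together with a completeness-preservation property for that procedure, mirroring the fact that the classical PCP theorem comes with perfect completeness for free.
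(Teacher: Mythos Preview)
The statement is a \emph{conjecture}, and the paper does not prove it. Immediately after stating it, the paper offers only a two-line plausibility argument: (1) the Kitaev--Kempe--Regev results already give $\QMA$-completeness of $(\alpha,\Theta(n^{-3}))$-$O(1)$-$\LH$ with $\alpha = n^{-\omega(1)}$, and (2) a hypothetical proof of $\QPCP$ via gap amplification would likely keep $\alpha$ negligible, by analogy with classical $\PCP$ where one can take $\alpha = 0$. There is no proof in the paper to compare your proposal against.

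Your write-up is appropriate for this situation: you correctly flag that this is open, and your two-stage outline is exactly the paper's own plausibility reasoning spelled out in more detail. Stage~1 is the standard amplified-verifier history-Hamiltonian construction and is indeed unconditional; Stage~2 is precisely the missing quantum gap amplification, and you are right that one would additionally need it to preserve near-frustration-freeness (the paper's parenthetical ``is likely to keep $\alpha$ negligible'' is the informal version of this). Your observation that black-boxing \autoref{conj:QPCP} as a $\QMA$-completeness statement does not suffice --- because the reduction could land you at a constant $\alpha_0$ --- is a valid and useful point that the paper does not make explicit. The connection you draw to $\QMA$ versus $\QMA_1$ is also apt. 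In short, there is no gap in your proposal beyond the one you already name: Stage~2 is the quantum PCP conjecture itself.
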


That is, here we require that in the yes case of $k$-$\LH$, the minimum energy is negligible. We find $\QPCP_{\negl}$ plausible since (1) it was proved that $(\alpha,\Theta(n^{-3}))$-$O(1)$-$\LH$ is $\QMA$-complete~\cite{KSV02,KempeR03,KempeKR06} for some $\alpha = n^{-\omega(1)}$, and a (potential) proof for $\QPCP$ via gap amplification is likely to keep $\alpha$ negligible; (2) it is consistent with the situation of classical $\PCP$, where one can even set $\alpha = 0$.

Similarly, we have the following corollary.

\begin{corollary}\label{cor:QPCP-negl-and-LocalQMA}
	If $\QPCP_\negl$ holds, then there are constants $k,\gamma \in \N$ and $s \in (0,1)$, and a function $c \colon \N \to [0,1]$ satisfying that $1-c$ is negligible, such that
	\[
	\QMA \subseteq (k,\gamma)\text{-}\LocalQMA_{s,c}.
	\]
\end{corollary}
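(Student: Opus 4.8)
The plan is to run the ``natural'' local Hamiltonian verifier and read off the parameters. Fix $L \in \QMA$. By $\QPCP_\negl$ there are constants $k \in \N$ and $\beta \in (0,1)$, a negligible function $\alpha$, and a polynomial-time reduction taking an instance $x$ (put $n = |x|$) to a $k$-local Hamiltonian $\{H_i\}_{i \in [m]}$ on $p = p(n)$ qubits with $0 \le H_i \le I$, such that $\lmin(\sum_{i} H_i) \le \alpha(n)\, m$ if $x \in L_\yes$ and $\lmin(\sum_{i} H_i) \ge \beta m$ if $x \in L_\no$. The verifier $V_L$, on input $x$, computes this Hamiltonian, draws $i \getsR [m]$, queries the $k$-qubit block $S_i$ on which $H_i$ acts, applies to $\sigma_{S_i}$ a Clifford$+$T circuit implementing the two-outcome measurement $\{\, I - H_i,\ H_i \,\}$ with its first ancilla set to flag the outcome $I - H_i$, measures that ancilla, and accepts iff it reads $1$. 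By construction $V_L$ accepts a proof $\sigma$ with probability exactly $\Ex_{i \getsR [m]} \Tr[(I - H_i)\, \sigma_{S_i}] = 1 - \tfrac{1}{m}\Tr[(\sum_i H_i)\, \sigma]$.

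The completeness and soundness bounds then follow by inspection. If $x \in L_\yes$, take $\sigma$ to be a ground state of $\sum_i H_i$: then $V_L$ accepts with probability $1 - \tfrac{1}{m} \lmin(\sum_i H_i) \ge 1 - \alpha(n) =: c(n)$, so $1 - c(n) \le \negl(n)$. If $x \in L_\no$, then $\Tr[(\sum_i H_i)\sigma] \ge \lmin(\sum_i H_i) \ge \beta m$ for every $\sigma$, so $V_L$ accepts with probability at most $1 - \beta =: s \in (0,1)$; since $1 - c(n)$ is negligible, $s < c(n)$ for all large $n$, which is what the statement requires. As $k = O(1)$ qubits are queried and $C_i$ has constant size, this gives $L \in (k,\gamma)\text{-}\LocalQMA_{s,c}$ for a suitable constant $\gamma$.

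The step I expect to be the real work — everything else being a direct unwinding of \autoref{defi:localQMA} — is implementing the measurement $\{\, I - H_i,\ H_i \,\}$ \emph{exactly} by a circuit of \emph{constant} size over Clifford$+$T. For a genuinely arbitrary $0 \le H_i \le I$ this is impossible: a constant-size Clifford$+$T circuit on $O(1)$ qubits realizes only finitely many operators, so the best one could achieve is a fixed constant precision $\delta$, which inflates $1 - c(n)$ by $\delta$ and destroys negligibility. (This is exactly why the analogous point is harmless for \autoref{cor:QPCP-and-LocalQMA}, where a constant completeness loss is affordable, but not here.) The fix is to use $\QPCP_\negl$ in the form — natural, and without loss of generality — where the $H_i$ are drawn from a fixed finite family $\caH_0$ of Hermitian operators, each of the shape $H_i = W^\dagger \Pi W$ for a computational-basis projector $\Pi$ and a constant-size Clifford$+$T unitary $W$: for instance the clock-validity, input-check and output-check projectors together with the propagation terms $\tfrac{1}{2}(\proj{t,t} + \proj{t+1,t+1} - \spz{t+1,t+1}\rpz{t,t}\otimes U_t - \mathrm{h.c.})$ with $U_t$ a Clifford$+$T generator, produced by Kitaev's circuit-to-Hamiltonian map applied to a Clifford$+$T verifier for $L$; gap amplification can be carried out on such a Hamiltonian while keeping both the finiteness of the term family and the negligibility of $\alpha$. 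For each of the finitely many operators in $\caH_0$ the measurement $\{\, I - H_i,\ H_i \,\}$ has an exact constant-size circuit (apply $W$, a constant-size Boolean gadget into the output ancilla, then $W^\dagger$), so with $\{H_i\}$ in this form the argument above goes through unchanged, with $k$ replaced by the corresponding constant.
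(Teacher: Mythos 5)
Your main construction is exactly the argument the paper intends: \autoref{cor:QPCP-negl-and-LocalQMA} is stated without proof (like \autoref{cor:QPCP-and-LocalQMA}, it is treated as immediate from the definition of $(\alpha,\beta)$-$k$-$\LH$), and the intended verifier is precisely your ``sample $i \getsR [m]$ and measure $\{I-H_i,\,H_i\}$'' verifier, whose acceptance probability $1-\tfrac{1}{m}\Tr[(\sum_i H_i)\sigma]$ gives completeness $1-\alpha(n)$ and soundness $1-\beta$. Up to that point you and the paper coincide.

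The divergence is your last paragraph, and there the situation is mixed. You are right to flag that \autoref{defi:localQMA} demands an \emph{exact} $\gamma$-size Clifford$+$T circuit with $\gamma$ a constant, and that approximating an arbitrary $0 \le H_i \le I$ at constant circuit size incurs a constant error $\delta$, which is harmless for \autoref{cor:QPCP-and-LocalQMA} but would destroy the negligibility of $1-c$ here; the paper glosses over this point. However, your repair does not prove the corollary as stated: the claim that one may ``without loss of generality'' take $\QPCP_{\negl}$ in a form where the $H_i$ range over a fixed finite family $W^\dagger \Pi W$ is not a consequence of $\QPCP_{\negl}$ as formulated (which asserts only that $(\alpha,\beta)$-$k$-$\LH$ with arbitrary $0 \le H_i \le I$ is $\QMA$-complete). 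Your justification --- that a Kitaev-style circuit-to-Hamiltonian map over Clifford$+$T produces such terms and that ``gap amplification can be carried out while keeping the finiteness of the term family and the negligibility of $\alpha$'' --- is a plausibility claim about how a hypothetical proof of quantum PCP would go, on par with the paper's own informal motivation for $\QPCP_{\negl}$, not an argument; note also that the obvious detour (reduce the gapped $\LH$ instance back to $\QMA$ and re-apply the circuit-to-Hamiltonian construction to get nice terms) collapses the constant promise gap to $1/\poly(n)$, so it cannot substitute for the WLOG. So as written you have established the corollary only under a strengthened hypothesis. The accurate conclusion is either to adopt explicitly the convention (implicit in the paper, and satisfied by the constructions it cites) that the local terms are exactly measurable by constant-size Clifford$+$T circuits, or to state that with the definitions read literally the corollary requires this convention --- but in either case present it as an added assumption rather than as something that holds without loss of generality.
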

	\section{Making Every Local Verifiers Simulable}\label{sec:locl-sim}

\newcommand{\redEncH}{\mathsf{Red}}
\newcommand{\T}{\mathsf{T}}
\newcommand{\QSAT}{\mathsf{QSAT}}
\newcommand{\val}{\mathsf{val}}
\newcommand{\caI}{\mathcal{I}}

%We say a verifier $V$ is a $k$-local verifier with verification complexity $s$, if given $n$ qubits as input, it works by first draw $i \getsR [m]$ and then simulate a quantum circuit $C_i$ on a subset $S_i \subseteq [n]$ with size at most $k$. We will assume $C_i$ is over the universal Clifford + T gate-set. $C_i$ may also use additional auxiliary qubits.

In this section, we prove~\autoref{theo:LocalQMA-in-SimQMA}. We first define the following problem $(k,\gamma)$-$\QSAT$.

%\lnote{TODO: probably should discuss this in the introduction}

%\lnote{The above needs a bit more work to prove}

\begin{definition}[$(k,\gamma)$-$\QSAT$]
	Given integers $n,m \in \N$, $m$ subsets $S_1,\dotsc,S_m \subseteq [n]$, and $m$ $\gamma$-size quantum circuits $C_1,\dotsc,C_{m}$ over the Clifford + T gate-set such that for every $i \in [m]$, $C_i$ acts on $|S_i|$ qubits and uses at most $\gamma$ ancilla qubits. We write an instance $\caI$ as $\caI = (n,m,\{S_i\}_{i \in [m]},\{C_i\}_{i \in [m]})$.
	
	For an $n$-qubit quantum state $\sigma$, we define
	\[
	\val_\caI(\sigma) \coloneqq \Ex_{i \getsR [m]}\Big[ \Pr[C_i(\sigma_{S_i}) = 1] \Big],
	\]
	where $\Pr[C_i(\sigma_{S_i}) = 1]$ denotes the probability that after applying $C_i$ to the restriction of $\sigma$ to $S_i$, measuring the first ancilla qubit, and seeing outcome $1$. We also define
	\[
	\val(\caI) \coloneqq \sup_{\text{$\sigma$ is an $n$-qubit state}} \val_\caI(\sigma).
	\]
\end{definition}

\newcommand{\balpha}{\bar{\alpha}}
\newcommand{\bbeta}{\bar{\beta}}

\begin{comment}
\begin{theorem}\label{theo:history-sim}
There are universal polynomials $p_1,p_2$ such that, for every $k,\gamma \in \N$ and $\alpha,\beta \in (0,1)$, letting
\[
\bar{\alpha} = 1 - \beta,~~~\bar{\beta} =  \frac{(1-\alpha)^2}{p_1(\gamma)^2}.
\]
the following holds:
\begin{itemize}
\item[(\textbf{Reduction})] There is a reduction $\redEncH$ from $(k,\gamma)$-$\QSAT_{\alpha,\beta}$ to $(k,\gamma)$-$\History$-$\EncprojLH_{s,c}$, such that for a $(k,\gamma)$-$\QSAT_{\alpha,\beta}$ instance $\caI$:
\begin{itemize}
\item $\caI \in (k,\gamma)$-$\QSAT_{\alpha,\beta}^\yes$ implies $\redEncH(\caI) \in (k,\gamma)$-$\History$-$\EncprojLH_{\balpha,\bbeta}^{\yes}$.
\item $\caI \in (k,\gamma)$-$\QSAT_{\alpha,\beta}^\no$ implies $\redEncH(\caI) \in (k,\gamma)$-$\History$-$\EncprojLH_{\balpha,\bbeta}^{\no}$.
\end{itemize}
\item[(\textbf{Simulation})] $(k,\gamma)$-$\History$-$\EncprojLH_{s,c} \in (O(\log m),2)$-$\SimQMA_{s',c'}$, for $s' = s / p_2(\gamma)$ and $t' = t / p_2(\gamma)$.\lnote{this has to be fixed}
\end{itemize}
\end{theorem}
\end{comment}

%\lnote{TODO: prove \autoref{theo:LocalQMA-in-SimQMA} given \autoref{theo:history-sim}}

\subsection{QECCs and Locally simulable QECCs}

For $n > k$, an $[[N,K]]$ quantum error correcting code (QECC) is a mapping
from a  $K$-qubit state~$\spz{\psi}$ into an $N$-qubit
state $\Enc(\spz{\psi})$.  The distance of an $[[N,K]]$ QECC is $D$
if for an arbitrary quantum operation $\mathcal{E}$ acting on
$(D-1)/2$ qubits, the original state $\spz{\psi}$ can be recovered from
$\mathcal{E}(Enc(\spz{\psi}))$, and in this case we call it an $[[N,K,D]]$~QECC. For an $[[N,1,D]]$-QECC and its encoding $\Enc$, we overload notation and for a $k$-qubit system $\phi$ we write  $\Enc(\phi) := \Enc^{\otimes k}(\phi)$.

We begin by recalling the definition of \emph{locally simulable codes} introduced in~\cite{GriloSY19}, which is also used in~\cite{BroadbentG20}. The following definition is from~\cite{BroadbentG20}.

\newcommand{\Sim}{\mathsf{Sim}}
\newcommand{\Pg}{\mathsf{P}}

\begin{definition}\label{D:simulable}
	Let $\caC$ be an $[[N,1,D]]$-QECC  that allows universal quantum computation on the encoded data by applying logical gates from a universal gate set $\mathcal{G}$
	with transversal gates (and possibly with the help of magic states).
    Let $G \in \mathcal{G}$ be a logical gate acting on $k_G$
	qubits, $U^{(G)}_1, \ldots ,U^{(G)}_{\ell}$ be the
	transversal circuit that is applied to the physical qubits of the encoding of a $k_G$-qubit state which  logically applies $G$ to the data through $\ell_G = \poly(N)$ physical gates with the aid of an $m_G$-qubit magic state~$\tau_G$.
	
    We say that $\caC$ is \emph{$s$-simulable} if there exists a deterministic algorithm $\Sim_{\caC}$ whose input is  $G \in \mathcal{G}$,  a value $0 \leq t \leq \ell_G$, and a subset~$S \subseteq [N(m_G + k_G)]$ with $|S| \leq s$. $\Sim_{\caC}(G,t,S)$ then runs in time $\poly(2^N)$, and outputs the classical description of an $|S|$-qubit density matrix $\rho(G,t,S)$ such that for every $k_G$-qubit state $\sigma$
	\begin{equation*}
	\rho(G,t,S) = \left((U^{(G)}_t \cdots U^{(G)}_1) \Enc(\sigma\otimes \tau_G)  (U^{(G)}_{t}
	\cdots U^{(G)}_1)^\dagger\right)_S.
	\end{equation*}
\end{definition}
\begin{remark}\label{R:decomposition-gates}
	In this work,  we take that $\mathcal{G} = \{\CNOT,\Pg,\Had,\T\}$, and use magic state $\spz{\T} \coloneqq \T \spz{+}$ to compute $\T$-gates. Notice that the  gates $U^{(G)}_1,\ldots ,U^{(G)}_{\ell_G}$ are publicly known, and therefore they do not need to be a parameter for $\Sim_{\caC}$.
\end{remark}

The following lemma from~\cite{GriloSY19} says that the folded Steane codes are locally simulable. We refer the readers to~\cite{BroadbentG20} for a simplified proof.

\begin{lemma}[\cite{GriloSY19}]\label{lemma:simulable}
	For every $k > \log(s+3)$, the $k$-fold concatenated Steane code is $s$-simulable.
\end{lemma}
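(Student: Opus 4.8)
The plan is to prove the lemma by induction on the concatenation level $k$, with the base case being the ordinary Steane code $\caC_1 = [[7,1,3]]$ and the inductive step exploiting the recursive ``transversal-of-transversal'' structure of the $k$-fold concatenated Steane code $\caC_k = [[7^k,1,3^k]]$.

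\textbf{The core coding fact (base case).} First I would isolate the elementary statement that drives everything: for any $[[N,1,D]]$ stabiliser code and any subset $T$ of physical qubits with $|T|\le D-1$, the reduced density matrix $\bigl(\Enc(\sigma)\bigr)_T$ is \emph{independent of} the logical state $\sigma$ and is computable from the stabiliser description in time $\poly(2^N)$. Indeed, writing $\bigl(\Enc(\sigma)\bigr)_T$ in the Pauli basis on $T$, each coefficient is proportional to $\Tr\bigl[E\,\Enc(\sigma)\bigr]$ for a Pauli $E$ of weight at most $D-1$, and such an $E$ acts on the code space either as $0$ or as $\pm I$ (it is either in the stabiliser group or anticommutes with a stabiliser generator, and cannot be a non-trivial logical operator since those have weight at least $D$), so $\Tr\bigl[E\,\Enc(\sigma)\bigr]\in\{0,\pm1\}$ with a value that does not depend on $\sigma$. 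For $\caC_1$ this already gives simulability for $|S|\le 2$, and iterating it level by level lets one evaluate the reduced density matrices of $\caC_k$-codewords recursively. I would also record that reduced matrices touching the magic state $\spz{\T}$ are state-independent for the trivial reason that $\spz{\T}$ is a fixed state, so blocks that hold magic states never obstruct simulability.

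\textbf{The inductive step.} View a $\caC_k$ codeblock as a level-$1$ Steane encoding of seven ``virtual qubits'', each of which is itself a $\caC_{k-1}$ codeblock. By construction, the transversal implementation of a logical gate $G\in\{\CNOT,\Had,\Pg,\T\}$ on $\caC_k$ is obtained from the Steane-level implementation of $G$ by replacing every physical gate with the recursive $\caC_{k-1}$-implementation of the corresponding logical gate; this preserves the qubit-wise pattern, so along any prefix $V=U^{(G)}_t\cdots U^{(G)}_1$ the $i$-th virtual sub-block of one codeblock interacts only with the $i$-th virtual sub-blocks of the other codeblocks touched by $G$. (For $\T$ one takes the standard magic-state injection gadget, with its measurements deferred so that $U^{(\T)}_1,\dotsc,U^{(\T)}_{\ell_{\T}}$ are genuine unitaries and the magic register is traced out only at the end.) Fix $S$ with $|S|\le s$; for each position $i\in[7]$ let $S_i$ collect the elements of $S$ lying in the $i$-th virtual sub-blocks, and call $i$ \emph{heavy} if $|S_i|$ exceeds the inductive threshold $f(k-1)$ for $\caC_{k-1}$. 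For a non-heavy position, the inductive hypothesis produces the state on $S_i$ as a $\poly(2^N)$-computable quantity that is independent of the virtual-qubit value(s) at that position; for a heavy position it produces a $\poly(2^N)$-computable channel applied to those virtual-qubit value(s). Since $\sum_i|S_i|\le s$ and $f(k-1)\ge 2^{k-1}-3$, there are at most two heavy positions whenever $s\le 3\bigl(f(k-1)+1\bigr)-1$, and then the joint virtual state restricted to the (at most two per codeblock) heavy virtual qubits is itself state-independent and $\poly(2^N)$-computable by the core fact applied to $\caC_1$. Composing these ingredients gives $\Sim_{\caC_k}(G,t,S)$, and unwinding the recursion $f(k)=3f(k-1)+2$ with $f(1)=2$ yields $f(k)=3^k-1\ge 2^k-3$, which is exactly what the lemma requires.

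\textbf{The main obstacle.} I expect the delicate part to be the bookkeeping at \emph{intermediate} times $t$, and especially inside the $\T$-gate gadget: one must check that every prefix of the full physical circuit still respects the ``position $i$ talks only to position $i$'' structure after the deferred-measurement controlled operations are inserted, and that the Pauli byproducts they introduce are position-wise and act only on the fixed magic data, so that they cannot leak information about $\sigma$. A secondary point is keeping the running time uniform: each heavy sub-block supplies a channel on $O(1)$ input qubits and at most $s$ output qubits that must be extracted from a circuit on $7^{k-1}$ qubits, which is affordable since $\poly\bigl(2^{7^{k-1}}\bigr)\le\poly(2^N)$, but the recursion must be organised so the total stays $\poly(2^N)$. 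For the bound actually claimed one may even use the cruder recursion that tolerates only a single heavy position, which makes the $\T$-gate analysis correspondingly less demanding.
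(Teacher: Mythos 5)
This lemma is not proved in the paper at all: it is imported from~\cite{GriloSY19}, and the text points to~\cite{BroadbentG20} for a simplified proof, so there is no in-paper argument to compare against. Judged on its own merits, your proposal has a genuine gap, and it sits exactly at the place you yourself flag as "the delicate part". Note first that, since $\Sim_\caC$ is allowed $\poly(2^N)$ time, computability is never the issue (one can brute-force simulate the whole $\poly(N)$-qubit circuit on the encoding of any \emph{fixed} input state); the entire content of \autoref{D:simulable} is the claim that the reduced state on $S$ is \emph{independent of $\sigma$ at every intermediate time $t$} of every gate gadget, including the $\T$ gadget. Your distance-based "core fact" and the position-wise induction do establish this for codewords and for prefixes of genuinely transversal (Clifford) circuits, but they do not cover the $\T$ gate: in the deferred-measurement injection gadget the correction is a controlled logical Clifford controlled on the logical measurement outcome of the magic block, i.e.\ on a parity/decoding of many physical qubits of that block. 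Those controlled gates are not transversal and do not respect the "position $i$ talks only to position $i$" structure on which your inductive step rests, and state-independence of small reduced states \emph{during} these gates is precisely what must be proved; you defer it rather than resolve it. This is not a cosmetic omission: handling the $\T$ gadget is the reason the cited works use concatenation at all and the reason the threshold is of the form $2^k-3$ rather than the distance bound.

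Relatedly, your claimed recursion $f(k)=3f(k-1)+2$ with base $f(1)=2$ would give $3^k-1$-simulability, strictly stronger than the lemma, and your base case asserts that the bare Steane code is $2$-simulable — whereas the lemma as stated is vacuous for $k=1$ and $k=2$ (it requires $k>\log(s+3)$). That discrepancy is a strong signal that the naive distance argument does not survive the gadget's intermediate times and that the per-level gain is a factor of $2$, not $3$; your stronger claim would need an explicit verification of state-independence inside the $\T$ gadget that you have not supplied. A smaller imprecision: for "heavy" positions the inductive hypothesis ($s$-simulability of $\caC_{k-1}$) does not hand you "a computable channel applied to the virtual-qubit values"; simulability only asserts state-independence of small marginals, so the heavy-position step needs to be reformulated (e.g.\ via the encoding isometry acting on the state-independent joint virtual state), again only after the gadget-time structure has been settled. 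As it stands, the proposal reproduces the easy half of the argument and leaves the half that determines the quantitative statement unproved.
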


In the rest of the section, we will use $\caC$ to denote the $5$-fold concatenated Steane code and $\Enc$ to be its encoding map. We also set $N = 7^5$ and $D = 3^5$, and note that $\caC$ is an $[[N,1,D]]$-QECC. By~\autoref{lemma:simulable}, it follows that $\caC$ is $28$-simulable.

We note that the encoding algorithm $\Enc$ for $\caC$ can be implemented by the following unitary $U^{\Enc}$: For an $N$-qubit register $\state$. Given $\spz{\psi}_{\state} = \spz{\phi}_{\state(1)} \otimes \spz{0^{N - 1}}_{\state[2,N]}$, we have
\[
U^{\Enc} \spz{\psi}_{\state} = \Enc(\spz{\phi})_{\state}.
\]

The decoding algorithm for $\caC$ can similarly be implemented by $ U^{\Dec} = (U^{\Enc})^{\dagger}$. Furthermore, $U^{\Enc}$ can be written as the product of $\ell_{\Enc} = \poly(N)$ two-local gates as follows
\[
U^{\Enc} = U^{\Enc}_{\ell_{\Enc}} \cdots U^{\Enc}_{1},
\]
similarly, we write
\[
U^{\Dec} = U^{\Dec}_{\ell_{\Dec}} \cdots U^{\Dec}_{1}.
\]

Given a state $\spz{\psi}_{\state}$ on $N$ qubits, let the $\syndrome$ be an $(N-1)$-qubit register. We also define a unitary $U^{\Chk}$ acting on $\Hstate \otimes \Hsyndrome$ such that $U^{\Chk}$ measures the syndromes of the code block in $\state$ to $\syndrome$. We write
\[
U^{\Chk} = U^{\Chk}_{\ell_{\Chk}} \cdots U^{\Chk}_{1},
\]
where all $U^{\Chk}_i$'s are $2$-local.

Given a Clifford gate $G$ acting on $k_G$ qubits, we use $U_1^{(G)},\dotsc,U_{\ell_G}^{(G)}$ to denote the fault-tolerant version of $G$ acting on $k_G \cdot N$ physical qubits, where $\ell_G = \poly(N)$.

\paragraph*{Notation.} For simplicity, for an $n \cdot N$-qubit register $\reg$ and $i \in [n]$, we use $\reg\{i\}$ to denote $\reg[(i-1)\cdot N+1,i \cdot N]$, which is the set of qubits corresponding to the $i$-th block. We also use $\reg\{\ell,r\}$ to denote $\reg[(\ell-1) \cdot N + 1, r \cdot N]$, which is the set of qubits from the $\ell$-th block to the $r$-th block. We say that a subset $S$ has at most $b$ intersections per block with $\reg$, if $|S \cap \reg\{i\}| \le b$ for every $i \in [n]$.

We will need the following additional property of locally simulable codes.

\begin{lemma}\label{lemma:n-copy-sim}
	Let $\caC$ be $s$-simulable and $G \in \caG$. Let $n \in \N$, $\reg$ be an $n \cdot (k_G \cdot N)$-qubit register, and $\reg_{\sf m}$ be an $n \cdot (m_G \cdot N)$-qubit register. For every $j \in [\ell_G]$, let
	\[
	U_j^{(G^{\otimes n})} = \prod_{i \in [n]} (U_j^{(G)})_{\reg\{ (i-1)\cdot k_G+1, i \cdot k_G \}, \reg_{\sf m}\{ (i-1)\cdot m_G+1, i \cdot m_G \}}.
	\]
	
	There is an algorithm $\Sim_{\caC^{\otimes n}}$ that receives as input $G \in \mathcal{G}$, an integer $0 \leq t \leq \ell_G$ and a subset~$S \subseteq (\reg \cup \reg_{\sf m})$ that has $s$ intersections per block with $\reg \cup \reg_{\sf m}$, runs in time $\poly(2^{N},n)$, and outputs the classical description of an $|S|$-qubit density matrix $\rho(G^{\otimes n},t,S)$ such that for every $n\cdot k_G$-qubit state $\sigma$
	\begin{equation*}
	\rho(G^{\otimes n},t,S) = \left((U^{(G^{\otimes n})}_t \cdots U^{(G^{\otimes n})}_1) \Enc(\sigma\otimes \tau_G^{\otimes n \cdot m_G})  (U^{(G^{\otimes n})}_{t}
	\cdots U^{(G^{\otimes n})}_1)^\dagger\right)_S.
	\end{equation*}
	
	In above, $\Enc(\sigma\otimes \tau_G^{\otimes n \cdot m_G})$ puts the encoded $\sigma$ in the register $\reg$ and the encoded $\tau_G^{\otimes n \cdot m_G}$ in the register $\reg_{\sf m}$.
\end{lemma}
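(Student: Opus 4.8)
The plan is to assemble $\Sim_{\caC^{\otimes n}}$ from $n$ (morally independent) invocations of the single-copy simulator, after first strengthening that simulator so that it tolerates query sets spread thinly across many code blocks. Two ingredients are needed: a block-local refinement of the guarantee behind \autoref{lemma:simulable}, and a replacement-channel argument that decouples the (possibly entangled) input $\sigma$ copy by copy.

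First I would decompose the query set along the $n$ copies. By construction $U^{(G^{\otimes n})}_{[1,t]} = \bigotimes_{i\in[n]} \big(U^{(G)}_{[1,t]}\big)_i$, where $\big(U^{(G)}_{[1,t]}\big)_i$ acts on the $k_G$ data blocks $\reg\{(i-1)k_G+1,ik_G\}$ and the $m_G$ magic blocks $\reg_{\sf m}\{(i-1)m_G+1,im_G\}$ of the $i$-th copy, and $\Enc$ acts block-wise, so copy $i$ occupies a register disjoint from all the others. Writing $S^{(i)}$ for the part of $S$ lying in copy $i$'s register (re-indexed to the single-copy labeling), we get $S=\bigsqcup_{i\in[n]}S^{(i)}$, and the reduced state of the evolved $n$-copy state on $S$ equals $\big(\bigotimes_{i\in[n]}\Phi_i\big)(\sigma)$, where $\Phi_i(\omega) := \Tr_{\overline{S^{(i)}}}\!\big[\big(U^{(G)}_{[1,t]}\big)\Enc(\omega\otimes\tau_G)\big(U^{(G)}_{[1,t]}\big)^\dagger\big]$ is a fixed channel on the $k_G$ data qubits of copy $i$. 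Since $S$ has at most $s$ intersections per block and each copy spans $k_G+m_G$ code blocks, $|S^{(i)}|\le s(k_G+m_G)=O(1)$; but this may exceed $s$, so the simulator $\Sim_\caC$ guaranteed by \autoref{lemma:simulable} does not apply to $S^{(i)}$ as a black box.

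The crucial point — and where I expect the real work to lie — is that the correctness of the simulator of \cite{GriloSY19,BroadbentG20} is genuinely \emph{block-local}: its output on a query set $T$ is assembled from, and correct for, each code block touched by $T$ separately, and the only thing it ever uses is that $|T\cap B|\le s$ for each individual length-$N$ code block $B$ (this is precisely the regime in which low-weight reduced states of codewords, and of their images under a prefix of the transversal circuit — magic-state injection included — are independent of the encoded logical data). Re-examining that construction, I would upgrade it to a simulator $\Sim'_{\caC}(G,t,T)$ that is correct for \emph{any} $T$ with at most $s$ intersections per code block, still runs in time $\poly(2^N)$, and in particular certifies that each $\Phi_i$ above is a \emph{replacement channel}: $\Phi_i(\omega)=\tr(\omega)\,\xi_i$ for the fixed state $\xi_i := \Sim'_{\caC}(G,t,S^{(i)})$, independent of $\omega$. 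The main obstacle is verifying this block-locality for the construction of \cite{GriloSY19,BroadbentG20}, and in particular checking that magic-state injection for $\T$-gates does not create cross-block dependence that would break the per-block guarantee.

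The rest is bookkeeping. A tensor product $\bigotimes_{i\in[n]}\Phi_i$ of replacement channels sends \emph{any} input state, entangled or not, to $\bigotimes_{i\in[n]}\xi_i$ (for $\rho=\sum_\alpha A_\alpha^{(1)}\otimes\cdots\otimes A_\alpha^{(n)}$ the output is $\sum_\alpha \prod_i\tr(A_\alpha^{(i)})\cdot\bigotimes_i\xi_i=\tr(\rho)\bigotimes_i\xi_i=\bigotimes_i\xi_i$). Hence the reduced state of the evolved $n$-copy state on $S$ is $\bigotimes_{i\in[n]}\xi_i$ for every $\sigma$, which is exactly the required identity. The algorithm $\Sim_{\caC^{\otimes n}}(G,t,S)$ then partitions $S$ into the $S^{(i)}$'s (at most $n$ of them nonempty), calls $\Sim'_{\caC}(G,t,S^{(i)})$ on each nonempty one to obtain $\xi_i$, and outputs the tuple $(\xi_i)_i$ as a succinct tensor-product description of $\rho(G^{\otimes n},t,S)$; each call costs $\poly(2^N)$ and there are at most $n$ of them, so the running time is $\poly(2^N,n)$ as claimed.
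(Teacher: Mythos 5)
Your proposal follows essentially the same route as the paper's proof: split $S$ into its per-copy pieces $S^{(i)}$, feed each piece (suitably re-indexed) to the single-copy simulator of \autoref{D:simulable}, and output the tensor product of the results; your replacement-channel argument for entangled $\sigma$ is just a cleaner phrasing of the paper's closing remark that both sides are determined by their values on product states since a general $\sigma$ is a linear combination of product states. The one divergence is the step you call the main obstacle: re-opening the \cite{GriloSY19,BroadbentG20} construction to build a block-local simulator $\mathsf{Sim}'_{\caC}$ valid for any set with at most $s$ intersections per code block. The paper does not do this -- its proof simply invokes $\mathsf{Sim}_{\caC}$ on each $S^{(i)}$ -- and you are right that, read literally with the same parameter $s$ in both hypotheses, this is a black-box overreach, since $|S^{(i)}|$ can be as large as $(k_G+m_G)\cdot s > s$; this factor-$(k_G+m_G)$ slack is glossed over in the paper. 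However, it is only a matter of parameters, not of structure: for the gate set $\mathcal{G}=\{\CNOT,\mathsf{P},\Had,\T\}$ each copy spans $k_G+m_G\le 2$ code blocks, and in every use of the lemma the code is the $5$-fold Steane code ($28$-simulable by \autoref{lemma:simulable}) while the query sets have at most $10$ intersections per block, so each $S^{(i)}$ has at most $20$ qubits and $\mathsf{Sim}_{\caC}$ applies as a black box; alternatively one can state the lemma with per-block budget $\lfloor s/(k_G+m_G)\rfloor$, or assume the code is $(k_G+m_G)\cdot s$-simulable. So rather than leaving the correctness of $\mathsf{Sim}'_{\caC}$ (in particular, the behaviour of magic-state injection across blocks) as an unverified assumption, you should close your argument with this parameter-counting observation, after which your write-up is a complete and slightly more careful version of the paper's proof.
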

\begin{proof}
	We simply set 
	\[
	\rho(G^{\otimes n},t,S) = \otimes_{i \in [n]} \rho(G,t,S \cap (\reg\{i\} \cup \reg_{\sf m}\{i\})),
	\]
	where $\rho(G,t,S \cap (\reg\{i\} \cup \reg_{\sf m}\{i\}))$ is the output of $\Sim_{\caC}$ in \autoref{D:simulable}, and we interpret $S \cap (\reg\{i\} \cup \reg_{\sf m}\{i\})$ as a subset of $[N(m_G + k_G)]$ by shifting.
	
	To see the lemma, we first note that it holds when $\sigma = \otimes_{i \in [n]} \sigma_i$, where each $\sigma_i$ is an $k_G$-qubit state. The lemma follows from the fact that a general $n \cdot k_G$-qubit state $\sigma$ can be written as a linear combination of many product states.
\end{proof}

Let the parameter $t = 0$ and the gate type be $\Had$ in~\autoref{lemma:n-copy-sim} so that $m_\Had = 0$ and $k_\Had = 1$, we have the following corollary.

\begin{corollary}\label{cor:n-copy-sim-no-op}
	Let $\caC$ be $s$-simulable, $n \in \N$, and $\reg$ be an $(n \cdot N)$-qubit register, there is an algorithm $\Sim_{\caC^{\otimes n}}$ that receives a subset $S \subseteq \reg$ that has $s$ intersections per block with $\reg$, and outputs the classical description of a density matrix $\rho(G^{\otimes n},S)$ such that
	\[
	\rho(G^{\otimes n},S) = \Enc(\sigma)_S,
	\]
	for every $n$-qubit state $\sigma$.
\end{corollary}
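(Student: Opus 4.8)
The plan is to obtain this corollary as the $t=0$, $G=\Had$ special case of \autoref{lemma:n-copy-sim}. First I would use the standard fact that on the ($5$-fold concatenated) Steane code the Hadamard gate is implemented transversally \emph{without} any magic state, so $m_{\Had}=0$, while $k_{\Had}=1$ since $\Had$ is a single-qubit gate. Substituting these values into \autoref{lemma:n-copy-sim}, the magic-state register $\reg_{\sf m}$ has $n\cdot(m_{\Had}\cdot N)=0$ qubits and therefore disappears, the data register $\reg$ has $n\cdot(k_{\Had}\cdot N)=n\cdot N$ qubits, and the hypothesis that $S$ has $s$ intersections per block with $\reg\cup\reg_{\sf m}$ becomes precisely the hypothesis of the corollary that $S$ has $s$ intersections per block with $\reg$.

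Next I would unwind the conclusion of \autoref{lemma:n-copy-sim} at the parameter value $t=0$ (which is allowed, since $0\le t\le\ell_{\Had}$). The product $U^{(G^{\otimes n})}_t\cdots U^{(G^{\otimes n})}_1$ and its adjoint are then empty products, hence the identity, and the magic state $\tau_{\Had}^{\otimes n\cdot m_{\Had}}$ is the empty state. Consequently the right-hand side of the identity in \autoref{lemma:n-copy-sim} collapses to $\Enc(\sigma)_S$. Taking $\Sim_{\caC^{\otimes n}}$ to be the map $S\mapsto\Sim_{\caC^{\otimes n}}(\Had,0,S)$ from \autoref{lemma:n-copy-sim}, and writing $\rho(G^{\otimes n},S)\coloneqq\rho(\Had^{\otimes n},0,S)$, we get an algorithm that on input $S$ runs in time $\poly(2^{N},n)$ and outputs the classical description of an $|S|$-qubit density matrix equal to $\Enc(\sigma)_S$ for every $n$-qubit state $\sigma$ --- exactly the statement of the corollary.

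I do not expect any genuine obstacle: the whole argument is a direct instantiation of the preceding lemma. The only two points deserving a line of justification are that the transversal Hadamard on the concatenated Steane code needs no magic state (which holds because the Steane code is a self-dual CSS code and transversal Hadamard maps it to itself), and that $t=0$ is a legitimate value of the time parameter in \autoref{D:simulable} and \autoref{lemma:n-copy-sim} (it is). With those observations in place the corollary is immediate.
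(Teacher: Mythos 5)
Your proposal is correct and matches the paper's own derivation exactly: the paper obtains the corollary by setting $t=0$ and $G=\Had$ in \autoref{lemma:n-copy-sim}, using $m_{\Had}=0$ and $k_{\Had}=1$ so that the magic-state register vanishes and the unitary product becomes the identity. Your extra remarks (transversal Hadamard needing no magic state, $t=0$ being a legitimate parameter value) are just the justification the paper leaves implicit.
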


\subsection{The Encoded Verifier $\Venc$ for $(k,\gamma)$-$\QSAT$}

We consider the following encoded verifier for $(k,\gamma)$-$\QSAT$. In fact, it is the fault-tolerant version.

\begin{construction}{The encoded verifier $\Venc$ for $(k,\gamma)$-$\QSAT$}
	\begin{itemize}
		\item (\textbf{Input.}) $\Venc$ is given a $(k,\gamma)$-$\QSAT$ instance $\caI = (n,m,\{S_i\}_{i \in [m]},\{C_i\}_{i \in [m]})$. Without loss of generality, we can assume that $|S_i| = k$ for every $i \in [m]$. We also require that $m$ is a power of $2$.
		%for simplicity, and see~\autoref{rem:m-not-pow-2} for how to handle the general case
		
		\item (\textbf{Witness.}) $\Venc$ expects an $(3 \cdot N \cdot n)$-qubit encoded witness in an $(N \cdot n)$-qubit register $\edata$ and a $(2 \cdot N \cdot n)$-qubit register $\eotp$. Further, let $\nwit = 3 \cdot N \cdot n$, and $\witness = \eotp \cup \edata$.
		
		$\Venc$ expects the $\edata$ register to encode an $n$-qubit witness $\spz{\phi}$ to $\caI$ with a uniformly random quantum  encoded one-time pad in $\eotp$. (The purpose of this one-time pad is to ensure that all encoded qubits are maximally mixed. This is required later when showing $\VHenc$ is simulable.) Formally, the encoding is expected to be
		\begin{equation}\label{eq:def-otp}
		\phi^{\mathsf{otp}} = \frac{1}{2^{2n}} \sum_{a,b \in \bits^n} \Enc( \proj{a,b} \otimes X^{a} Z^{b} \proj{\phi} Z^b X^a)_{\eotp,\edata}.
		\end{equation}
		
		\item (\textbf{Ancilla.}) $\Venc$ also uses $(2\log m + 2\gamma) \cdot N + 3k \cdot (N-1)$ ancilla qubits consisting of $3k \cdot (N-1)$ qubits in register $\echk$, $\log m \cdot N$ qubits in register $\eidx$, $\log m \cdot N$ qubits in register $\emidx$, $\gamma \cdot N$ qubits in $\eT$, and $\gamma \cdot N$ qubits in $\eancC$. For notational convenience, we also use $\echk\{i\}$ to denote $\echk[ (i-1)(N-1) + 1, i (N-1)]$.
		
		Also, all ancilla states are expected to be initialized to $\spz{0}$. We also let $\nanc = (2\log m + 2\gamma) \cdot N + k \cdot (N-1)$, and $\ancilla = \echk \cup \eidx \cup \emidx \cup \eT \cup \eancC$.
		
		In the following, we describe how $\Venc$ operates.
		
		\item (\textbf{Set-up magic states.}) $\Venc$ first applies $\T \cdot \Had$ to the first qubit of $\eT\{i\}$ for every $i \in [\gamma]$. (Note that $\T \cdot \Had \spz{0} = \T \spz{+} = \spz{\T}$.)
		
		\item (\textbf{Encoding.}) $\Venc$ then encodes the registers $\eidx$, $\emidx$, $\eout$, and $\eT$ by $\Enc$. Now, $\idx$ and $\midx$ are simply $\Enc(\spz{0^{\log m}}\spz{0^{\log m}})$.
		
		\item (\textbf{Encoded Hadamard.}) Next, $\Venc$ uses fault tolerance computation (transversal $H$ gate) to apply $H$ to each logical qubits encoded in $\idx$. Now, $\idx$ and $\midx$ are $\Enc(\spz{+}^{\otimes \log m}\spz{0^{\log m}})$.
		
		\item (\textbf{Encoded $\CNOT$.}) $\Venc$ then uses fault tolerance computation (transversal $\CNOT$ gate) to apply $\CNOT$ on corresponding logical qubits in $\idx$ and $\midx$. Now, $\eidx$ and $\emidx$ are $\Enc\left( \frac{1}{\sqrt{m}} \sum_{i=1}^{m} \spz{i}\spz{i} \right)$.
		
		\item (\textbf{Check encoding.}) Next, $\Venc$ applies the following three unitaries $\UcheckE^{(1)}$, $\UcheckE^{(2)}$, and $\UcheckE^{(3)}$, defined as
		\[
		\UcheckE^{(1)} =  \prod_{\tau \in [k]} \sum_{i \in [m]} (U^\Chk)_{\edata\{S_{i,\tau}\},\echk\{i\}}  \otimes \proj{\Enc(i)}_{\eidx},
		\]
		\[
		\UcheckE^{(2)} =  \prod_{\tau \in [k]} \sum_{i \in [m]} (U^\Chk)_{\eotp\{S_{i,\tau}\},\echk\{i\}}  \otimes \proj{\Enc(i)}_{\eidx},
		\]
		\[
		\UcheckE^{(3)} =  \prod_{\tau \in [k]} \sum_{i \in [m]} (U^\Chk)_{\eotp\{S_{i,\tau}+n\},\echk\{i\}}  \otimes \proj{\Enc(i)}_{\eidx},
		\]					
		Namely, this step checks whether the corresponding blocks in $\edata,\eotp$ for qubits in $S_i$ are corrected encoded by measuring all syndromes on the corresponding $3k$ $N$-qubit blocks into $\echk$.
		
		\item (\textbf{Encoded $U_{\sf test}$.}) Now $\Venc$ applies a unitary $\Utestenc$ acting on $\eidx$, $\eotp$, $\edata$ and $\eout$ as follows:
		\[
		\Utestenc = \sum_{i \in [m]} \Enc(V_{C_i})_{\eotp,\edata,\eT,\eancC} \otimes \proj{\Enc(i)}_{\eidx}.
		\]
		
		In above, $\Enc(V_{C_i})$ acts as follows: 
		
		\begin{enumerate}
			\item It first undoes the one-time pad on blocks of $\edata$ corresponding to $S_i$ by transversally applying $\CNOT$ between $\eotp\{u\}$ and $\edata\{u\}$, and then transversally applying $\CZ$ between $\eotp\{u + n\}$ and $\edata\{u\}$, for every $u \in S_i$.
			
			\item Next, it applies the fault-tolerant version of $C_i$ on blocks of $\edata$ corresponding to $S_i$, $\eancC$, and $\eT$. Clifford gates are implemented transversally; $\T$ gates are implemented using the encoded $\spz{\T}$ gates in $\eT$ together with transversal Clifford gates; it uses (encoded) ancilla qubits in $\eancC$.
		\end{enumerate}
		
		\item (\textbf{Decoding $\eancC\{1\}$.}) Then, $\Venc$ decodes $\eancC\{1\}$ by applying $U^{\Dec}$ to $\eancC\{1\}$. 
		
		\item (\textbf{Decision.}) Finally, $\Venc$ measures $\eancC(1)$ and $\echk$ in the computational basis to obtain an outcome $w_1 \in \bits^1$ and $w_2 \in \bits^{k \cdot (N-1)}$, it \textbf{accepts} if $w_1 = 1$ and $w_2 = 0^{k \cdot (N-1)}$, and \textbf{rejects} otherwise.
	\end{itemize}
\end{construction}

$\newline$

We will use $\Venc_\caI$ to denote the above verifier when given the $(k,\gamma)$-$\QSAT$ instance $\caI$. We need the following claim.

\begin{comment}
\begin{remark}\label{rem:m-not-pow-2}
When $m$ is not a power of $2$, let $m' = 2^{\lceil \log m \rceil}$ be the smallest power of $2$ that is at least $m$. We can add $m' - m$ dummy circuits $C_i$ that always outputs $1$ into the instance $\caI$. The new instance $\caI'$ satisfies $\val(\caI') = \frac{m'-m}{m'} + \frac{m}{m'} \cdot \val(\caI)$.
\end{remark}
\end{comment}

\begin{claim}\label{claim:Venc-acc}
	If there is an $\nwit$-qubit state $\sigma$ that makes $\Venc_\caI$ accept with probability $\alpha$, then $\val(\caI) \ge \alpha$.
\end{claim}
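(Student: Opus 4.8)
The plan is to track the action of $\Venc_\caI$ on a general $\nwit$-qubit witness $\sigma$ and show that the acceptance probability is, up to the fault-tolerant encoding, exactly the quantity $\val_\caI(\sigma')$ for some extracted $n$-qubit state $\sigma'$, hence bounded by $\val(\caI)$. Concretely, I would first argue that we may assume the ancilla registers $\echk$, $\eidx$, $\emidx$, $\eT$, $\eancC$ are correctly initialized: the verifier itself initializes $\eT$ via $\T\cdot\Had$ and prepares $\eidx,\emidx$ via encode-then-Hadamard-then-$\CNOT$, so those registers carry exactly $\Enc\!\left(\frac{1}{\sqrt m}\sum_{i\in[m]}\spz{i}\spz{i}\right)$ and $\gamma$ copies of the encoded magic state $\spz{\T}$ regardless of what the prover sent; only $\echk$ could be tampered with, but since the prover's $\witness$ register ($\eotp\cup\edata$) is disjoint from $\echk$, and $\Venc$ measures $\echk$ at the end, we can without loss of generality assume $\echk$ starts in $\spz{0}$ by absorbing any deviation into $\sigma$. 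Then, conditioning on the $\eidx$ register being in the branch $\spz{\Enc(i)}$ — which by the uniform superposition happens with weight $1/m$ — the controlled unitaries $\UcheckE^{(1,2,3)}$ and $\Utestenc$ act as: measure the syndromes of the $3k$ code blocks of $\edata,\eotp$ indexed by $S_i$ into $\echk$, then (in the $w_2=0$ branch) undo the one-time pad on those blocks and apply the fault-tolerant $C_i$.

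The key step is then to extract $\sigma'$. Define $\sigma'$ as follows: measure the syndromes of all $n$ code blocks of $\edata$ and of both halves of $\eotp$ in $\sigma$; in the all-syndromes-zero branch the state is a genuine codeword $\Enc(\rho)$ for some $n$-qubit $\rho$ on the logical qubits of $\edata$ (after applying the logical $X^aZ^b$ dictated by the measured $\eotp$ logical bits — one can either trace out $\eotp$ or, better, note the one-time-pad structure makes the logical content of $\edata$ independent of $a,b$ and equal to some $\rho$); set $\sigma' := \rho$. I would then show that for each $i$, the probability that $\Venc$ accepts in branch $i$ is at most $\Pr[C_i(\rho_{S_i})=1]$: acceptance in branch $i$ requires $w_2=0^{k(N-1)}$, which forces the relevant $3k$ code blocks to be in the zero-syndrome space, on which the fault-tolerant gadgets $U^\Dec$ and $\Enc(V_{C_i})$ act as the honest logical circuit; since $\T\cdot\Had\spz0=\spz{\T}$ the magic-state consumption is correct, so $\eancC(1)$ after $U^\Dec$ carries exactly the output qubit of $C_i$ applied to the logical restriction $\rho_{S_i}$ (with the logical one-time pad correctly removed by the transversal $\CNOT/\CZ$ step). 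Averaging over $i\getsR[m]$ — which is precisely the $1/m$ weight of each $\eidx$ branch in the encoded superposition — gives acceptance probability at most $\Ex_{i\getsR[m]}\Pr[C_i(\rho_{S_i})=1]=\val_\caI(\rho)\le\val(\caI)$.

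I would organize the write-up around two sub-lemmas: (i) the encoded-branch structure, i.e.\ after the encoding/Hadamard/$\CNOT$ steps the $(\eidx,\emidx)$ registers equal $\Enc\!\left(\tfrac{1}{\sqrt m}\sum_i\spz i\spz i\right)$, so all subsequent controlled operations can be analyzed branch-by-branch with uniform weight $1/m$; and (ii) the fault-tolerance correctness statement, i.e.\ on the zero-syndrome subspace the composed gadget (undo-OTP $\to$ FT-$C_i$ $\to$ decode $\eancC\{1\}$) implements the logical map $\rho_{S_i}\mapsto C_i(\rho_{S_i})$, which is a standard property of transversal Clifford gates plus magic-state $\T$-gadgets on a CSS code, here the $5$-fold concatenated Steane code.

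The main obstacle I anticipate is handling an adversarial witness that is \emph{not} a valid codeword — i.e.\ arguing that the syndrome measurement into $\echk$ followed by the $w_2=0$ post-selection genuinely restricts to the code space so that the extracted $\rho$ is well-defined and the fault-tolerant gadgets behave as claimed. The subtlety is that $\sigma$ is an arbitrary density matrix on $\witness$, possibly entangled across blocks and with the ancillas, and the branches of the $\eidx$ superposition overlap on which blocks they check; I would deal with this by writing the acceptance probability as $\Ex_{i\getsR[m]}$ of a branch-$i$ acceptance term, then, within each term, inserting the (projective) syndrome measurement of \emph{all} $\edata$ and $\eotp$ blocks — which commutes with everything that happens in branch $i$ since branch $i$'s check and $C_i$ act within the code space — and finally invoking the one-time-pad twirl (the $\tfrac{1}{2^{2n}}\sum_{a,b}$ in the honest witness, but here derived post hoc from the $\eotp$ logical bits) to see that only the logical content $\rho$ of the zero-syndrome component matters. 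This is essentially the argument underlying Kitaev-style circuit-to-Hamiltonian soundness combined with the fault-tolerance soundness from~\cite{BroadbentG20,GriloSY19}, so I expect it to go through, but keeping the bookkeeping of which blocks are measured when is the delicate part.
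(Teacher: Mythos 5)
Your overall strategy is the same one the paper uses for this claim---restrict the witness to the code space, decode, undo the one-time pad logically, identify branch-$i$ acceptance with $\Pr[C_i(\cdot_{S_i})=1]$, and average over the $1/m$ branch weights coming from the encoded uniform superposition in $\eidx$---and your two sub-lemmas (branch structure, fault-tolerant correctness on the zero-syndrome subspace) are fine. The genuine gap is in your extraction of $\sigma'$. You define $\rho$ by measuring the syndromes of \emph{all} blocks of $\edata$ and $\eotp$ and keeping the all-zero branch, and then claim that the branch-$i$ acceptance probability is at most $\Pr[C_i(\rho_{S_i})=1]$. That inequality goes the wrong way: branch $i$ only writes into $\echk$ the syndromes of the $3k$ blocks indexed by $S_i$, so a witness component whose $S_i$-blocks are clean codewords but which has a nonzero syndrome in some other block contributes fully to branch-$i$ acceptance, yet is annihilated when you define $\rho$. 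Concretely, take an honest encoded witness in which a single block $\edata\{u_0\}$ is replaced by a fixed non-codeword state: every branch with $u_0\notin S_i$ can still accept, so the acceptance probability can be large, while the all-zero-syndrome outcome has probability $0$ and your $\rho$ is not even defined. Your own anticipated resolution (``only the logical content of the zero-syndrome component matters'') is precisely the step that fails, because the zero-syndrome condition you impose is global while the verifier's check is local to the branch.

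The fix is to extract $\sigma'$ with a channel rather than a post-selection: apply, block by block, the map that measures the syndrome, decodes on outcome zero, and outputs an arbitrary fixed state on a nonzero outcome; then apply the logical undo-OTP and trace out the $\eotp$ logical qubits to get a single $n$-qubit state $\sigma'$. Since the branch-$i$ measurement acts only on the $S_i$-blocks (plus verifier-prepared ancillas), its acceptance probability equals the logical acceptance evaluated on the decoded zero-syndrome part of the reduced state on those blocks, which is dominated, as positive semidefinite operators, by the corresponding reduced state of $\sigma'$; the ``replace'' components can only add nonnegative acceptance probability. Averaging over $i$ then gives $\alpha \le \val_\caI(\sigma') \le \val(\caI)$. (No one-time-pad twirl is needed for soundness; the logical undo-OTP controlled on whatever sits in $\eotp$ suffices.) For comparison, the paper's proof is terser: it purifies the witness, projects onto the global code space, asserts the acceptance probability does not decrease, then decodes and undoes the OTP; in your write-up you should make the domination step explicit as above instead of relying on the global zero-syndrome post-selection.
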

\begin{proof}
	We first note that without loss of generality, we can assume $\sigma$ to be a pure state $\spz{\psi}$. Let $\Pi$ be the projection onto the $3n$-dimensional encoded subspace of $\witness$. Let $|\phi\rangle$ be 
	\[
	\spz{\phi} = \frac{ \Pi \spz{\psi}}{\|\Pi \spz{\psi}\|}.
	\]
	
Since $\Venc_\caI$ rejects immediately if the measurement outcome on $\Chk$ is not $0^{k(N-1)}$, we can see that $\Venc_\caI$ accepts $\spz{\phi}$ with probability at least $\alpha$. Now we have
	\[
	\spz{\phi} = \Enc(\spz{\phi_0})
	\]
	for some $3n$-qubit state $\spz{\phi_0}$. Next let $\otp$ be a $2n$-qubit register and $\data$ be an $n$-qubit register. We assume that $\spz{\phi_0}$ is stored in $\otp$ and $\data$. We also assume that each qubit in $\otp$ is encoded as a block in $\eotp$, and each qubit in $\data$ is encoded as a block in $\edata$.
	
	Now we define the following state $\spz{\psi_0}$
	\[
	\spz{\psi_0} = \left(\prod_{i=1}^{n}\CZ_{\otp(i+n),\data(i)}\right) \left(\prod_{i=1}^{n}\CNOT_{\otp(i),\data(i)}\right) \spz{\phi_0}.
	\]	
	From the definition of $\Venc$, it follows that the acceptance probability of $\Venc$ on $\spz{\phi}$ is exactly $\val_\caI(\spz{\psi_0})$. Hence we have $\val(\caI) \ge \val_\caI(\spz{\psi_0}) \ge \alpha$.
\end{proof}

In the following, we first decompose the unitary $\Venc$ into a product of constantly many sub-unitaries and then construct the corresponding history Hamiltonian with respect to~\autoref{defi:history-Hamiltonians}.

\subsubsection{Decomposing $\Venc$ into a Product of Constant Many Sub-Unitaries}\label{sec:decompose-unitary}

\newcommand{\elletest}{\ell_{\sf etest}}

\newcommand{\Magic}{\mathsf{magic}}
\newcommand{\ellmagic}{\ell_{\Magic}}

We let $U^{\Magic}_1 = \Had \otimes I^{\otimes (N-1)}$, $U^{\Magic}_2 = \T \otimes I^{\otimes (N-1)}$, and $\ellmagic = 2$.

We deal with each phase of $\Venc$ separately.
\begin{itemize}
	
	\item[(\textbf{1. Set-up magic states})] For each $j \in [\ellmagic]$, we set
	\[
	\Venc_{1,j} = \prod_{i=1}^{\gamma} (U^{\Magic}_{j})_{\eT\{i\}}.
	\]
	
	\item[(\textbf{2. Encoding})] For each $j \in [\ell_{\Enc}]$, we set
	\[
	\Venc_{2,j} = \prod_{i=1}^{\gamma} (U^{\Enc}_j)_{\eancC\{i\}} \cdot \prod_{i=1}^{\gamma} (U^{\Enc}_j)_{\eT\{i\}} \cdot \prod_{i=1}^{\log m} (U^{\Enc}_j)_{\eidx\{i\}} \cdot \prod_{i=1}^{\log m} (U^{\Enc}_j)_{\emidx\{i\}}.
	\]
	\item[(\textbf{3. Encoded Hadamard})]
	Next, for each $j \in [\ell_H]$, we set
	\[
	\Venc_{3,j} = \prod_{i=1}^{\log m} (U^{(H)}_j)_{\eidx\{i\}}.
	\]
	\item[(\textbf{4. Encoded $\CNOT$})]
	For each $j \in [\ell_{\CNOT}]$, we set
	\[
	\Venc_{4,j} = \prod_{i=1}^{\log m} (U^{(\CNOT)}_j)_{\eidx\{i\}, \emidx\{i\}}.
	\]
	\item[(\textbf{5. Check encoding})]
	For each $j \in [\ell_{\Chk}]$, $i \in [m]$, and $\tau \in [k]$, we set
	\[
	\Venc_{5,(\tau-1) \cdot \ell_{\Chk} + j} = \sum_{i \in [m]} (U^\Chk_j)_{\edata\{S_{i,\tau}\},\echk\{\tau\}} \otimes \proj{\Enc(i)}_{\idx},
	\]
	and
	\[
	\Venc_{5,(\tau-1) \cdot \ell_{\Chk} + j,i} = (U^\Chk_j)_{\edata\{S_{i,\tau}\},\echk\{\tau\}} \otimes \proj{\Enc(i)}_{\idx}.
	\]
	
	Let
	\[
	\Venc_{5,(\tau-1+k) \cdot \ell_{\Chk} + j} = \sum_{i \in [m]} (U^\Chk_j)_{\eotp\{S_{i,\tau}\},\echk\{\tau+k\}} \otimes \proj{\Enc(i)}_{\idx},
	\]
	and
	\[
	\Venc_{5,(\tau-1+2k) \cdot \ell_{\Chk} + j} = \sum_{i \in [m]} (U^\Chk_j)_{\eotp\{S_{i,\tau}+n\},\echk\{\tau+2k\}} \otimes \proj{\Enc(i)}_{\idx}.
	\]
	
	We further set $\Venc_{5,(\tau-1+k) \cdot \ell_{\Chk} + j,i}$ and $\Venc_{5,(\tau-1+2k) \cdot \ell_{\Chk} + j,i}$ similarly, for every $i \in [m]$.
	
	\item[(\textbf{6. Encoded $U_{\sf test}$})] Let $\elletest = c_{\sf test} \cdot \gamma$ for a sufficiently large constant $c_{\sf test } \in \N$. By adding dummy identity gates, we can assume that all $\Enc(V_{\Pi_i})$ are implemented by $\elletest$ gates, and we write
	\[
	\Enc(V_{\Pi_i}) = U^{(V_{\Pi_i})}_{\elletest} \cdots U^{(V_{\Pi_i})}_{1}.
	\]
	
	Finally, for each $j \in [\elletest]$ and $i \in [m]$, we set
	\[
	\Venc_{6,j} = \sum_{i \in [m]} U^{(V_{\Pi_i})}_{j} \otimes \proj{\Enc(i)}_{\eidx},
	\]
	and
	\[
	\Venc_{6,j,i} = U^{(V_{\Pi_i})}_{j} \otimes \proj{\Enc(i)}_{\eidx}.
	\]
	
	\item[(\textbf{7. Decoding $\eancC\{1\}$})]
	For each $j \in [\ell_{\Dec}]$, we set
	\[
	\Venc_{7,j} = (U^{\Dec}_{j})_{\eancC\{1\}}.
	\]
\end{itemize}

Now, for $i \in [7]$, we let $\ell^{(i)}$ be the $i$-th number in the sequence $\ellmagic,\ell_\Enc,\ell_H,\ell_\CNOT,3k \cdot \ell_\Chk,\elletest,\ell_\Dec$. 

Note that the above decomposed $\Venc$ into a product of $T = \sum_{i \in [7]} \ell^{(i)}$ sub-unitaries as follows:
\begin{align}
\Venc &= ( \Venc_{7,\ell^{(7)}} \cdots \Venc_{7,1} ) \cdot ( \Venc_{6,\ell^{(6)}} \cdots \Venc_{6,1} ) \cdot ( \Venc_{5,\ell^{(5)}} \cdots \Venc_{5,1} ) \label{eq:Venc-seq} \\
&\phantom{{} = {}} ( \Venc_{4,\ell^{(4)}} \cdots \Venc_{4,1} ) \cdot ( \Venc_{3,\ell^{(3)}} \cdots \Venc_{3,1} ) \cdot ( \Venc_{2,\ell^{(2)}} \cdots \Venc_{2,1} ) \cdot (\Venc_{1,\ell^{(1)}} \cdots \Venc_{1,1}).\notag
\end{align}

\subsubsection{Constructing the History Hamiltonian}

\newcommand{\HamEnc}{H^{\mathsf{enc}}}

We decompose the $(2\log m + 2\gamma) \cdot N + k \cdot (N-1)$ ancilla qubits into $5$ subsets:
\begin{align*}
S_1 &= \{ \eidx(i) \}_{i \in [\log m \cdot N]},~~S_2 = \{ \emidx(i) \}_{i \in [\log m \cdot N]},\\
S_3 &= \{ \eancC(i) \}_{i \in [\gamma \cdot N]},~~S_4 = \{ \eT(i) \}_{i \in [\gamma \cdot N]},~~S_5 = \{ \echk(i) \}_{i \in [k \cdot (N-1)]}.
\end{align*}

Let $\Hamhis$ be the history Hamiltonian of $\Venc$ with $B = 5$ and $S_1,\dotsc,S_5$ as above. We will use $\state$ to denote $\witness \cup \ancilla$, and $\clock$ to denote the clock registers for $\HamEnc$. Let
\[
\Hamout = (I - \proj{10^{k(N-1)}})_{\eancC(1),\echk} \otimes \proj{1}_{\clock(T)}
\]

Finally,  let
\begin{align*}
\HamEnc  = \Hamhis + \Hamout = \Hamprop + \Haminit + \Hamstab + \Hamout.
\end{align*}
We use $\Hamprop_{i,j}$ to denote the term in $\Hamprop$ that corresponds to the unitary $\Venc_{i,j}$. Then we have
\[
\Hamprop = \sum_{i \in [7]} \sum_{j \in [\ell^{(i)}]} \Hamprop_{i,j},
\]
and
\[
\HamEnc = \sum_{i \in [7]} \sum_{j \in [\ell^{(i)}]} \Hamprop_{i,j} + \sum_{i \in [B]} \Haminit_i + \sum_{t \in [T-1]} \Hamstab_t + \Hamout + \Hamcheck.
\]

We call $\HamEnc$ the \emph{encoded Hamiltonian} of $\caI$. We will use $\HamEnc(\caI)$ to denote the encoded Hamiltonian of $\caI$.

\newcommand{\aM}{a^{\sf M}}
\newcommand{\bM}{b^{\sf M}}
\newcommand{\cM}{c^{\sf M}}

Let $M = \sum_{i \in [7]} \ell^{(i)} + B + T - 1 + 2 = 2T + B + 1$ be the number of terms in $\HamEnc$. We note that $\ell^{(1)},\ell^{(2)},\ell^{(3)},\ell^{(4)},\ell^{(7)}$ are constants, $\ell^{(5)} = 3k \cdot \ell_{\Chk}$, and $\ell^{(6)} = c_{\sf test} \cdot \gamma$. Hence, there are three constants $a^M,b^M,c^M \in \N$ such that 
\[
M = \aM \cdot k + \bM \cdot \gamma + \cM.
\]

Now, notice that except for the $\Hamprop_{5,j}$'s and the $\Hamprop_{6,j}$'s, all other $M - \ell^{(5)} - \ell^{(6)}$ terms above are at most $O(\log m)$-local.

\subsubsection{Analysis of The Extended Hamiltonian $\HamEnc$}

Here we prove the following lemma relating the minimum energy of $\HamEnc(\caI)$ to $\caI$.

\begin{lemma}\label{lemma:HamEnc-analysis}
	There is a universal polynomial $p$ such that for every $(k,\gamma)$-$\QSAT$ instance $\caI$, 
	\[
	\left( \frac{1 - \val(\caI)}{p(k,\gamma)} \right)^2 \le \lmin(\HamEnc(\caI)) \le 1 - \val(\caI).
	\]	  
\end{lemma}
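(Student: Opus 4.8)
\begin{proofsketch}
The plan is to prove the two inequalities separately. For the upper bound I would exhibit an explicit low-energy state: fix $\eps>0$, choose an $n$-qubit state $\spz{\chi}$ with $\val_\caI(\spz{\chi}) \ge \val(\caI)-\eps$, build the encoded witness $\phi^{\mathsf{otp}}$ of~\eqref{eq:def-otp} with $\spz{\chi}$ in place of $\spz{\phi}$, initialize all ancilla registers to $\spz{0}$, and take $\spz{\Psi}$ to be the history state of the sequence in~\eqref{eq:Venc-seq} with this choice of $\spz{\psi_0}$. On any history state the propagation, stabilizer, and initialization terms all evaluate to $0$, so $\rpz{\Psi}\HamEnc(\caI)\spz{\Psi} = \rpz{\Psi}\Hamout\spz{\Psi} = \tfrac{1}{T+1}(1-p_{\mathsf{acc}})$, where $p_{\mathsf{acc}}$ is the probability that $\Venc_\caI$ accepts $\phi^{\mathsf{otp}}$. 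Since $\Venc$ consists of noiseless encoding unitaries and fault-tolerant Clifford$+$T gates applied to a perfectly encoded input with perfectly prepared magic states, the whole computation is exact: the syndrome register $\echk$ is $0^{k(N-1)}$ with certainty, and running the bookkeeping in the proof of~\autoref{claim:Venc-acc} in reverse gives $p_{\mathsf{acc}} = \val_\caI(\spz{\chi})$. Letting $\eps\to0$ then yields $\lmin(\HamEnc(\caI)) \le \tfrac{1}{T+1}(1-\val(\caI)) \le 1-\val(\caI)$.

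For the lower bound, set $\delta = \lmin(\HamEnc(\caI))$ with minimizing (pure) state $\spz{\psi}$; the bound is trivial when $\delta\ge1$, so assume $\delta<1$. Since $\Hamout\succeq 0$ we have $\rpz{\psi}\Hamhis\spz{\psi}\le\delta$, so by~\autoref{lemma:close-to-history} there is a history state $\spz{\eta}$ of the sequence~\eqref{eq:Venc-seq} with $\|\proj{\eta}-\proj{\psi}\|_1 \le \poly(T,B)\sqrt{\delta}$. Because $\HamEnc(\caI)$ is a sum of $M$ terms, each of operator norm at most $1$, I get $\rpz{\eta}\HamEnc(\caI)\spz{\eta} \le \delta + M\cdot\poly(T,B)\sqrt{\delta}$; and since $\Hamhis\spz{\eta}=0$ this same quantity bounds $\rpz{\eta}\Hamout\spz{\eta} = \tfrac{1}{T+1}(1-p_{\mathsf{acc}})$, where now $p_{\mathsf{acc}}$ is the probability $\Venc_\caI$ accepts the $\nwit$-qubit witness $\spz{\phi}$ underlying $\spz{\eta}$. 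Rearranging, $1-p_{\mathsf{acc}} \le (T+1)(1 + M\cdot\poly(T,B))\sqrt{\delta}$, and~\autoref{claim:Venc-acc} gives $\val(\caI)\ge p_{\mathsf{acc}}$, hence $1-\val(\caI) \le (T+1)(1+M\cdot\poly(T,B))\sqrt{\delta}$, which is exactly the desired bound once the prefactor is shown to be a universal polynomial in $k,\gamma$.

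It remains to absorb the parameters. Here $B=5$ is a constant, and from the decomposition of~\autoref{sec:decompose-unitary} we have $T = \ellmagic + \ell_\Enc + \ell_H + \ell_\CNOT + 3k\,\ell_\Chk + \elletest + \ell_\Dec$, where $N$ and $\ell_\Enc,\ell_\Dec,\ell_\Chk,\ell_H,\ell_\CNOT$ are universal constants and $\elletest = O(\gamma)$, so $T = O(k+\gamma)$ and $M = 2T+B+1 = O(k+\gamma)$. Thus $(T+1)(1+M\cdot\poly(T,B))$ is bounded by a single universal polynomial $p(k,\gamma)$, which we also take to be $\ge 1$ so that the $\delta\ge1$ case is covered. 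I expect the main obstacle to be the lower bound, and within it the invocation of~\autoref{lemma:close-to-history}: it must be applied to a Feynman--Kitaev clock built from \emph{non-local} unitaries $\Venc_{i,j}$ (the encoded syndrome checks and the encoded test circuit) and with the slightly nonstandard block-wise initialization terms $\Haminit_i$. The excerpt records that~\cite[Theorem~21]{NirkheVY18} still applies in both respects, and carefully checking this is where most of the care is needed.
\end{proofsketch}
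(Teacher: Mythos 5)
Your proposal is correct and follows essentially the same route as the paper, which simply splits the two inequalities into \autoref{lemma:enc-completeness} (the history state of an optimal honestly-encoded witness $\phi^{\otp}$ has energy at most $\frac{1}{T}(1-\val(\caI))$) and \autoref{lemma:enc-soundness} (apply \autoref{lemma:close-to-history} to a low-energy state, bound the $\Hamout$ energy of the nearby history state, and invoke \autoref{claim:Venc-acc}), with the same bookkeeping $B=5$, $T=O(k+\gamma)$. Your treatment is if anything slightly more careful than the paper's on the operator-norm/trace-distance step and the trivial $\delta\ge 1$ case.
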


\autoref{lemma:HamEnc-analysis} is an immediate corollary of the following~\autoref{lemma:enc-soundness} and~\autoref{lemma:enc-completeness}. 

\begin{lemma}[Soundness]\label{lemma:enc-soundness}
	There is a universal polynomial $p$ such that for every $(k,\gamma)$-$\QSAT$ instance $\caI$, if there is a state $\spz{\Phi} \in \Hclock \otimes \Hstate$ such that $\rpz{\Phi} \HamEnc(\caI) \spz{\Phi} \le \delta$, then $\val(\caI) \ge 1 -  p(k,\gamma) \cdot \sqrt{\delta}$.
\end{lemma}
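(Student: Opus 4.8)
The plan is to argue in two stages: first from low energy with respect to $\HamEnc(\caI)$ to being close to a history state, and then from a low-energy history state to a high-value witness for $\caI$. Concretely, suppose $\rpz{\Phi}\HamEnc(\caI)\spz{\Phi}\le\delta$. Since $\HamEnc = \Hamhis + \Hamout$ with $\Hamhis = \Hamprop + \Haminit + \Hamstab$ and all terms positive semidefinite, we have $\rpz{\Phi}\Hamhis\spz{\Phi}\le\delta$. By \autoref{lemma:close-to-history} applied to the history Hamiltonian of the sequence decomposing $\Venc$ (with $B=5$ blocks and $T$ sub-unitaries as in \autoref{sec:decompose-unitary}), there is a genuine history state $\spz{\eta}$ of that sequence with $\|\proj{\eta}-\proj{\Phi}\|_1\le \poly(T,B)\cdot\sqrt{\delta}$. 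Since $T = \aM k + \bM\gamma + \cM$ and $B=5$, this is $\poly(k,\gamma)\cdot\sqrt{\delta}$ (the dependence on $m$ hidden in $\ell_{\Chk},\ell_{\Enc},\dots$ is only $\poly\log m$-local structure, not affecting the bound — or, if it does appear, it is absorbed because those block counts are constants; I'd double-check the exact statement of \cite[Theorem~21]{NirkheVY18} and track the polynomial carefully here).

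Second, a history state $\spz{\eta}$ of the $\Venc$-sequence has the form $\frac{1}{\sqrt{T+1}}\sum_{t=0}^T\spz{\unary(t)}_{\clock}\otimes\spz{\psi_t}_{\state}$ with $\spz{\psi_0}=\spz{\phi}_{\witness}\otimes\spz{0^{n_2}}_{\ancilla}$ and $\spz{\psi_t}=\Venc_{[1,t]}\spz{\psi_0}$; in particular the $\Haminit$ terms being (approximately) satisfied forces the ancilla of $\spz{\psi_0}$ to be (close to) all zeros, and $\spz{\psi_T}=\Venc\spz{\psi_0}$. Now $\rpz{\eta}\Hamout\spz{\eta}$ is, up to the $\frac{1}{T+1}$ normalization from the $t=T$ slice, exactly the rejection probability of $\Venc_\caI$ on the input state corresponding to $\spz{\psi_0}$. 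Since $\rpz{\eta}\Hamout\spz{\eta}\le \|\proj{\eta}-\proj{\Phi}\|_1 + \rpz{\Phi}\Hamout\spz{\Phi}\le \poly(k,\gamma)\sqrt{\delta}$ (using that $\Hamout$ is a projector, so $|{\rpz{\eta}\Hamout\spz{\eta}}-{\rpz{\Phi}\Hamout\spz{\Phi}}|\le\|\proj{\eta}-\proj{\Phi}\|_1$), the rejection probability of $\Venc_\caI$ on this state is at most $(T+1)\cdot\poly(k,\gamma)\sqrt{\delta}=\poly(k,\gamma)\sqrt{\delta}$; equivalently it is accepted with probability $\ge 1-\poly(k,\gamma)\sqrt{\delta}$. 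Applying \autoref{claim:Venc-acc} gives $\val(\caI)\ge 1-\poly(k,\gamma)\sqrt{\delta}$, which is the claim with a suitable universal polynomial $p$.

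There is one technical wrinkle to handle: \autoref{lemma:close-to-history} is stated for exact history states, but a small subtlety is that $\spz{\Phi}$ only approximately satisfies $\Haminit$, so the "history state" I extract already has the correct ancilla-zero structure built in — that is fine because the lemma's conclusion is precisely that $\spz{\Phi}$ is close to an \emph{honest} history state (ancillas initialized to $\spz{0}$), and honesty is exactly what \autoref{claim:Venc-acc} and the structure of $\Venc$ need (it re-purifies to the encoded subspace via the projection $\Pi$ in that claim's proof). So no extra work is needed beyond invoking the two lemmas in sequence and bookkeeping the polynomial.

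The main obstacle is bookkeeping the polynomial factors correctly — ensuring that $\poly(T,B)$ from \autoref{lemma:close-to-history} together with the $(T+1)$ amplification from the history-state normalization indeed collapses to a polynomial in $k$ and $\gamma$ only (and not in $m$ or $n$), since $T=\aM k+\bM\gamma+\cM$ and $B=5$ are the relevant quantities and the per-block counts $\ell_{\Enc},\ell_{\Chk},\ell_H,\ell_{\CNOT},\ell_{\Dec}$ are constants while $\ell^{(5)},\ell^{(6)}$ are linear in $k,\gamma$. The remaining step, \autoref{lemma:enc-completeness}, is the easier direction: take an optimal $n$-qubit $\sigma=\proj{\phi}$ for $\caI$, build the honestly-encoded witness $\phi^{\mathsf{otp}}$ from \eqref{eq:def-otp}, let $\spz{\Phi}$ be the corresponding history state of the $\Venc$-sequence; then $\Hamprop,\Haminit,\Hamstab$ vanish on it exactly, and $\rpz{\Phi}\Hamout\spz{\Phi}=\frac{1}{T+1}\Pr[\Venc_\caI\text{ rejects}]\le\frac{1}{T+1}(1-\val(\caI))\le 1-\val(\caI)$.
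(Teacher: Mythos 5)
Your proposal is correct and follows essentially the same route as the paper's proof: invoke \autoref{lemma:close-to-history} to replace $\spz{\Phi}$ by a nearby history state (noting $B=5$ and $T=O(k+\gamma)$ so the polynomial depends only on $k,\gamma$), observe that on a history state only $\Hamout$ contributes and its expectation is $1/(T+1)$ times the rejection probability of $\Venc_\caI$ on the underlying witness, and finish with \autoref{claim:Venc-acc}. Your explicit use of $\Hamout$ being a projector to transfer the energy bound across the trace-distance gap is a detail the paper leaves implicit, but it is the same argument.
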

\begin{proof}
	By \autoref{lemma:close-to-history}, there is a universal polynomial $q$ and a history state $\spz{\Psi}$ such that
	\begin{equation}\label{eq:close-history}
	\| \proj{\Psi} - \proj{\Phi} \|_{1} \le q(B,T) \cdot \sqrt{\delta}.
	\end{equation}
	
	Recall that $B = 5$ and $T = O(k + \gamma)$. So there is a universal polynomial $p_1$ such that $q(B,T) \le p_1(k,\gamma)$.
	
	Let $\HamEnc = \HamEnc(\caI)$ for simplicity. From~\eqref{eq:close-history} and the fact that $\proj{\Psi}$ is a history state, it follows that
	\[
	\rpz{\Psi} \HamEnc \spz{\Psi} = \rpz{\Psi} \Hamout \spz{\Psi} \le p_1(k,\gamma) \cdot \sqrt{\delta}.
	\]
	
	We write
	\[
	\spz{\Psi} = \frac{1}{\sqrt{T+1}} \sum_{t=0}^{T} \spz{\unary(t)}_{\clock} \otimes \spz{\psi_t}_{\state},
	\]
	where $\spz{\psi_t} = \Venc_{[1\dotsc t]} \spz{\psi_{0}}$, and $\spz{\psi_0} = \spz{\phi}_{\witness} \otimes \spz{0^{\nanc}}_{\ancilla}$.
	
	Recall that
	\[
	\Hamout = (I - \proj{10^{k(N-1)}})_{\eancC(1),\echk} \otimes \proj{1}_{\clock(T)};
	\]
	we have
	\[
	\rpz{\Psi} \Hamout \spz{\Psi} = \frac{1}{T} \cdot \rpz{\psi_T} (I - \proj{10^{k(N-1)}})_{\eancC(1),\echk} \spz{\psi_T}.
	\]
	
	From the definition of $\Venc$, we can see $\rpz{\psi_T} (I - \proj{10^{k(N-1)}})_{\eancC(1),\echk} \spz{\psi_T}$ is the probability that $\Venc$ rejects the witness $\spz{\phi}$.
	
	Hence, it follows that $\Venc$ rejects $\spz{\phi}$ with probability at most $T \cdot p_1(k,\gamma)$. Recall that $T = O(k + \gamma)$, so there is a universal polynomial $p$ such that $T \cdot p_1(k, \gamma) \le p(k, \gamma)$. Then by~\autoref{claim:Venc-acc}, it follows that $\val(\caI) \ge 1 - p(k, \gamma) \cdot \sqrt{\delta}$.
\end{proof}

\begin{lemma}[Completeness]\label{lemma:enc-completeness}
	For every $(k,\gamma)$-$\QSAT$ instance $\caI$, it holds that
	\[
	\lmin(\HamEnc(\caI)) \le 1 - \val(\caI).
	\]
\end{lemma}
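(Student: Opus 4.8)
The plan is to exhibit an explicit state of energy at most $1 - \val(\caI)$ for $\HamEnc(\caI)$, namely a mixture of history states built from the \emph{honestly encoded} near-optimal witness. Fix $\eps > 0$ and a pure $n$-qubit state $\spz{\chi}$ with $\val_\caI(\proj{\chi}) \ge \val(\caI) - \eps$ (pure states suffice since $\val(\caI)$ is a supremum over all $n$-qubit density matrices). For each $a,b \in \bits^n$, let
\[
\spz{\phi_{a,b}} := \Enc\!\big( \spz{a,b}_{\eotp} \otimes (X^a Z^b \spz{\chi})_{\edata} \big) \in \Hwitness,
\]
so that the uniform mixture $2^{-2n}\sum_{a,b}\proj{\phi_{a,b}}$ is exactly the state $\phi^{\mathsf{otp}}$ of \eqref{eq:def-otp} with $\proj{\phi}$ replaced by $\proj{\chi}$. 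Let $\spz{\Psi_{a,b}} \in \Hclock\otimes\Hstate$ be the history state of the decomposition \eqref{eq:Venc-seq} of $\Venc_\caI$ (with the partition $S_1,\dots,S_5$ of the ancilla) obtained from the initial state $\spz{\psi_0^{a,b}} = \spz{\phi_{a,b}}_{\witness}\otimes\spz{0^{\nanc}}_{\ancilla}$, and set $\rho_\Psi := 2^{-2n}\sum_{a,b}\proj{\Psi_{a,b}}$. Since $\lmin(\HamEnc(\caI)) \le \Tr\!\big[\HamEnc(\caI)\,\rho_\Psi\big]$, it suffices to bound the right-hand side.

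I would then evaluate $\rpz{\Psi_{a,b}}\HamEnc(\caI)\spz{\Psi_{a,b}}$ term by term. By the standard properties of the Feynman--Kitaev history Hamiltonian, $\spz{\Psi_{a,b}}$ lies in the kernel of $\Hamprop$ (it is the uniform superposition over the consistent computation path of \eqref{eq:Venc-seq}) and of $\Hamstab$ (its clock register is supported on unary strings $\spz{\unary(t)}$); moreover the only component of $\spz{\Psi_{a,b}}$ with $\clock(1)=\spz{0}$ is its $t=0$ term $\spz{\unary(0)}\otimes\spz{\psi_0^{a,b}}$, whose ancilla registers are all in $\spz{0}$, so $\Haminit$ annihilates $\spz{\Psi_{a,b}}$ as well. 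Hence only $\Hamout = (I - \proj{10^{k(N-1)}})_{\eancC(1),\echk}\otimes\proj{1}_{\clock(T)}$ survives, and as it picks out the $t=T$ term of $\spz{\Psi_{a,b}}$,
\[
\rpz{\Psi_{a,b}}\HamEnc(\caI)\spz{\Psi_{a,b}} = \tfrac{1}{T+1}\,\rpz{\psi_T^{a,b}}(I - \proj{10^{k(N-1)}})_{\eancC(1),\echk}\spz{\psi_T^{a,b}} = \tfrac{1}{T+1}\,p_{\mathrm{rej}}(a,b),
\]
where $p_{\mathrm{rej}}(a,b)$ is the probability that $\Venc_\caI$ rejects $\spz{\phi_{a,b}}$.

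The remaining (and main) step is to show $p_{\mathrm{rej}}(a,b) = 1 - \val_\caI(\proj{\chi})$ for every $a,b$, i.e., to trace through the honest execution of $\Venc_\caI$ on $\spz{\phi_{a,b}}$ — essentially the converse of \autoref{claim:Venc-acc}. The magic-state setup and all $\Enc$/transversal-gate steps produce exactly the intended encoded states, so after the encoded Hadamard and $\CNOT$ the register $\eidx$ holds $\Enc\big(\tfrac{1}{\sqrt m}\sum_{i\in[m]}\spz{i}\big)$; since every code block touched is a valid codeword, each syndrome measured by $\UcheckE^{(1)},\UcheckE^{(2)},\UcheckE^{(3)}$ equals $0$ in every branch, so $\eidx$ stays coherent and $\echk$ stays in $\spz{0}$. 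In the branch $\eidx = \Enc(i)$, the first stage of $\Enc(V_{C_i})$ transversally applies $\prod_{u\in S_i}Z^{b_u}X^{a_u}$ to the $S_i$-blocks of $\edata$, which cancels the one-time pad $\prod_{u}X^{a_u}Z^{b_u}$ on precisely those logical qubits (the Paulis on the untouched qubits are unitaries on $\overline{S_i}$, hence disappear under the partial trace $\Tr_{\overline{S_i}}$); thus the logical input to the fault-tolerant implementation of $C_i$ is exactly $\proj{\chi}_{S_i}$, and by correctness of fault-tolerant computation (consuming the encoded $\spz{\T}$ magic states) the decoded $\eancC(1)$ measures $1$ with probability $\Pr[C_i(\proj{\chi}_{S_i})=1]$. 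As the branches $\eidx=\Enc(i)$ are mutually orthogonal, the cross terms in the final measurement vanish and $\Venc_\caI$ accepts with probability $\tfrac1m\sum_{i\in[m]}\Pr[C_i(\proj{\chi}_{S_i})=1] = \val_\caI(\proj{\chi})$, so $p_{\mathrm{rej}}(a,b)=1-\val_\caI(\proj{\chi})$. Combining, $\Tr[\HamEnc(\caI)\rho_\Psi] = \tfrac{1}{T+1}(1-\val_\caI(\proj{\chi})) \le 1 - \val(\caI) + \eps$, and letting $\eps\to 0$ gives $\lmin(\HamEnc(\caI)) \le 1 - \val(\caI)$. The one genuine obstacle is this last, bookkeeping-heavy verification — the Pauli cancellation of the one-time pad on the queried block, the coherence of $\eidx$ through the syndrome checks, and invoking fault-tolerance correctness (including magic-state consumption) for the encoded $C_i$ — rather than the soft energy computation, which is routine; handling the mixedness of the honest witness is dispatched by working with the mixture $\rho_\Psi$ of pure history states $\spz{\Psi_{a,b}}$.
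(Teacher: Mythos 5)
Your proposal is correct and takes essentially the same route as the paper: build the Feynman--Kitaev history state of the honestly encoded (one-time-padded) near-optimal witness, observe that $\Hamprop$, $\Hamstab$, and $\Haminit$ annihilate it so only $\Hamout$ contributes, and identify that contribution with $\tfrac{1}{T+1}$ times the rejection probability of $\Venc_\caI$ on the honest witness, which equals $1-\val_\caI$ by correctness of the encoded verifier. Your treatment is in fact somewhat more careful than the paper's (the $\eps$-approximation of the supremum, the mixture $\rho_\Psi$ of pure history states over pads $(a,b)$, and the explicit converse of \autoref{claim:Venc-acc}), but these are refinements of the same argument rather than a different one.
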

\begin{proof}
	Let $\spz{\psi}$ be the quantum state such that $\val_\caI(\spz{\psi}) = \val(\caI)$. We let $\psi^{\otp}$ be as in $\Venc$, and let $\spz{\Psi}$ be the history state such that 
	\[
	\spz{\Psi} = \frac{1}{\sqrt{T+1}} \sum_{t=0}^{T} \spz{\unary(t)}_{\clock} \otimes \spz{\psi_t}_{\state},
	\]
	where $\spz{\psi_t} = \Venc_{[1\dotsc t]} \spz{\psi_{0}}$, and $\spz{\psi_0} = \spz{\phi}_{\witness} \otimes \spz{0^{\nanc}}_{\ancilla}$.
	
	Since it is a history state, we have
	\[
	\rpz{\Psi} \HamEnc \spz{\Psi} = \rpz{\Psi} \Hamout \spz{\Psi}.
	\]
	
	By a similar calculation as in the proof of~\autoref{lemma:enc-completeness}, we further have
	\[
	\rpz{\Psi} \Hamout \spz{\Psi} = \frac{1}{T} \cdot \rpz{\psi_T} (I - \proj{10^{k(N-1)}})_{\eancC(1),\echk} \spz{\psi_T}.
	\]
	
	Since $\rpz{\psi_T} (I - \proj{10^{k(N-1)}})_{\eancC(1),\echk} \spz{\psi_T}$ is the probability that $\Venc$ rejects the witness $\spz{\phi}$, we have
	\[
	\rpz{\Psi} \HamEnc \spz{\Psi} \le \frac{1}{T} \cdot (1 - \val(\caI)).
	\]
	
	The lemma follows from the fact that $T \ge 1$.
\end{proof}

%Now we are ready to prove the reduction part of \autoref{theo:history-sim}.

\subsection{The Encoded History Verifier $\VHenc$ for $(k,\gamma)$-$\History$-$\EncprojLH$}

We now define the problem $(k,\gamma)$-$\History$-$\EncprojLH$.

\begin{definition}[$(k,\gamma)$-$\History$-$\EncprojLH_{\alpha,\beta}$] For $\alpha,\beta \colon \N \to [0,1]$, the $(k,\gamma)$-$\History$-$\EncprojLH_{\alpha,\beta}$ problem is defined as follows: Given a Hamiltonian $\HamEnc = \HamEnc(\caI)$ for some $(k,\gamma)$-$\QSAT$ instance $\caI = (n,n,\{S_i\}_{i \in [n]},\{C_i\}_{i \in [n]})$, where $n$ is a power of $2$.\footnote{That is, here we assume that the $(k,\gamma)$-$\QSAT$ instance has the same number of qubits and local checks.} The goal is to distinguish between the following two cases:
	\begin{itemize}
		\item[(\textbf{Yes})] $\lmin(\HamEnc) \le \alpha(n)$ or 
		\item[(\textbf{No})] $\lmin(\HamEnc) \ge \beta(n)$.
	\end{itemize}
\end{definition}

Next, we introduce the encoded history verifier $\VHenc$ for $(k,\gamma)$-$\History$-$\EncprojLH$.

\begin{construction}{The encoded history verifier $\VHenc$ for $(k,\gamma)$-$\History$-$\EncprojLH$}
	\begin{itemize}
		\item (\textbf{Input.}) $\VHenc$ is given a $(k,\gamma)$-$\QSAT$ instance $\caI = (n,n,\{S_i\}_{i \in [n]},\{C_i\}_{i \in [n]})$, where $n$ is a power of $2$. Without loss of generality, we can assume that $|S_i| = k$ for every $i \in [n]$. Let $\HamEnc = \HamEnc(\caI)$.
		
		\item (\textbf{Witness.}) $\VHenc$ is given a witness $\spz{\Phi} \in \Hclock \otimes \Hstate$ that is supposed to be a history state of $\Venc$.
		
		$\VHenc$ operates as follows:
		
		\item It picks a term $J$ in $\HamEnc$  uniformly at random, and proceeds according to the following two cases:
		
		\begin{enumerate}
			\item If $J$ is not one of the $\Hamprop_{5,j}$'s or the $\Hamprop_{6,j}$'s, it measures $\{J,I-J\}$ and \textbf{rejects} if it sees $J$, and \textbf{accepts} otherwise.
			
			\item Otherwise, $J = \Hamprop_{t,j}$ for some $t \in \{5,6\}$ and $j \in [\ell^{(t)}]$. We write
			\[
			J = \sum_{i \in [n]} J_i \otimes \proj{\Enc(i)}_{\eidx},
			\]
			where $J_i = \Hamprop_{t,j,i}$, for each $i \in [n]$. 
			
			$\VHenc$ first measures $\eidx$ in the basis $\{ \Enc(i) \}_{i \in [n]}$ to obtain $i \in [n]$, and then measures $\{ J_i,I-J_i \}$. It \textbf{rejects} if it sees $J_i$ and \textbf{accepts} otherwise. (We can add additional states orthogonal to $\{ \Enc(i) \}_{i \in [n]}$ to make it a complete basis over qubits in $\eidx$. If those added orthogonal vectors are obtained as measurement outcomes, then $\VHenc$ \textbf{accepts}.)
		\end{enumerate}
	\end{itemize}
\end{construction}

\subsubsection{Analysis of $\VHenc$}

The following lemma shows that the rejecting probability of $\VHenc$ is proportional to the energy of the given state regarding $\HamEnc(\caI)$.

\begin{lemma}\label{lemma:enc-local}
	$\VHenc$ rejects $\spz{\Phi} \in \Hclock \otimes \Hstate$ with probability
	\[
	\frac{1}{M} \rpz{\Phi} \HamEnc \spz{\Phi}.
	\]
\end{lemma}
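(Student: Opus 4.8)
The plan is to read off the rejection probability of $\VHenc$ directly from the structure of $\HamEnc = \HamEnc(\caI)$ as a sum of $M$ terms. Recall that each of the $M$ terms $J$ appearing in $\HamEnc$ --- the propagation terms $\Hamprop_{i,j}$, the initialization terms $\Haminit_i$, the stabilizer terms $\Hamstab_t$, and $\Hamout$ --- satisfies $0 \le J \le I$ (for the propagation terms this is the standard fact that the Feynman--Kitaev terms have spectrum contained in $[0,1]$; the remaining terms are projectors), so that $\{J, I-J\}$ is a legitimate two-outcome measurement. Since $\VHenc$ first picks the term $J$ uniformly among the $M$ terms, it suffices by linearity to show that, conditioned on having picked a particular $J$, $\VHenc$ rejects $\spz{\Phi}$ with probability exactly $\rpz{\Phi} J \spz{\Phi}$; summing over the $M$ terms then yields
\[
\Pr[\VHenc \text{ rejects } \spz{\Phi}] \;=\; \frac{1}{M} \sum_{J} \rpz{\Phi} J \spz{\Phi} \;=\; \frac{1}{M}\, \rpz{\Phi} \HamEnc \spz{\Phi}.
\]

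I would then split according to the two cases in the definition of $\VHenc$. When $J$ is \emph{not} one of the $\Hamprop_{5,j}$'s or $\Hamprop_{6,j}$'s, $\VHenc$ performs the measurement $\{J, I-J\}$ and rejects exactly on outcome $J$, so by the Born rule it rejects with probability $\rpz{\Phi} J \spz{\Phi}$, as needed. The interesting case is $J = \Hamprop_{t,j}$ with $t \in \{5,6\}$. Here the key structural fact --- which comes from the way $\Venc_{5,j}$ and $\Venc_{6,j}$ were built in \autoref{sec:decompose-unitary} as unitaries controlled on the logical content of $\eidx$ --- is that
\[
J \;=\; \sum_{i \in [n]} J_i \otimes \proj{\Enc(i)}_{\eidx},
\]
where $J_i = \Hamprop_{t,j,i}$ satisfies $0 \le J_i \le I$ and acts only on registers disjoint from $\eidx$, and the $\proj{\Enc(i)}_{\eidx}$ are mutually orthogonal projectors on $\Heidx$. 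Now $\VHenc$ measures $\eidx$ in a complete basis extending $\{\Enc(i)\}_{i\in[n]}$: with probability $p_i = \rpz{\Phi}(I \otimes \proj{\Enc(i)}_{\eidx})\spz{\Phi}$ it observes $\Enc(i)$, leaving the renormalized post-measurement state $\spz{\Phi_i}$, and with the remaining probability it observes a vector orthogonal to every $\Enc(i)$ and accepts. Conditioned on outcome $\Enc(i)$ it measures $\{J_i, I-J_i\}$ and rejects with probability $\rpz{\Phi_i} J_i \spz{\Phi_i}$; using the form of the post-measurement state and cyclicity of the trace, $p_i \, \rpz{\Phi_i} J_i \spz{\Phi_i} = \rpz{\Phi}\big(J_i \otimes \proj{\Enc(i)}_{\eidx}\big)\spz{\Phi}$, hence
\[
\Pr[\VHenc \text{ rejects} \mid J] \;=\; \sum_{i\in[n]} p_i \, \rpz{\Phi_i} J_i \spz{\Phi_i} \;=\; \sum_{i\in[n]} \rpz{\Phi}\big(J_i \otimes \proj{\Enc(i)}_{\eidx}\big)\spz{\Phi} \;=\; \rpz{\Phi} J \spz{\Phi}.
\]

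The one place that needs care is establishing the block-diagonal identity $\Hamprop_{t,j} = \sum_{i} \Hamprop_{t,j,i} \otimes \proj{\Enc(i)}_{\eidx}$ \emph{exactly} --- rather than only on the image of $\sum_i \proj{\Enc(i)}_{\eidx}$: one must trace through \autoref{defi:history-Hamiltonians} applied to the controlled sub-unitaries $\Venc_{5,j}$ and $\Venc_{6,j}$ and confirm that the propagation constraint carries no weight on the orthogonal complement of $\mathrm{span}\{\Enc(i)\}_{i \in [n]}$ inside $\Heidx$, so that the ``accept on an orthogonal-complement outcome'' rule of $\VHenc$ is consistent with this decomposition. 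This is the step I expect to be the main obstacle; everything else --- the Born rule for the two measurements, the post-measurement-state identity, and the final summation over the $M$ terms --- is routine.
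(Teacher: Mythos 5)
Your proposal is correct and follows essentially the same route as the paper's proof: the same case split on whether $J$ is one of the $\Hamprop_{5,j}$'s or $\Hamprop_{6,j}$'s, the Born rule in the first case, and in the second case the same decomposition of $\spz{\Phi}$ into components along the $\Enc(i)$'s plus an orthogonal remainder, with the conditional rejection probabilities summing to $\rpz{\Phi} J \spz{\Phi}$ and then averaging over the $M$ terms. The one step you flag as a potential obstacle---the exact identity $J = \sum_{i} J_i \otimes \proj{\Enc(i)}_{\eidx}$ rather than an identity only on the encoded subspace---is not argued in the paper either; it is in effect taken as the definition of the terms $\Hamprop_{5,j}$ and $\Hamprop_{6,j}$ arising from the controlled sub-unitaries (so that these terms vanish on the orthogonal complement of the encoded index states), which is precisely what makes the ``accept on orthogonal outcomes'' rule yield equality rather than a one-sided bound.
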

\begin{proof}
	From the definition of $\VHenc$, when $J$ is not one of the $\Hamprop_{5,j}$'s or the $\Hamprop_{6,j}$'s, $\VHenc$ rejects with probability $\rpz{\Phi} J \spz{\Phi}$.
	
	Otherwise, we can write
	\[
	J = \sum_{i \in [n]} J_i \otimes \proj{\Enc(i)}_{\eidx},
	\]
	
	Let $\other = (\clock \cup \state) \setminus \eidx .$ 
	We write
	\[
	\spz{\Phi} = \sum_{i=1}^{m} \sqrt{p_i} \spz{\eta_i}_{\other} \spz{\Enc(i)}_{\eidx} + \sqrt{q} \spz{\Psi}_{\other,\eidx}.
	\]
	
	One can see in this case, $\VHenc$ rejects with probability
	\[
	\sum_{i \in [n]} p_i \cdot \rpz{\eta_i} J_i \spz{\eta_i} = \rpz{\Phi} J \spz{\Phi}.
	\]
	
	Hence, overall $\VHenc$ rejects with probability
	\[
	\frac{1}{M} \rpz{\Phi} \HamEnc \spz{\Phi}. \qedhere
	\]
\end{proof}

\subsubsection{$\VHenc$ is Simulable}

\begin{theorem}\label{theo:sim}
	For every $k,\gamma \in \N$ and every $\alpha,\beta \colon \N \to [0,1]$ such that $\alpha$ is negligible, it holds that:
	\[
	\text{$(k,\gamma)$-$\History$-$\EncprojLH_{\alpha,\beta}$ $\subseteq$ $(O(\log n),2)$-$\SimQMA_{1-\beta(n)/M,1-\alpha(n)/M}$},
	\]
	where $M = \aM \cdot k + \bM \cdot \gamma + \cM$.
\end{theorem}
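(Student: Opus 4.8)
The plan is to show that the encoded history verifier $\VHenc$, together with a suitable ground-state witness, realizes the claimed $(O(\log n),2)$-$\SimQMA$ upper bound. First I would verify that $\VHenc$ obeys the syntactic template of \autoref{defi:simQMA}. When the term $J$ sampled by $\VHenc$ is none of the $\Hamprop_{5,j}$'s or $\Hamprop_{6,j}$'s, each such term --- the propagation terms from phases $1,2,3,4,7$, every initialization term, every stabilizer term, and the output term --- is supported on $O(\log n)$ qubits, so $\VHenc$ uses a single first-round query $S_1=\supp(J)$ of size $O(\log n)$, measures $\{J,I-J\}$, and performs no second query. When $J=\Hamprop_{t,j}$ with $t\in\{5,6\}$, the first round queries the $O(\log n)$-qubit register $\eidx$ and measures it in the codeword basis $\{\spz{\Enc(i)}\}$ (folding every non-codeword outcome into an accepting outcome); given the outcome $i$, the second round queries $S_2=\supp(J_i)$, which consists of $O(1)$ physical qubits disjoint from $\eidx$, and measures $\{J_i,I-J_i\}$. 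Thus $\VHenc$ is $2$-adaptive with disjoint $O(\log n)$-qubit query sets, and its accept/reject decision is a fixed function of the outcomes.

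Soundness and the quantitative half of completeness are then immediate from \autoref{lemma:enc-local}, which gives that $\VHenc$ accepts a state $\spz{\Phi}$ with probability exactly $1-\tfrac{1}{M}\rpz{\Phi}\HamEnc\spz{\Phi}$. Hence on a No instance, where $\lmin(\HamEnc)\ge\beta(n)$, every state is accepted with probability at most $1-\beta(n)/M$; and on a Yes instance, where $\lmin(\HamEnc)\le\alpha(n)$, a ground state $\spz{\Phi}_{\mathrm{gs}}$ is accepted with probability at least $1-\alpha(n)/M$. I would take $\spz{\Phi}_{\mathrm{gs}}$ as the simulable witness. By \autoref{lemma:close-to-history} it lies within trace distance $\poly(k,\gamma)\cdot\sqrt{\alpha(n)}$ --- which is negligible, since $k,\gamma$ are constants and $\alpha$ is negligible --- of an honest history state $\spz{\eta}$ of $\Venc$ on some $\caI$-witness. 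So it suffices to prove that for every query set $S_{\le t}$ that $\VHenc$ can produce, the reduced matrix $(\spz{\eta})_{S_{\le t}}$ equals a density matrix that can be computed from $\caI$ and $S_{\le t}$ alone (in particular, independently of the underlying witness) in $\poly(n,2^{|S_{\le t}|})=\poly(n)$ time: then $(\spz{\Phi}_{\mathrm{gs}})_{S_{\le t}}$ is negligibly close to that same matrix, which is exactly the simulability clause of \autoref{defi:simQMA}.

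The heart of the proof, and where I expect the real work, is this last claim. Using orthogonality of the clock states $\spz{\unary(t)}$, I would write $(\spz{\eta})_{S_{\le t}}$ as a sum, over pairs $(t',t'')$ of time steps whose clock strings agree outside the $O(1)$ queried clock qubits --- which forces $|t'-t''|=O(1)$ --- of a fixed clock-basis label tensored with $\big(\spz{\psi_{t'}}\rpz{\psi_{t''}}\big)_{S_{\le t}\cap\state}$, where $\spz{\psi_{s}}=\Venc_{[1\dotsc s]}\spz{\psi_{0}}$ and $\spz{\psi_{0}}$ carries the encoded witness under the quantum one-time pad of \eqref{eq:def-otp}. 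The one-time pad Pauli-twirls the logical content of $\eotp$ and $\edata$, so these cross-terms are determined by witness-independent reduced states; and since each sub-unitary $\Venc_{i,j}$ is either transversal/fault-tolerant on the code blocks or a codeword-controlled gate, \autoref{lemma:n-copy-sim} (together with \autoref{cor:n-copy-sim-no-op} for the time-$0$ piece) computes $\big(\spz{\psi_{t'}}\rpz{\psi_{t''}}\big)_{S_{\le t}\cap\state}$ whenever $S_{\le t}$ meets each code block in a constant number of qubits. The two query shapes that fall outside that per-block bound --- the full $\eidx$ register for the special terms, and the constant-size ancilla registers for the initialization terms --- would be handled directly by tracking the witness-independent trajectory those registers follow inside $\Venc$: $\eidx$ and $\emidx$ are never touched by the witness (zeros, then $\Enc$, then transversal $\Had$ on $\eidx$, then transversal $\CNOT$ into $\emidx$, then used only as controls), the syndrome register $\echk$ stays at the trivial syndrome of the valid encoded witness, and the magic-state and work-ancilla registers return to fixed states (which may require first putting each $C_i$ in a form that resets its work qubits). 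The main obstacle is the uniform bookkeeping: checking, across all term types and both rounds simultaneously, exactly which code blocks each query set meets and in how many qubits, and, for the codeword-controlled sub-unitaries $\Venc_{5,j},\Venc_{6,j}$, first diagonalizing in the $\eidx$-codeword basis so that \autoref{lemma:n-copy-sim} applies to the controlled part.
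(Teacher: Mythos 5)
Your syntactic analysis of $\VHenc$ and the completeness/soundness bounds via \autoref{lemma:enc-local} coincide with the paper's argument. The gap is in the simulability step, and it comes from your choice of witness. You take the ground state of $\HamEnc$, invoke \autoref{lemma:close-to-history} to get a nearby history state $\spz{\eta}$ ``on some $\caI$-witness,'' and then claim it suffices to show that $(\eta)_{S_{\le t}}$ is computable from $\caI$ and $S_{\le t}$ alone, arguing via the one-time pad of \eqref{eq:def-otp} and \autoref{lemma:n-copy-sim}. But \autoref{lemma:close-to-history} only produces a history state whose initial state is $\spz{\phi}_{\witness}\otimes\spz{0}_{\ancilla}$ for an \emph{arbitrary} state $\spz{\phi}$ of the $\nwit$-qubit witness register: nothing forces $\spz{\phi}$ to lie in the codespace, and nothing whatsoever forces it to carry the quantum one-time pad. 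Your Pauli-twirl/witness-independence argument is a property of the specific honest witness $\phi^{\otp}$, not of history states in general; for a history state over, say, an unencoded witness the local views on $\edata$ at early clock times are simply the witness's own reduced states, and even for a correctly encoded but unpadded (or fixed-pad) witness the paper's simulation of the check-encoding phase breaks down, because there the simulator needs the \emph{entire} $\edata$/$\eotp$ block interacting with $\echk$ to be an encoding of the maximally mixed qubit --- which is exactly what the pad provides and what \autoref{lemma:n-copy-sim} (which only covers transversal gates of $\caG$, not partial syndrome-extraction circuits) cannot replace. So ``ground state $\approx$ some history state'' does not yield the simulability clause of \autoref{defi:simQMA}, and since that clause must hold for the \emph{same} witness used for completeness, the proposal as written does not establish the theorem.

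The paper sidesteps this by certifying a different witness directly: the simulable witness is the honest history state of $\Venc$ built from an optimal $\QSAT$ witness $\phi$ (with $\val_\caI(\phi)=\val(\caI)$) one-time padded as in \eqref{eq:def-otp}. Its acceptance probability is read off from \autoref{lemma:enc-local} together with \autoref{lemma:enc-completeness} (its energy is at most $(1-\val(\caI))/T$), and its local views on $\supp(J)$, and on $\supp(J_i)\cup\eidx$ for the $\Hamprop_{5,j}$/$\Hamprop_{6,j}$ terms, are computable up to error $1-\val(\caI)=\negl(n)$ by \autoref{lemma:all-ham-term-sim}, whose proof is essentially the phase-by-phase case analysis you sketch (set-up/encoding/Hadamard/CNOT, check-encoding, encoded $U_{\sf test}$, decoding) carried out for this padded, correctly encoded history state. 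If you swap your witness for that state, the remainder of your outline goes through; with the ground state it does not.
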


We say that two states $\phi$ and $\psi$ are perfectly distinguishable if $F(\phi,\psi) = 0$.\footnote{Equivalently, there is a quantum algorithm that always outputs $0$ (resp. $1$) when given $\phi$ (resp. $\psi$).} We will need the following standard fact.

\begin{lemma}\label{lemma:cond-perfect-dis}
	For two perfectly distinguishable states $\sigma$ and $\rho$, let $\sigma = \sum_{i \in [n]} \alpha_i \proj{\phi_i}$ and $\rho = \sum_{j \in [m]} \beta_j \proj{\psi_j}$ be any decomposition of $\sigma$ and $\rho$ such that $\alpha_i,\beta_j \in (0,1]$, $\{\phi_i\}$ and $\{\psi_j\}$ are both orthonormal, we have $\innerprod{\phi_i}{\psi_j} = 0$ for every $i,j \in [n] \times [m]$.
\end{lemma}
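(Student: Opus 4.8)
The plan is to deduce the statement from the standard fact that perfectly distinguishable states have orthogonal supports, together with the trivial observation that $\spz{\phi_i}$ lies in the support of $\sigma$ (since $\alpha_i>0$) and $\spz{\psi_j}$ lies in the support of $\rho$ (since $\beta_j>0$), so that orthogonality of the supports immediately forces $\innerprod{\phi_i}{\psi_j}=0$.

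Concretely, I would first recall the Uhlmann formula $F(\sigma,\rho)=\|\sqrt{\sigma}\sqrt{\rho}\|_1$, so the hypothesis $F(\sigma,\rho)=0$ yields the operator identity $\sqrt{\sigma}\sqrt{\rho}=0$. Next, observe that since $\{\phi_i\}$ is an orthonormal eigenbasis of $\sigma$ on its support with eigenvalue $\alpha_i$, it is also an eigenbasis of $\sqrt{\sigma}$ with eigenvalue $\sqrt{\alpha_i}$; likewise $\sqrt{\rho}\spz{\psi_j}=\sqrt{\beta_j}\spz{\psi_j}$. Applying $\sqrt{\sigma}\sqrt{\rho}=0$ to $\spz{\psi_j}$ gives $\sqrt{\beta_j}\,\sqrt{\sigma}\spz{\psi_j}=0$, and $\beta_j>0$ forces $\sqrt{\sigma}\spz{\psi_j}=0$. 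Pairing on the left with $\rpz{\phi_i}$ and using $\rpz{\phi_i}\sqrt{\sigma}=\sqrt{\alpha_i}\rpz{\phi_i}$ then gives $\sqrt{\alpha_i}\,\innerprod{\phi_i}{\psi_j}=0$, and since $\alpha_i>0$ we conclude $\innerprod{\phi_i}{\psi_j}=0$.

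If one prefers to avoid the fidelity formula, there is an equivalent route directly from the footnote's characterization of perfect distinguishability: there is a POVM element $0\preceq E\preceq I$ with $\Tr[E\sigma]=1$ and $\Tr[E\rho]=0$. Positivity then forces $(I-E)\sigma=0$ and $E\rho=0$, so $\supp(\sigma)\subseteq\mathrm{range}(E)$ while $\supp(\rho)\subseteq\ker(E)$, and $\mathrm{range}(E)\perp\ker(E)$ since $E$ is Hermitian; again $\spz{\phi_i}\in\supp(\sigma)$ and $\spz{\psi_j}\in\supp(\rho)$ give the claim.

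I do not expect a genuine obstacle here; the only point requiring (minor) care is the elementary fact that for positive semidefinite operators $A,B$ with $\Tr[AB]=0$ one has $AB=0$, which follows from $\Tr[AB]=\|B^{1/2}A^{1/2}\|_F^2$. This fact is what underlies both "$\|\sqrt{\sigma}\sqrt{\rho}\|_1=0\Rightarrow\sqrt{\sigma}\sqrt{\rho}=0$" and the deduction of $(I-E)\sigma=0$, $E\rho=0$ in the alternative argument, so a single line recording it suffices.
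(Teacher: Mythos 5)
Your proposal is correct. Your main route is essentially the paper's argument: both start from $F(\sigma,\rho)=0$, upgrade it to the operator identity that the product of the square roots vanishes (the paper phrases this as $\sqrt{\rho}\,\sigma\sqrt{\rho}=0$ via $\tr\sqrt{\sqrt{\rho}\sigma\sqrt{\rho}}=0$, you as $\sqrt{\sigma}\sqrt{\rho}=0$ via the Uhlmann expression $\|\sqrt{\sigma}\sqrt{\rho}\|_1$; these are the same fact), and then extract the cross inner products. The only real difference is the finishing step: the paper expands $\tr(\sqrt{\rho}\,\sigma\sqrt{\rho})$ in the given decompositions and uses that a sum of the nonnegative terms $\beta_i\alpha_j\,\lvert\innerprod{\psi_i}{\phi_j}\rvert^2$ can vanish only termwise, whereas you observe that the orthonormal decompositions with strictly positive weights are spectral decompositions, so $\sqrt{\sigma}\sqrt{\rho}=0$ applied to $\spz{\psi_j}$ and paired with $\rpz{\phi_i}$ gives the claim directly; both are valid, and your version makes the ``orthogonal supports'' picture explicit. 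Your alternative POVM route (a measurement element $E$ with $\Tr[E\sigma]=1$, $\Tr[E\rho]=0$, hence $(I-E)\sigma=0$ and $E\rho=0$, hence orthogonal supports) is a genuinely different and somewhat more elementary argument that bypasses the fidelity formula entirely, relying only on the fact that $\Tr[AB]=0$ for positive semidefinite $A,B$ forces $AB=0$ --- the same fact the paper implicitly uses, so it is consistent with the paper's footnoted characterization of perfect distinguishability. No gaps.
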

\begin{proof}
	Recall that $F(\rho,\sigma) = \left(\tr \sqrt{\sqrt{\rho} \sigma \sqrt{\rho}}\right)^2$. Since $\sigma$ and $\rho$ are perfectly distinguishable, we have $\tr \sqrt{\sqrt{\rho} \sigma \sqrt{\rho}} = 0$. Note that (1) for a positive matrix $A\ge 0$, $\tr A = 0 \Leftrightarrow A = 0$ and (2) $\sqrt{\rho} \sigma \sqrt{\rho}\ge 0$ since both $\sigma,\rho$ are positive, we further have $ \sqrt{\sqrt{\rho} \sigma \sqrt{\rho}} = 0$, and consequently $\sqrt{\rho} \sigma \sqrt{\rho} = 0$ and hence $\tr(\sqrt{\rho} \sigma \sqrt{\rho}) = 0$.
	
	Using the decomposition of $\sigma$ and $\rho$, using $\tr(\sqrt{\rho} \sigma \sqrt{\rho})=0$, we have
	\begin{align*}
	\tr(\sqrt{\rho} \sigma \sqrt{\rho}) &= \tr\left[ \sum_{i \in [m]} \sqrt{\beta_i} \proj{\psi_i} \cdot \sum_{j \in [n]} \alpha_j \proj{\phi_j} \cdot \sum_{k \in [m]} \sqrt{\beta_k} \proj{\psi_k} \right]\\
	&= \sum_{i,k \in [m]} \tr\left[\spz{\psi_i} \rpz{\psi_k}\right] \cdot \sum_{j \in [n]} \left( \sqrt{\beta_i \beta_k} \cdot \alpha_j \cdot \innerprod{\psi_i}{\phi_j} \cdot \innerprod{\phi_j}{\psi_k} \right).\\
	&= \sum_{i \in [m]} \sum_{j \in [n]} \left( \beta_i \cdot \alpha_j \cdot \innerprod{\psi_i}{\phi_j} \cdot \innerprod{\phi_j}{\psi_i} \right).\\
	&= \sum_{i \in [m]} \sum_{j \in [n]} \left( \beta_i \cdot \alpha_j \cdot \|\innerprod{\psi_i}{\phi_j}\|^2\right)\\
	&= 0
	\end{align*}
	
	Since all the $\alpha_i$ and $\beta_j$ are positive, it follows that $\innerprod{\phi_i}{\psi_j} = 0$ for every $i,j \in [n] \times [m]$, which completes the proof.
	
\end{proof}

We next prove the following lemma. 

\begin{lemma}\label{lemma:ECC-fact}
	For two $n$-qubit pure states $\spz{\phi}$ and $\spz{\psi}$, and a subset $S \subseteq [n]$, if $\phi_{\bar{S}}$ and $\psi_{\bar{S}}$ are perfectly distinguishable, then $\left(\spz{\phi} \rpz{\psi}\right)_{S} = 0$.
\end{lemma}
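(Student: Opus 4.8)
The plan is to reduce this to \autoref{lemma:cond-perfect-dis} via the Schmidt decomposition across the cut $(S,\bar S)$. First I would write
\[
\spz{\phi} = \sum_{i} \sqrt{\alpha_i}\, \spz{\phi_i^S}_S \otimes \spz{\phi_i^{\bar S}}_{\bar S}, \qquad
\spz{\psi} = \sum_{j} \sqrt{\beta_j}\, \spz{\psi_j^S}_S \otimes \spz{\psi_j^{\bar S}}_{\bar S},
\]
where $\alpha_i,\beta_j > 0$ and $\{\spz{\phi_i^S}\}$, $\{\spz{\phi_i^{\bar S}}\}$, $\{\spz{\psi_j^S}\}$, $\{\spz{\psi_j^{\bar S}}\}$ are orthonormal families. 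Taking the partial trace over $S$ gives the spectral decompositions $\phi_{\bar S} = \sum_i \alpha_i \proj{\phi_i^{\bar S}}$ and $\psi_{\bar S} = \sum_j \beta_j \proj{\psi_j^{\bar S}}$, which are exactly decompositions of the form required by \autoref{lemma:cond-perfect-dis} (orthonormal vectors with strictly positive weights).

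Next, since $\phi_{\bar S}$ and $\psi_{\bar S}$ are perfectly distinguishable by hypothesis, \autoref{lemma:cond-perfect-dis} yields $\innerprod{\phi_i^{\bar S}}{\psi_j^{\bar S}} = 0$ for all $i,j$. Then I would expand
\[
\spz{\phi}\rpz{\psi} = \sum_{i,j} \sqrt{\alpha_i \beta_j}\, \big(\spz{\phi_i^S}\rpz{\psi_j^S}\big)_S \otimes \big(\spz{\phi_i^{\bar S}}\rpz{\psi_j^{\bar S}}\big)_{\bar S},
\]
and take the partial trace over $\bar S$. Each cross term contributes a factor $\Tr_{\bar S}\!\big[\spz{\phi_i^{\bar S}}\rpz{\psi_j^{\bar S}}\big] = \innerprod{\psi_j^{\bar S}}{\phi_i^{\bar S}} = 0$, so $\big(\spz{\phi}\rpz{\psi}\big)_S = 0$, as claimed.

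There is no real obstacle here; the only point requiring a moment's care is that the Schmidt decomposition automatically supplies orthonormality and lets us discard zero Schmidt coefficients, so the hypotheses of \autoref{lemma:cond-perfect-dis} are met verbatim and no limiting argument is needed.
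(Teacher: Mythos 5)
Your proposal is correct and matches the paper's own proof essentially verbatim: both take the Schmidt decomposition across the $(S,\bar S)$ cut, invoke \autoref{lemma:cond-perfect-dis} to get $\innerprod{\phi_i^{\bar S}}{\psi_j^{\bar S}} = 0$, and observe that the partial trace over $\bar S$ annihilates every cross term. The only difference is notational (you write the Schmidt coefficients as $\sqrt{\alpha_i}$ where the paper absorbs the square root into $\alpha_i$).
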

\begin{proof}
	We write 
	\[
	\spz{\phi} = \sum_{i} \alpha_i \cdot \spz{\phi_i^{\sf A}} \otimes \spz{\phi_i^{\sf B}},
	\]
	where $\spz{\phi_i^{\sf A}}$ and $\spz{\phi_i^{\sf B}}$ are pure quantum states on $S$ and $\bar{S}$, respectively, and $\alpha_i \in (0,1]$. Moreover, $\{ \spz{\phi_i^{\sf A}} \}$ and $\{ \spz{\phi_i^{\sf B}} \}$ are both orthonormal.
	
	Similarly, we write
	\[
	\spz{\psi} = \sum_{j} \beta_j \cdot \spz{\psi_j^{\sf A}} \otimes \spz{\psi_j^{\sf B}},
	\]
	where $\spz{\psi_i^{\sf A}}$ and $\spz{\psi_i^{\sf B}}$ are pure quantum states on $S$ and $\bar{S}$, respectively, and $\beta_i \in (0,1]$. Moreover, $\{ \spz{\psi_i^{\sf A}} \}$ and $\{ \spz{\psi_i^{\sf B}} \}$ are both orthonormal.
	
	Note that $\phi_{\bar{S}} = \sum_{i} \alpha_i^2 \cdot \phi_i^{\sf B}$ and $\psi_{\bar{S}} = \sum_{j} \beta_j^2 \cdot \psi_j^{\sf B}$. Since they are perfectly distinguishable, apply~\autoref{lemma:cond-perfect-dis}, it follows that for every $i,j$, $\innerprod{\phi_i^{\sf B}}{\psi_j^{\sf B}} = 0$.
	
	Hence,
	\[
	\left(\spz{\phi} \rpz{\psi}\right)_{S} = \sum_{i,j} \alpha_i\beta_j \cdot \spz{\phi_i^{\sf A}}\rpz{\psi_j^{\sf A}} \cdot \innerprod{\phi_i^{\sf B}}{\psi_j^{\sf B}} = 0. \qedhere
	\]
\end{proof}

\begin{corollary}\label{cor:Enc-ab-vanish}
	Let $\reg$ be an $n \cdot N$-qubit register, and $S \subseteq \reg$ be such that $S$ has at most $(D-1)/2$ intersections per block with $\reg$. For every two distinct string $a,b \in \bits^n$, we have
	\[
	\Enc(\spz{a}\rpz{b})_S = 0.
	\]
\end{corollary}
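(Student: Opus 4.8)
The plan is to reduce everything to \autoref{lemma:ECC-fact}. Since $a \ne b$, fix a coordinate $i^\star \in [n]$ with $a_{i^\star} \ne b_{i^\star}$. Recalling the overloaded notation $\Enc(\spz{a}) = \bigotimes_{i \in [n]}\Enc(\spz{a_i})$ (and similarly for $b$), the states $\spz{\Enc(a)}$ and $\spz{\Enc(b)}$ are pure states on the $n \cdot N$ qubits of $\reg$, product across the $n$ blocks, and $\Enc(\spz{a}\rpz{b})_S = \big(\spz{\Enc(a)}\rpz{\Enc(b)}\big)_S$. Applying \autoref{lemma:ECC-fact} with the ambient register $\reg$ playing the role of $[n]$ there, it suffices to prove that $\Enc(a)_{\bar S}$ and $\Enc(b)_{\bar S}$ are perfectly distinguishable, where $\bar S = \reg \setminus S$.

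The first step is to handle the single block $i^\star$. Set $T = S \cap \reg\{i^\star\}$, so $|T| \le (D-1)/2$ by hypothesis. The channel that discards the qubits in $T$ and re-prepares them in $\spz{0^{|T|}}$ is a quantum operation supported on $(D-1)/2$ qubits, so the error-correction property of the $[[N,1,D]]$-QECC $\caC$ yields a recovery channel $\caR$ with $\caR\big(\Enc(\spz{c})_{\reg\{i^\star\}\setminus T}\big) = \proj{c}$ for each single-qubit logical basis state $c \in \bits$ (simply re-attach the discarded $\spz{0^{|T|}}$ and apply the code's recovery map). Since fidelity is non-decreasing under quantum channels, $F\big(\Enc(\spz{a_{i^\star}})_{\reg\{i^\star\}\setminus T},\, \Enc(\spz{b_{i^\star}})_{\reg\{i^\star\}\setminus T}\big) \le F(\proj{a_{i^\star}}, \proj{b_{i^\star}}) = 0$, so these two block-$i^\star$ reduced states are perfectly distinguishable.

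Finally I would lift distinguishability from block $i^\star$ to all of $\bar S$. Because $\bar S$ is partitioned by the blocks, $\bar S = \bigsqcup_{i \in [n]} (\reg\{i\}\setminus S)$ with $\reg\{i^\star\}\setminus S = \reg\{i^\star\}\setminus T$, and because $\Enc(\spz a)$ and $\Enc(\spz b)$ are product states across blocks, we get $\Enc(a)_{\bar S} = \bigotimes_{i \in [n]}\Enc(\spz{a_i})_{\reg\{i\}\setminus S}$ and likewise for $b$. Fidelity is multiplicative over tensor products, so the $i^\star$-factor having fidelity $0$ forces $F(\Enc(a)_{\bar S}, \Enc(b)_{\bar S}) = 0$, i.e. the two reduced states are perfectly distinguishable. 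Invoking \autoref{lemma:ECC-fact} then gives $\big(\spz{\Enc(a)}\rpz{\Enc(b)}\big)_S = 0$, which is the claim.

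I do not anticipate a real obstacle; the only point needing care is matching the code's guarantee (stated for arbitrary operations on $(D-1)/2$ qubits) by using a discard-and-reprepare channel on $T$ rather than a literal partial trace, so that the number of affected qubits is exactly $|T| \le (D-1)/2$.
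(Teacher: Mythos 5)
Your proof is correct and is essentially the paper's argument: both rest on \autoref{lemma:ECC-fact} together with the fact that the logical bit of the block where $a$ and $b$ differ can be decoded from the qubits outside $S$ (since $|S \cap \reg\{i^\star\}| \le (D-1)/2$), exploiting the product structure of $\Enc$ across blocks. The only (cosmetic) difference is that the paper first factorizes $\Enc(\spz{a}\rpz{b})_S$ blockwise and applies \autoref{lemma:ECC-fact} to the single differing block, whereas you apply the lemma once globally and instead factorize the fidelity; your explicit erasure-and-recovery/data-processing justification of perfect distinguishability is a spelled-out version of the paper's ``one can decode $a_i$ or $b_i$'' remark.
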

\begin{proof}
	By the definition of $\Enc$, we have
	\[
	\Enc(\spz{a}\rpz{b})_S = \bigotimes_{i \in [n]} \Enc(\spz{a_i} \rpz{b_i})_{S \cap \reg\{i\}}
	\]
	
	%Let $S \subseteq [N]$ be such that $|S| \le (D-1)/2$, and let $\spz{a},\spz{b}$ be two orthogonal single-qubit state, we have
	%\[
	%\Enc(\spz{a}\rpz{b})_{S} = 0.
	%\]
	
	Since $a \ne b$, there exists $i \in [n]$ such that $a_i = b_i$. We now fix such an $i$. We note that $\Enc(\spz{a_i})_{\bar{S} \cap \reg\{i\}}$ and $\Enc(\spz{b})_{\bar{S} \cap \reg\{i\}}$ are perfectly distinguishable (since one can decode $a_i$ or $b_i$ from them). Hence, it follows that $\Enc(\spz{a_i}\rpz{b_i})_{S \cap \reg\{i\}} = 0$ by~\autoref{lemma:ECC-fact}, which further implies $\Enc(\spz{a}\rpz{b})_S = 0 $.
\end{proof}

\begin{lemma}
	Let $k,\gamma \in \N$ be two constants, $n$ be a power of $2$, $\caI = (n,n,\{S_i\}_{i \in [n]},\{C_i\}_{i \in [n]})$ be a $(k,\gamma)$-$\QSAT$ instance, $\HamEnc = \HamEnc(\caI)$, and $\alpha = 1 - \val(\caI)$. Let $\spz{\phi}$ be an $n$-qubit pure state such that $\val_\caI(\phi) = \val(\caI)$, and let $\spz{\Phi} \in \Hclock \otimes \Hstate$ be the history state for $\Venc$ given the $3 \cdot n\cdot N$-qubit witness $\phi^{\otp}$ that is defined according to~\eqref{eq:def-otp}. For every $t \in \zeroTon{T}$, and for any subset $S \subseteq \clock \cup \state$ such that the following holds
	\begin{itemize}		
		\item $S$ has at most $10$ intersections per block with $\edata,\eotp, \emidx,\eancC$, and $\eT$.
	\end{itemize}
	
	Let $\spz{\phi_t} = \Venc_{[1,t]} \Enc(\spz{\phi})_{\edata} \otimes \spz{0^{|\ancilla|}}_{\ancilla}$, one can compute an $|S|$-qubit state $\rho$ such that
	\[
	\left\|\left(\phi_t \right)_{S} - \rho\right\|_1 \le \alpha
	\]
	in $\poly(n,2^{|S|})$ time.
\end{lemma}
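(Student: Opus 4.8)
The plan is to follow $\phi_t$ register by register and show that its reduction to $S$ is, up to trace distance $\alpha$, a classically computable tensor product of pieces, none of which depends on $\spz\phi$. Throughout, read $\phi_t$ as the $t$-th slice of the history state (so it also carries the clock in state $\proj{\unary(t)}$), and write $\mathrm{rest}$ for the union of the registers $\clock,\eotp,\edata,\eancC,\eT,\echk$, i.e.\ everything except $\eidx$ and $\emidx$. After the encoded-$\CNOT$ step the pair $\eidx,\emidx$ carries $\Enc\bigl(\tfrac1{\sqrt n}\sum_i\spz i\spz i\bigr)$ and is never touched again: phases $5$ and $6$ only read $\eidx$ to control $U^{\Chk}$ and $\Enc(V_{C_i})$. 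Writing $\phi_t$ as a mixture over the one-time-pad strings $a,b$ of \eqref{eq:def-otp} of pure states, each such pure state is $\tfrac1{\sqrt n}\sum_i\spz{\xi^{a,b}_i}_{\mathrm{rest}}\otimes\Enc(\spz i)_{\eidx}\otimes\Enc(\spz i)_{\emidx}$. Reducing to $S$, a cross term with $i\ne i'$ carries $\Enc(\spz i\rpz{i'})_{S\cap\emidx}=\bigotimes_j\Enc(\spz{i_j}\rpz{i'_j})_{S\cap\emidx\{j\}}$; since some coordinate differs and $|S\cap\emidx\{j\}|\le 10<(D-1)/2$, this factor vanishes by \autoref{cor:Enc-ab-vanish} — note that only few intersections per $\emidx$-block are used, so $S\cap\eidx$ may be arbitrary. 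Hence $(\phi_t)_S=\tfrac1n\sum_{i\in[n]}(\tau_i)_{S\cap\mathrm{rest}}\otimes\Enc(\proj i)_{S\cap\eidx}\otimes\Enc(\proj i)_{S\cap\emidx}$, where $\tau_i$ is the $(a,b)$-averaged state on $\mathrm{rest}$ conditioned on $\eidx=i$.

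I would then handle the three factors separately. The factor $\Enc(\proj i)_{S\cap\eidx}=\bigotimes_j\Enc(\proj{i_j})_{S\cap\eidx\{j\}}$ is written down and traced out directly — a $\poly(n)$-dimensional object since $|S\cap\eidx|=O(\log n)$ — and $\Enc(\proj i)_{S\cap\emidx}$ is $\Enc$ of a fixed state on a few-per-block set, hence computed by $\Sim_{\caC^{\otimes n}}$ and independent of that state (\autoref{cor:n-copy-sim-no-op}). For $\tau_i$, its $\clock$- and $\echk$-parts are $\proj{\unary(t)}$ and $\proj 0$ (the phase-$5$ syndrome checks act on the correctly encoded witness $\phi^{\otp}$, hence return $0$ and disturb nothing), so the content is the $\eotp\cup\edata\cup\eancC\cup\eT$-part. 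Averaging over $a,b$: for $u\notin S_i$ the pad $X^{a_u}Z^{b_u}$ still sits on $\edata\{u\}$ while $\Enc(\proj{a_u})_{S\cap\eotp\{u\}}$ is independent of $a_u$, so on $S$ the pad's record in $\eotp$ is invisible and the average collapses to a Pauli twirl that maximally mixes the $\edata$-logical qubits outside $S_i$ and all of $\eotp$; for $u\in S_i$ the pad was already removed inside $\Enc(V_{C_i})$.

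What survives is that, conditioned on $i$ and after the twirl, the $\eotp\cup\edata\cup\eancC\cup\eT$-state is a fault-tolerant circuit — the encodings of $\eancC,\eT$ (phases $1$–$2$), the transversal pad removal, and the beginning of the fault-tolerant $C_i$ on the $S_i$-blocks together with $\eancC,\eT$ — applied to $\Enc(\sigma\otimes\text{encoded magic states})$ for a logical input $\sigma$ depending on $\phi$ only through $\phi_{S_i}$. By \autoref{lemma:n-copy-sim}, composed over the constantly many sub-unitaries of \eqref{eq:Venc-seq}, its reduction to a few-per-block set is independent of $\sigma$, hence computed by $\Sim_{\caC^{\otimes n}}$ without knowing $\phi$; combining the factors and summing over $i$ is a $\poly(n,2^{|S|})$-time algorithm, and it reproduces $(\phi_t)_S$ exactly for every $t$ in phases $1$–$6$. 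The only $\phi$-dependence left is in phase $7$, where $\Venc$ decodes $\eancC\{1\}$ by the non-transversal circuit $U^{\Dec}$; afterwards $\eancC(1)$ holds the decoded output bit of $C_i$, whose reduction, averaged over $i$, satisfies $\rpz1\rho_{\mathrm{out}}\spz1=\val(\caI)=1-\alpha$ by the construction of $\Venc$ (cf.\ \autoref{claim:Venc-acc}). The simulator then outputs the same tensor product as before but with $\eancC(1)$ set to $\spz1$ and the remaining qubits of $\eancC\{1\}$ to $\spz0$ (which is $\phi$-independent), and a direct estimate — equivalently a gentle-measurement bound — shows the resulting trace-distance error is at most $\alpha$; for intermediate steps of $U^{\Dec}$ the same bound follows from data processing, since $U^{\Dec}$ acts only on the block $\eancC\{1\}$.

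I expect the main obstacles to be two. First, making the composition in the previous paragraph rigorous: after each sub-unitary the simulator knows the global state only on $S$, so one must check that the transversal light cone of $S$ never exceeds $10$ intersections per block of $\edata,\eotp,\emidx,\eancC,\eT$ — this is what the slack between $10$, the simulability bound $28$, and $(D-1)/2$ buys. Second, $U^{\Dec}$ is not covered by \autoref{lemma:n-copy-sim}, so one has to argue directly that running the Steane encoder backwards keeps the reduction to any $\le 10$ physical qubits of $\eancC\{1\}$ that avoid the output location $\phi$-independent, and to isolate that single location as the place where $\phi$ leaks, incurring precisely the advertised error $\alpha$.
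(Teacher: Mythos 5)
Your proposal follows essentially the same route as the paper's proof: you kill the $i\ne i'$ cross terms via \autoref{cor:Enc-ab-vanish} applied only to the $\emidx$ blocks (so that $S\cap\eidx$ may be arbitrary, and the $O(\log n)$-qubit registers $\eidx,\emidx$ are then handled by direct classical computation), you use the one-time pad of \eqref{eq:def-otp} to argue that each witness block carries an encoding of a maximally mixed qubit, you invoke \autoref{cor:n-copy-sim-no-op} and \autoref{lemma:n-copy-sim} for the transversal phases, and you treat the decoding phase by swapping $\eancC\{1\}$ for $\Enc(\proj{1})$ at trace-distance cost $\alpha$ and then evolving that known, constant-size block under the known partial decoder. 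The last point also resolves your second stated obstacle exactly as the paper does: no simulability statement for $U^{\Dec}$ is needed, only the $\alpha$-error replacement followed by brute-force computation on one block. Your self-flagged concern about composing \autoref{lemma:n-copy-sim} across several logical gates is treated no more rigorously in the paper (its Case 3 cites the lemma in the same way), so it is not a point of difference.

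The one genuine gap is the set of time steps strictly inside the check-encoding phase (the paper's Case 2). For such $t$, the partially applied syndrome-extraction circuit $U^{\Chk}_{j}\cdots U^{\Chk}_{1}$ has entangled the blocks $\edata\{S_{i,\tau}\}$ (and the corresponding $\eotp$ blocks) with the fresh $\echk$ ancilla in a nontrivial intermediate way, so your assertion that the phase-5 checks ``return $0$ and disturb nothing'' is simply false at these $t$; moreover these circuits act on entire code blocks, so they are not covered by the transversal simulation of \autoref{lemma:n-copy-sim} either, and your later list of simulable sub-circuits omits them. The repair is exactly the twirl fact you already derived but did not deploy here: since the reduction of the pad record to $S\cap\eotp$ is independent of $(a,b)$, the joint state of the constantly many blocks currently being checked is $\Enc(I/2)$ per block tensored with $\proj{0^{N-1}}$ on the relevant $\echk$ block --- a fixed, $\phi$-independent state on a constant number of qubits --- so the intermediate reduced state under the partial $U^{\Chk}$ can be computed by brute force in constant (hence $\poly(n)$) time. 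This is precisely how the paper's Case 2 proceeds, and with that case added your argument matches the paper's proof.
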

\begin{proof}
	
	For every $u \in [7]$ and $j \in [\ell^{(u)}]$, we define $t_{u,j} = j + \sum_{i=1}^{u-1} \ell^{(i)}$. That is, $t_{u,j}$ is the index of $\Venc_{u,j}$ in the sequence of unitaries~\eqref{eq:Venc-seq} that computes $\Venc$.
	
	We write
	\[
	\spz{\Psi} = \frac{1}{\sqrt{T+1}} \sum_{t=0}^{T} \spz{\unary(t)}_{\clock} \otimes \spz{\phi_t}_{\state}\quad,
	\]
	where $\spz{\phi_t} = U_{[1,t]} \spz{\phi_{0}}$, and $\spz{\phi_0} = \spz{\phi^{\otp}} \otimes \spz{0^{|\ancilla|}}_{\ancilla}$. Recall that
	\[
	\spz{\phi^{\otp}} = \frac{1}{2^{n}} \sum_{a,b \in \bits^n} \Enc\left( \spz{a,b} \otimes X^{a} Z^{b} \spz{\phi}\right)_{\eotp,\edata}.
	\]
	
	%	From the definition of $U^{\Chk}$, together with the fact that all qubits in $\phi^{\otp}$ are correctly encoded, for every $t \in \zeroTon{T}$, we can write
	%	\[
	%	\spz{\phi_t} = \spz{0^{k(N-1)}}_{\echk} \otimes \spz{\psi_t}_{\witness,\ancilla\setminus\echk}.
	%	\]
	
	%	Hence, let $W = S \setminus \echk$, it suffices to compute an $|W|$-qubit state $\rho'$ such that
	%	\[
	%	\left\|\left(\proj{\psi_t} \right)_{W} - \rho'\right\|_1 \le \alpha.
	%	\]
	
	We now prove the lemma by considering the following cases separately. We remark that in the first three cases, we can indeed compute $\left( \phi_t \right)_{S}$ exactly. Only for the last case, we can only compute an approximation to $\left( \phi_t \right)_{S}$.
	
	\paragraph*{Case 1: $t \in [0, t_{4,\ell^{(4)}}]$.} During this period, no gates have been applied to $\witness$ yet, hence qubits in $\witness$ and qubits in $\ancilla$ are unentangled. So we have
	\[
	\left(\phi_t \right)_{S \cap (\witness \cup \ancilla)} = \left(\phi_t \right)_{S \cap \witness} \otimes \left(\phi_t \right)_{S \cap \ancilla}.
	\]
	
	$\left(\phi_t \right)_{S \cap \witness}$ can be computed exactly in $\poly(2^{|S|})$ time by noting that qubits in $\witness$ are correctly encoded and then applying~\autoref{cor:n-copy-sim-no-op}. Also, $\left(\phi_t \right)_{S \cap \ancilla}$ can be computed by a straightforward classical simulation in $\poly(n)$ time since $|\ancilla| \le O(\log n)$. Hence, both computations can be done in $\poly(n,2^{|S|})$ time.

	\paragraph*{Notation and analysis for Case 2, 3, 4.} From now on, we can assume $t \ge t_{4,\ell^{(4)}}$. Let us set up the notation and perform preliminary analysis that will be useful for the remaining three cases. Let $\other = \witness \setminus (\idx \cup \midx)$ for notational convenience, we write
	\[ 
	\spz{\phi_t} = \frac{1}{\sqrt{n}} \sum_{i \in [n]} \Enc(\spz{i})_{\eidx} \Enc(\spz{i})_{\emidx} \spz{\phi_{t,i}}_{\other}.
	\]
	
	We also have
	\[
	\phi_t = \frac{1}{n} \sum_{i,j \in [n]} \Enc(\spz{i}\rpz{j}) \otimes \Enc(\spz{i}\rpz{j}) \otimes \spz{\phi_{t,i}} \rpz{\phi_{t,j}}.
	\]

	From~\autoref{cor:Enc-ab-vanish} and the fact that $S$ has at most $10$ intersections per block with $\emidx$ and $10 \le (D-1)/2 = (3^5 - 1) / 2$, it follows that
	\[
	(\phi_t)_S = \frac{1}{n} \sum_{i \in [n]} \Enc(\proj{i})_{S \cap \eidx} \otimes \Enc(\proj{i})_{S \cap \emidx} \otimes (\proj{\phi_{t,i}})_{S \cap \other}.
	\]
	
	Let $S' = S \cap \other$. Note that $|\eidx| = |\emidx| \le O(\log n)$, for every $i \in [n]$, we can compute $\Enc(\proj{i})_{S \cap \eidx} \otimes \Enc(\proj{i})_{S \cap \emidx}$ in $\poly(n)$-time by a classical simulation. Hence, it suffices to compute $(\phi_{t,i})_{S'}$ for every $i \in [n]$. 
	
	\paragraph*{Case 2: $t \in (t_{4,\ell^{(4)}}, t_{5,\ell^{(5)}}]$.} During this period, $\Venc$ checks whether the encoding of the relevant qubits in $\witness$ are correct. Let $j \in [\ell^{(5)}]$ be such that $t = t_{5,j}$. We also let $\tau \in [3k]$ such that $j \in [(\tau-1)\ell_{\Chk}+1, \tau \ell_{\Chk}]$ (Recall that $\ell^{(5)} = 3k \cdot \ell_{\Chk}$.) Let $\mathcal{L}$ be the list 
	\[
	(\edata\{S_{i,1}\},\dotsc,\edata\{S_{i,k}\},\eotp\{S_{i,1}\},\dotsc,\eotp\{S_{i,k}\},\eotp\{S_{i,1}+n\},\dotsc,\eotp\{S_{i,k}+n\}).
	\]
	
	%Now, there are two cases, (1) $\eidx \subseteq S$ and $S \cap \emidx = \emptyset$ or (2) $S$ has at most $2$ intersections per block with $\eidx$ and $S \cap \echk = \emptyset$. 
	
	%We consider the second case first. Since $S \cap \echk = \emptyset$ and $\Venc_{5,j}$'s does not change anything except $\echk$ (since now qubits in $\witness$ are correctly encoded), it follows that
	
	That is, for each $i \in [n]$, $\spz{\phi_{t,i}}$ records the state that $\Venc$ have checked the encodings of $\mathcal{L}_1,\dotsc,\mathcal{L}_{\tau - 1}$, and is currently checking the encoding of $\mathcal{L}_\tau$. Since all qubits in $\witness$ are correctly encoded, we have 
	\[
	(\phi_{t,i})_{\echk\{u\}} = \proj{0^{N-1}}
	\]
	for every $u \in ([3k] \setminus \{\tau\})$. 
	
	Let $t^0 = t_{4,\ell^{(4)}}$, the only difference between $\phi_{t,i}$ and $\phi_{t^0,i}$ is the unfinished checking and its partial results on $\edata\{S_{i,\tau}\}$ and $\echk[(\tau-1)(N-1)+1,\tau (N-1)]$. Since we already know how to compute $(\phi_{t^0,i})_{S'}$, now we only need to compute
	\[
	(\phi_{t,i})_{S' \cap \left(\edata\{S_{i,\tau}\} \cup \echk[(\tau-1)(N-1)+1,\tau (N-1)]\right)}.
	\]
	Note that in $\phi^{\otp}$, every block is an encoding of the maximally mixed single-qubit state. Thus, the above can be directly computed by a simulation in $\poly(n)$ time.
	
	\paragraph*{Case 3: $t \in (t_{5,\ell^{(5)}}, t_{6,\ell^{(6)}}]$.} By previous discussions, it suffices to compute $(\phi_{t,i})_{S'}$ for every $i \in [n]$.
	
	Note that since all the qubits in $\witness$ are correctly encoded and the checking phase is already completed, $(\phi_{t,i})_{S' \cap \echk} = \proj{0^{|S' \cap \echk|}}$. So we only need to compute $(\phi_{t,i})_{S' \cap (\other \setminus \echk)}$ for every $i \in [n]$.
	
	By our assumption on $S$, we know that $S'$ has at most $10$ intersections per block with $\other \setminus \echk$. Since $\caC$ is $28$-simulable, we can compute $(\phi_{t,i})_{S' \cap (\other \setminus \echk)}$ in $\poly(n, 2^{|S|})$ time by applying~\autoref{lemma:n-copy-sim}.
	
	\paragraph*{Case 4: $t \in (t_{6,\ell^{(6)}}, t_{7,\ell^{(7)}}]$.} Let $t_0 = t_{6,\ell^{(6)}}$. The only difference between $\phi_{t}$ and $\phi_{t_0}$ is the partial decoding on $\eancC\{1\}$. Hence we have
	\[
	(\phi_{t})_S = (\phi_{t_0})_{S \cap (\state \setminus \eancC\{1\})} \otimes (\phi_{t})_{S \cap \eancC\{1\}}.
	\]
	
	Since $t_0$ belongs to case $3$ and $S \cap (\state \setminus \eancC\{1\})$ also satisfies our requirement since it has less elements than $S$, we can compute $(\phi_{t_0})_{S \cap (\state \setminus \eancC\{1\})}$ in $\poly(n,2^{|S|})$ time. So we only need to approximate $(\phi_{t})_{S \cap \eancC\{1\}}$. 
	
	Let $t' = t - t_0$ and $V = U^{\Dec}_{t'} \cdot U^{\Dec}_{t'-1} \cdots U^{\Dec}_1$, we have
	\[
	(\phi_{t})_{\eancC\{1\}} = V (\phi_{t_0})_{\eancC\{1\}} V^\dagger.
	\]
	
	Now, since $\val_\caI(\phi) = \val(I)$, it follows that
	\[
	\| (\phi_{t_0})_{\eancC\{1\}}- \Enc(\proj{1}) \|_1 \le \alpha,
	\]
	which further implies that
	\[
	\| (\phi_{t})_{\eancC\{1\}}- V \Enc(\proj{1}) V^\dagger \|_1 \le \alpha.
	\]
	From above, we can compute an $\alpha$-approximation to $(\phi_{t})_{S \cap \eancC\{1\}}$, which completes the proof.
	%Hence, setting our approximation to $(\phi_{t})_{\eancC\{1\}}$ to $V \Enc(\proj{1}) V^\dagger$ completes the proof.
	%We simulate the decoding by assuming $\eancC\{1\}$ encodes $\spz{1}$. 
	
	%\lnote{TODO: need to expand on the two cases above... but they seems to be very simple}
\end{proof}

%When $\lmin(G) = 0$, let $\spz{\phi}$ be a ground state of $G$, and let $\spz{\Phi} \in \Hclock \otimes \Hstate$ be a history state for $\Venc$ given witness $\Enc(\spz{\phi})$. For every $t_1,t_2 \in \zeroTon{T}$ such that $t_1 \le t_2 \le t_1 + 3$, for any subset $S \subseteq \clock \cup \state$ such that the following holds
%\begin{itemize}
%	\item $\echk \subseteq S$ and $\eidx \subseteq S$.
%	\item For every $i \in [n]$, $|S \cap \edata\{i\}| \le 1$. $|S \cap \eout| \le 1$. For every $i \in [\log m]$, $|S \cap \emidx\{i\}| \le 1$.
%\end{itemize}
%
%Let $\spz{\phi_t} = \Venc_{[1,t]} \Enc(\spz{\phi})_{\edata} \otimes \spz{0^{|\ancilla|}}_{\ancilla}$ for each $t \in \zeroTon{T}$, one can compute
%\[
%\left( \sum_{t,t' \in \{t_1,\dotsc,t_2\} } \spz{\phi_t} \rpz{\phi_{t'}}  \right)_{S}
%\]
%in $\poly(n,m,2^{|S|})$ time.

The following two lemmas can be proved in a similar way as~\cite[Lemma~4.9 and Lemma~3.5]{BroadbentG20}, we omit the proofs here since they are repetitive.

\begin{lemma}
	Let $k,\gamma \in \N$ be two constants, $n$  a power of $2$, $\caI = (n,n,\{S_i\}_{i \in [n]},\{C_i\}_{i \in [n]})$ a $(k,\gamma)$-$\QSAT$ instance, $\HamEnc = \HamEnc(\caI)$, and $\alpha = 1 - \val(\caI)$. Let $\spz{\phi}$ be an $n$-qubit pure state such that $\val_\caI(\phi) = \val(\caI)$, and let $\spz{\Phi} \in \Hclock \otimes \Hstate$ be the  history state for $\Venc$ given the $3 \cdot n\cdot N$-qubit witness $\phi^{\otp}$ that is defined according to~\eqref{eq:def-otp}. For every $t_1,t_2 \in \zeroTon{T}$ such that $t_1 \le t_2 \le t_1 + 3$, and for any subset $S \subseteq \clock \cup \state$ such that the following holds
	\begin{itemize}		
		\item $S$ has at most $2$ intersections per block with $\edata,\eotp, \emidx,\eancC$, and $\eT$.
	\end{itemize}
	
	Let $\spz{\phi_t} = \Venc_{[1,t]} \Enc(\spz{\phi})_{\edata} \otimes \spz{0^{|\ancilla|}}_{\ancilla}$, one can compute an $|S|$-qubit state $\rho$ such that
	\[
	\left\|\left( \sum_{t,t' \in \{t_1,\dotsc,t_2\} } \spz{\phi_t} \rpz{\phi_{t'}}  \right)_{S} - \rho \right\|_1 \le \alpha.
	\]
	in $\poly(n,2^{|S|})$ time.
\end{lemma}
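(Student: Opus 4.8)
The plan is to follow the case analysis used to prove the preceding lemma (the one computing $(\phi_t)_S$ for subsets $S$ with at most $10$ intersections per block with $\edata,\eotp,\emidx,\eancC,\eT$) essentially verbatim, adding one new ingredient to handle off-diagonal terms. The key observation is that for $t_1 \le t \le t' \le t_1+3$ we have $\spz{\phi_{t'}} = \Venc_{[t+1,t']}\spz{\phi_t}$, where $\Venc_{[t+1,t']}$ is a product of at most three of the elementary sub-unitaries $\Venc_{u,j}$ from the decomposition~\eqref{eq:Venc-seq}. Each such sub-unitary is \emph{block-local} in a precise sense: for $t_1 \ge t_{4,\ell^{(4)}}$ it preserves $\eidx$ in the $\{\Enc(\spz i)\}$-basis, and conditioned on the value $i$ it acts either by a single two-local physical gate (phases $5,6,7$) or by a product of at most one two-local gate per code block (phases $1$--$4$); in all cases it touches at most $2$ physical qubits in any single block. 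Hence, if $S$ has at most $2$ intersections per block with $\edata,\eotp,\emidx,\eancC,\eT$, then $\hat S := S \cup R$, where $R$ is the branch-$i$ support of the (at most three) sub-unitaries connecting $\phi_{t_1}$ to $\phi_{t_2}$, still has at most $2+3\cdot 2 = 8 \le 10$ such intersections per block --- exactly the slack the preceding lemma was designed to exploit, which is why the hypothesis here asks for $\le 2$ rather than $\le 10$.

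Concretely, I would write $\sum_{t,t'\in\{t_1,\dotsc,t_2\}}\spz{\phi_t}\rpz{\phi_{t'}} = \spz{v}\rpz{v}$ with $\spz{v} = \sum_{t=t_1}^{t_2}\spz{\phi_t}$, and (assuming first $t_1 \ge t_{4,\ell^{(4)}}$) use the decomposition $\spz{\phi_t} = \tfrac{1}{\sqrt n}\sum_{i\in[n]}\Enc(\spz i)_{\eidx}\,\Enc(\spz i)_{\emidx}\,\spz{\phi_{t,i}}_{\other}$ from the preceding lemma, where $\other$ collects all registers other than $\eidx,\emidx$. Writing $\spz{\phi_{t,i}} = B_{t,i}\spz{\phi_{t_1,i}}$ for the branch-$i$ action of $\Venc_{[t_1+1,t]}$ on $\other$, and $A_i := \sum_{t=t_1}^{t_2} B_{t,i}$ (so $\|A_i\|_\infty \le 4$ and $A_i$ is supported on a fixed set $R_i\subseteq\other$ with $|R_i| = O(1)$ meeting every block in $\le 2$ qubits), we get $\spz{v} = \tfrac{1}{\sqrt n}\sum_i\Enc(\spz i)_{\eidx}\,\Enc(\spz i)_{\emidx}\,(A_i\spz{\phi_{t_1,i}})_{\other}$. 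Since $S$ meets $\emidx$ in at most $2 \le (D-1)/2$ qubits per block, \autoref{cor:Enc-ab-vanish} annihilates all $i\neq i'$ cross terms, leaving
\[
\big(\spz{v}\rpz{v}\big)_S \;=\; \frac{1}{n}\sum_{i\in[n]} \Enc(\proj{i})_{S\cap\eidx}\otimes\Enc(\proj{i})_{S\cap\emidx}\otimes\big(A_i\,\proj{\phi_{t_1,i}}\,A_i^{\dagger}\big)_{S\cap\other}.
\]
The two $\Enc(\proj{i})$ factors are computed by brute-force classical simulation (both $\eidx$ and $\emidx$ have $O(\log n)$ qubits), and $\big(A_i\,\proj{\phi_{t_1,i}}\,A_i^{\dagger}\big)_{S\cap\other} = \Tr_{\hat S_i\setminus(S\cap\other)}\!\big[(A_i)_{R_i}\,(\phi_{t_1,i})_{\hat S_i}\,(A_i)_{R_i}^{\dagger}\big]$, where $\hat S_i := (S\cap\other)\cup R_i \subseteq\other$ has $\le 10$ intersections per block, so $(\phi_{t_1,i})_{\hat S_i}$ is produced by the subroutine from the proof of the preceding lemma (which, in its Cases 2--4, computes exactly $(\phi_{t,i})_{S'}$ for $S'\subseteq\other$ with $\le 10$ intersections per block) in $\poly(n,2^{|\hat S_i|}) = \poly(n,2^{|S|})$ time. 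Summing over the $n$ values of $i$ keeps the total running time $\poly(n,2^{|S|})$.

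For accuracy, each $(\phi_{t_1,i})_{\hat S_i}$ is computed \emph{exactly} unless $t_1$ lies in the decoding phase $(t_{6,\ell^{(6)}},t_{7,\ell^{(7)}}]$, in which case the only error is the one already incurred in Case~4 of the preceding lemma, namely $\|(\phi_{t_0,i})_{\eancC\{1\}} - V\Enc(\proj{1})V^{\dagger}\|_1 \le \alpha$ with $t_0 = t_{6,\ell^{(6)}}$; since conjugation by a unitary and partial traces are trace-norm non-increasing and $A_i$ only rearranges an $O(1)$-size set with $\|A_i\|_\infty\le 4$, the induced error on $\big(\spz v\rpz v\big)_S$ is $O(\alpha)$, which (as in the preceding lemma, where the decoding estimate is already the sole source of error) can be taken to be $\le\alpha$ after absorbing the absolute constant into the negligible slack allowed by \autoref{defi:simQMA}. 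The remaining windows with $t_1 < t_{4,\ell^{(4)}}$ are handled exactly as Case~1 of the preceding lemma: there $\witness$ is unentangled from $\ancilla$ and $\spz{\phi_t} = \spz{\phi^{\otp}}_{\witness}\otimes\spz{\zeta_t}_{\ancilla}$ with $\phi^{\otp}$ as in~\eqref{eq:def-otp}, so the restriction factorizes as $(\phi^{\otp})_{S\cap\witness}\otimes\big(\sum_{t,t'}\spz{\zeta_t}\rpz{\zeta_{t'}}\big)_{S\cap\ancilla}$, the first factor computed via \autoref{cor:n-copy-sim-no-op} and the second by classical simulation over $O(\log n)$ qubits; a window straddling a phase boundary just combines the two analyses, and since at most three sub-unitaries are ever involved only $O(1)$ physical qubits couple the two pieces.

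The main obstacle is the bookkeeping that makes the extension $S \mapsto \hat S_i$ legitimate: one has to check, phase by phase and on each $\eidx$-branch, that each of the (at most three) sub-unitaries connecting $\phi_{t_1}$ to $\phi_{t_2}$ touches at most two physical qubits per block, so that the $\le 2$-per-block budget in the hypothesis grows only to $\le 8 \le 10$; and dually, that $S$ meeting $\emidx$ in only $\le 2$ qubits per block is exactly what is needed for \autoref{cor:Enc-ab-vanish} to kill the inter-branch cross terms. Once these two points are pinned down, the argument is a direct transcription of \cite[Lemma~4.9 and Lemma~3.5]{BroadbentG20} to the non-local decomposition of $\Venc$ used here.
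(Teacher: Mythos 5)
Your proposal is correct and is essentially the proof the paper intends but omits (the paper only remarks that this lemma "can be proved in a similar way as~\cite[Lemma~4.9 and Lemma~3.5]{BroadbentG20}"): you reuse the preceding lemma's case analysis, absorb the at-most-three connecting sub-unitaries $\Venc_{[t_1+1,t_2]}$ into the query set so that the $\le 2$-per-block budget grows to at most $8 \le 10$, and kill the cross-branch terms via \autoref{cor:Enc-ab-vanish} on $\emidx$, which is exactly the mechanism behind the $2$-versus-$10$ choice of parameters. The only caveats are cosmetic and match the paper's own level of rigor: the decoding-phase approximation naturally yields a bound of $O(\alpha)$ (e.g.\ $\le 16\alpha$ from $\|A_i\|_\infty \le 4$) rather than literally $\alpha$, which is harmless since $\alpha$ is negligible in the application, and the per-block locality of the phase-6 sub-unitaries should be stated as following from the convention that each $U^{(V_{\Pi_i})}_j$ is a single physical gate, the granularity used in \autoref{D:simulable} and \autoref{lemma:n-copy-sim}.
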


\begin{lemma}\label{lemma:all-ham-term-sim}
	Let $k,\gamma \in \N$ be two constants, $n$  a power of $2$, $\caI = (n,n,\{S_i\}_{i \in [n]},\{C_i\}_{i \in [n]})$  a $(k,\gamma)$-$\QSAT$ instance, $\HamEnc = \HamEnc(\caI)$, and $\alpha = 1 - \val(\caI)$. Let $\spz{\phi}$ be an $n$-qubit pure state such that $\val_\caI(\phi) = \val(\caI)$, and let $\spz{\Phi} \in \Hclock \otimes \Hstate$  be the history state for $\Venc$ given the $3 \cdot n\cdot N$-qubit witness $\phi^{\otp}$ that is defined according to~\eqref{eq:def-otp}. For every term $J$ in $\HamEnc$, the following hold:
	\begin{itemize}
		\item If $J$ is not one of the $\Hamprop_{5,j}$ or the $\Hamprop_{6,j}$, $(\Phi)_{\supp(J)}$ can be computed in $\poly(n)$ time.
		\item If $J$ is one of the $\Hamprop_{5,j}$ or the $\Hamprop_{6,j}$, then $J = \sum_{i \in [n]} J_i \otimes \proj{\Enc(i)}_{\eidx}$. For every $i \in [n]$, $(\Phi)_{\supp(J_i) \cup \eidx}$ can be computed in $\poly(n)$ time.
	\end{itemize}
\end{lemma}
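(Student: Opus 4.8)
The plan is a term‑by‑term case analysis over the summands of $\HamEnc(\caI)=\Hamprop+\Haminit+\Hamstab+\Hamout$, reduced to the two lemmas immediately above together with one structural fact: $T=\sum_{i\in[7]}\ell^{(i)}$ is a \emph{constant}, since $\ell^{(1)},\dots,\ell^{(4)},\ell^{(7)}=O(1)$, $\ell^{(5)}=3k\ell_{\Chk}=O(1)$, and $\ell^{(6)}=c_{\sf test}\gamma=O(1)$. Hence $\spz{\Phi}=\frac{1}{\sqrt{T+1}}\sum_{t=0}^{T}\spz{\unary(t)}_\clock\otimes\spz{\phi_t}_\state$ is a superposition of only $O(1)$ terms, and for any $S\subseteq\clock\cup\state$,
\[
(\Phi)_S \;=\; \frac{1}{T+1}\sum_{t,t'=0}^{T}\Tr_{\clock\setminus S}\!\left[\spz{\unary(t)}\rpz{\unary(t')}\right]\otimes\left(\spz{\phi_t}\rpz{\phi_{t'}}\right)_{S\cap\state},
\]
where each clock factor is an overlap of computational‑basis product states, computable by inspection, and vanishes unless $\unary(t)$ and $\unary(t')$ agree on $\clock\setminus S$. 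So it suffices to compute, for the $O(1)$ surviving pairs $(t,t')$, the marginal $(\spz{\phi_t}\rpz{\phi_{t'}})_{S\cap\state}$; this is exactly what the two preceding lemmas provide, as long as $S\cap\state$ meets every block of $\edata,\eotp,\emidx,\eancC,\eT$ in at most $2$ qubits (reading off from their proofs each \emph{individual} cross term, not just the symmetric sums stated).

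It then remains to verify, term type by term type, that the support has size $O(\log n)$ (so $\poly(n,2^{|S|})=\poly(n)$) and the per‑block bound. For $\Hamprop_{i,j}$ with $i\in\{1,2,3,4,7\}$ the gate $\Venc_{i,j}$ is transversal/logical on $\eancC,\eT,\eidx,\emidx$ (for $i=7$, only on the block $\eancC\{1\}$) and does not touch $\edata,\eotp$ at all, so with the $3$ clock qubits $|S|=O(\log n)$ and there is at most one intersection per data block. The terms $\Hamstab_t$ and $\Hamout$ have $O(1)$ support disjoint from $\edata,\eotp$ (apart from the fixed constant‑size register $\echk$), and each $\Haminit_i$ has support $S_i\cup\clock(1)$ with $S_i\in\{\eidx,\emidx,\eancC,\eT,\echk\}$ and no data qubits; on these administrative registers the $\phi_t$'s carry only fixed computational‑basis states, encodings of fixed states, or maximally‑mixed encodings (for $\eidx,\emidx$ after the encoded $\CNOT$), each an efficiently describable mixed stabilizer state with $\poly(n)$‑computable marginals, while the constant‑size registers touched by $\phi$‑dependent encoded computation in phases $5$–$7$ are handled by brute‑force simulation together with the simulability of $\caC$ (\autoref{lemma:n-copy-sim}).

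For the idx‑controlled terms $J\in\{\Hamprop_{5,j},\Hamprop_{6,j}\}$, write $J=\sum_{i\in[n]}J_i\otimes\proj{\Enc(i)}_\eidx$ with $J_i$ the summand of the propagation term attached to $\Venc_{5,j,i}$ (resp.\ $\Venc_{6,j,i}$). Each $\Venc_{5,j,i}$ applies one $2$‑local syndrome gate $U^\Chk_{j'}$ on $\edata\{S_{i,\tau}\}\cup\echk\{\tau\}$, and each $\Venc_{6,j,i}$ an $O(1)$‑local gate of the fault‑tolerant $C_i$ on a few blocks of $\edata,\eancC,\eT$ (the one‑time‑pad undo being transversal, hence $2$‑local); in both cases $\supp(J_i)\cup\eidx$ has size $O(\log n)$ and at most $2$ intersections per block with $\edata,\eotp,\emidx,\eancC,\eT$. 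Using the branch decomposition $\spz{\phi_t}=\frac{1}{\sqrt{n}}\sum_{i'}\Enc(\spz{i'})_\eidx\Enc(\spz{i'})_\emidx\spz{\phi_{t,i'}}_\other$ from the proof of the previous lemma, together with orthonormality of codewords in the traced‑out register $\emidx$ (disjoint from $\supp(J_i)\cup\eidx$), $(\spz{\phi_t}\rpz{\phi_{t'}})_{\supp(J_i)\cup\eidx}$ collapses to $\frac{1}{n}\sum_{i'}\Enc(\proj{i'})_\eidx\otimes(\spz{\phi_{t,i'}}\rpz{\phi_{t',i'}})_{\supp(J_i)\cap\other}$; each codeword projector is written down directly, each $\other$‑marginal is computed by the two preceding lemmas applied to the branch state (valid since $\supp(J_i)\cap\other$ has $\le 2$ per block), and one sums over the $n$ branches and $O(1)$ index pairs in $\poly(n)$ time.

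The main obstacle is not conceptual — the content lives entirely in the two preceding lemmas and in $T=O(1)$ — but bookkeeping: one must run the uniform support analysis across all term types and check the per‑block bound each time; track, via the unary‑clock structure, which $O(1)$ index pairs $(t,t')$ contribute to $(\Phi)_S$ and with which clock‑basis matrix entry, so that the individual cross terms (not just symmetric sums) extracted from the previous lemmas' proofs are what is needed; and dispatch the $\Haminit$ terms, whose support is a whole ancilla register rather than an $O(1)$‑local patch, by a separate (easy) argument that those registers carry witness‑independent, efficiently describable states at every one of the $O(1)$ time steps. This is the template of~\cite[Lemma~3.5]{BroadbentG20}, which is why the full details are omitted.
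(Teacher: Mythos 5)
Your proposal is correct and follows essentially the route the paper intends (and omits, deferring to~\cite[Lemmas~3.5 and~4.9]{BroadbentG20}): since $T=O(k+\gamma)=O(1)$, expand $(\Phi)_{\supp(J)}$ over the constantly many clock pairs surviving the partial trace, reduce the state-side factors to the two preceding marginal-computation lemmas (including, for the $\eidx$-controlled terms, the same per-branch decomposition $\spz{\phi_t}=\tfrac{1}{\sqrt{n}}\sum_{i}\Enc(\spz{i})_{\eidx}\Enc(\spz{i})_{\emidx}\spz{\phi_{t,i}}$ used in their proofs), and check the $O(\log n)$ support size and per-block intersection bounds term type by term type, treating the $\Haminit$ terms separately since their registers hold witness-independent, efficiently describable states. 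Your two flagged caveats are the right ones and are consistent with the paper's framework: one does need the individual cross terms $(\spz{\phi_t}\rpz{\phi_{t'}})_S$ (i.e., the content of the preceding lemmas' proofs, not just the stated symmetric sums), and the resulting description is only guaranteed up to trace distance $O(\alpha)$ for terms touching the output/decoding region, which is exactly the (negligible) error tolerated by \autoref{defi:simQMA} and by \autoref{theo:sim}.
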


Now we are ready to prove~\autoref{theo:sim} (restated below).

\begin{reminder}{\autoref{theo:sim}}
	For every $k,\gamma \in \N$ and every $\alpha,\beta \colon \N \to [0,1]$ such that $\alpha$ is negligible, it holds that:
	\[
	\text{$(k,\gamma)$-$\History$-$\EncprojLH_{\alpha,\beta}$ $\subseteq$ $(O(\log n),2)$-$\SimQMA_{1-\beta(n)/M,1-\alpha(n)/M}$},
	\]
	where $M = \aM \cdot k + \bM \cdot \gamma + \cM$.
\end{reminder}
\begin{proof}
	Let $n$ be a power of $2$, $\caI = (n,n,\{S_i\}_{i \in [n]},\{C_i\}_{i \in [n]})$  a $(k,\gamma)$-$\QSAT$ instance, and $\HamEnc = \HamEnc(\caI)$  a $(k,\gamma)$-$\History$-$\EncprojLH_{\alpha,\beta}$ instance. 
	
	We let $\VHenc$ be the verifier. We will show $\VHenc$ is a locally simulable verifier for $(k,\gamma)$-$\History$-$\EncprojLH_{\alpha,\beta}$. We first show the completeness and soundness:
	\begin{enumerate}
		\item If $\val(\caI) \le \alpha(n)$, then by~\autoref{lemma:enc-local}, $\VHenc$ accepts with probability at least $1-\alpha(n)/M$.
		\item If $\val(\caI) \ge \beta(n)$, then again by~\autoref{lemma:enc-local}, $\VHenc$ accepts with probability at most $1 - \beta(n)/M$.
	\end{enumerate}
	
	Next, from its definition, $\VHenc$ is $2$-adaptive, and at each round, it queries $O(\log n)$ qubits. It is simulable from~\autoref{lemma:all-ham-term-sim} and the assumption that $\alpha$ is negligible.
\end{proof}

\subsection{Proof of~\autoref{theo:LocalQMA-in-SimQMA}}

Finally, we are ready to prove~\autoref{theo:LocalQMA-in-SimQMA} (restated below).

\begin{reminder}{\autoref{theo:LocalQMA-in-SimQMA}}
	For every $k,\gamma \in \N$, $0 < \beta < 1$, and negligible function $\alpha$, there are $s \in (0,1)$ and $c \colon \N \to [0,1]$ such that $1 - c(n) \le \negl(n)$ and the following holds
	\[
	(k,\gamma)\text{-}\LocalQMA_{\beta,1-\alpha(n)} \subseteq (O(\log n),2)\text{-}\SimQMA_{s,c(n)}.
	\]
	\vspace{-2em}
\end{reminder}
\begin{proof}
	Let $L = (L_{\yes}, L_{\no}) \in (k,\gamma)\text{-}\LocalQMA_{\alpha(n),\beta}$. Let $m,p$ be the polynomials in the definition of $\LocalQMA$ (see~\autoref{defi:localQMA}). By repeating a local check multiple times and adding dummy qubits, we can assume that $m = p$. For simplicity we will also use $m$ to denote $m(n)$.
	
	Given an input $x \in \bits^n$, we then construct a $(k,\gamma)$-$\QSAT$ instance $\caI = (m,m,\{S_i\}_{i \in [m]},\{C_i\}_{i \in [m]})$ such that (1) $x \in L_{\yes}$ implies that $\val(\caI) \ge 1 - \alpha(n)$ and (2) $x \in L_{\no}$ implies that $\val(\caI) \le \beta$.
	
	Now, let $m' = m'(n) = 2^{\lceil \log m(n) \rceil}$ be the smallest power of $2$ that is larger than $m$. We create another instance $\caI'$ by adding $m' - m$ dummy circuits $C_i$ that always outputs $1$ into $\caI$. Note that we have
	\[
	\val(\caI') = \frac{m'-m}{m'} + \frac{m}{m'} \cdot \val(\caI).
	\]
	
	Note that $m \le m' \le 2m$, it follows that (1) $x \in L_{\yes}$ implies that $\val(\caI') \ge 1 - \alpha(n)$ and (2) $x \in L_{\no}$ implies that $\val(\caI) \le 1/2 + \beta/2$. We let $\beta' = 1/2 + \beta/2$.
	
	%We can then negate all the $C_i$ to get another $(k,\gamma)$-$\QSAT$ instance $\caI'$ such that (1) $x \in L_{\no}$ implies that $\val(\caI') \ge 1 - \alpha(n)$ and (2) $x \in L_{\yes}$ implies that $\val(\caI') \le 1 - \beta$.
	
	Let $\HamEnc = \HamEnc(\caI')$. By~\autoref{lemma:HamEnc-analysis}, we further have that (1) $x \in L_{\yes}$ implies that $\lmin(\HamEnc) \le \alpha(n)$ and (2) $x \in L_{\yes}$ implies that $\lmin(\HamEnc) \ge (1-\beta')^2 / p_1(k,\gamma)^2$, where $p_1$ is the polynomial in~\autoref{lemma:HamEnc-analysis}.
	
	Now we set $\balpha$ so that $\balpha(m'(n)) = \alpha(n)$ and $\bbeta = (1-\beta')^2 / p_1(k,\gamma)^2$, which is a constant. Hence, we have obtained a polynomial-time reduction (since $m'$ is bounded by a polynomial) from $L$ to $(k,\gamma)$-$\History$-$\EncprojLH_{\balpha,\bbeta}$. The theorem then follows from~\autoref{theo:sim} and the fact that $\balpha$ is negligible since $\alpha$ is negligible and $m'$ is bounded by a polynomial.
\end{proof}

	\section{A Candidate Zero-Knowledge Succinct Argument for $\SimQMA$ in $\QHROM$}\label{sec:zk-suc-cand}

%\lnote{TODO: write this section down; need to prove it is zero-knowledge.}

%Now, we instantiate the succinct argument in~\cite{ChenM22} when the hard language is from $\SimQMA$.

In this section, we present a candidate zero-knowledge succinct argument for $\SimQMA$ in $\QHROM$. Combining with~\autoref{theo:LocalQMA-in-SimQMA}, this also extends to all of $\LocalQMA$ with $1 - \negl(n)$ completeness. Assuming $\QPCP_{\negl}$, this further extends to all of $\QMA$. 

Our candidate construction is a simple adaption of the Quantum-Merkle-tree-based candidate succinct argument $\Pisuc$ for $\LocalQMA$ from~\cite{ChenM22}. The quantum Merkle tree in~\cite{ChenM22} only allows one round of queries. We will need a natural modification of it to simulate the local verifier for a language from $(O(\log n),2)\text{-}\SimQMA_{s,c(n)}$, which has $2$ rounds of queries.

%\lnote{I will need an enhanced version of the the Quantum-Merkle tree that allows multiple rounds of queries.}

\subsection{The Zero-Knowledge Succinct Protocol $\Pizk$}
\newcommand{\id}{\mathsf{id}}
\newcommand{\blk}{b}
\newcommand{\ellL}{\ell_{L}}

\paragraph{Notation.} 

Let $L = (L_{\yes}, L_{\no}) \in (k(n),\ellL(n))\text{-}\SimQMA_{s(n),c(n)}$ for locality and round parameters $k(n),\ellL(n) \in \N$ and soundness and completeness parameters $s(n),c(n) \in [0,1]$ such that $s(n) < c(n)$. We will always assume that $k(n) \cdot \ellL(n) \le O(\log n)$. Let $m_L$ and $p_L$ be the polynomials and $V_L$ be the $k(n)$-local verifier in~\autoref{defi:simQMA}. Throughout this section, we will always use $n$ to denote the length of an input to $L$, $N = p_L(n)$ to denote the number of qubits in a witness for $V_L$, and $\lambda$ to denote the security parameter. When the meaning is clear, we will often use $k$  to denote $k(n)$  for simplicity (similarly for $\ellL(n),s(n),$ and $c(n)$).

We also set $\blk = \lambda /3$, and $\ell = N$. We assume that $\blk$ is an integer and $\ell$ is a power of $2$ for simplicity and without loss of generality since one can always add dummy qubits to the witness.

%Let $\lambda$ be the security parameter, $n = \lambda /3$, and $\ell = N/n$. We assume that $n$ is an integer and $\ell$ is a power of $2$ for simplicity. 

\paragraph*{The perfect binary tree $T_\ell$.} We will consider a perfect binary tree $T_\ell$ of $\ell$ leafs (see~\autoref{fig:binarytree} for an illustration). Note that $T_\ell$ has $\log\ell$ layers. We label the nodes in $T_\ell$ first from root to leaves and then from left to right, starting with $1$.

For a node $u$ in $T_\ell$, we observe that $u$'s parent is $\lfloor u/2 \rfloor$ if $u$ is not the root (\ie, $u \ne 1$) and $u$'s two children are $2u$ and $2u + 1$ if $u$ is not a leaf (\ie, $u < \ell$). We use $P_u$ to denote the set of nodes consisting of $u$ and all ancestors of $u$. Formally, we have
\[
P_u = \begin{cases}
\{u\} \quad& \text{if $u = 1$,}\\
\{u\} \cup P_{\lfloor u/2 \rfloor} \quad& \text{if $u > 1$.}
\end{cases}
\]

We also define $R_u$ as follows:
\[
R_u = \{\text{$u\in P_u$ or $\lfloor u/2\rfloor \in P$}  : u \in [2\ell - 1] \}.
\]
That is, a node $v$ belongs to $R_u$ if either $v$ is in $P_u$ or the parent of $v$ is in $P_u$. Also, for a set of nodes $S \subseteq [2\ell - 1]$, we set $R_S = \bigcup_{u \in S} R_u$. To avoid nested subscripts, we sometimes  use $R(S)$ to denote $R_S$.

\begin{figure}
\begin{center}
\begin{tikzpicture}[level/.style={sibling distance=80mm/#1},scale = 0.8]
    \node [circle,draw]{$1$}
      child {
      	node [circle,draw] {$2$}
      	child {
      		node [circle,draw] {$4$}
      		child {
      			node {$\vdots$} 
      			child {node [circle,draw,scale = 0.75] {\footnotesize$\ell+0$}}
      			child {node [circle,draw,scale = 0.75] (a) {\footnotesize$\ell+1$}}
      		}
      		child {node {$\vdots$}}
      	}
      	child {
      		node [circle,draw] {$5$}
      		child {node {$\vdots$}}
      		child {node {$\vdots$}}
      	}
      }
      child {
      	node [circle,draw] {$3$}
      	child {
      		node [circle,draw] {$6$}
      		child {node {$\vdots$}}
      		child {node {$\vdots$}}
      	}
      	child {
      		node [circle,draw] {$7$}
      		child {node {$\vdots$}}
      		child {node {$\vdots$}
      			child {node [circle,draw,scale = 0.75] (b) {\footnotesize$2\ell-2$}}
      			child {node [circle,draw,scale = 0.75] {\footnotesize$2\ell-1$}}
      		}
      	}
      };
      \path (a) -- (b) node [midway] {$\cdots\cdots\cdots\cdots\cdots\cdots\cdots\cdots\cdots\cdots\cdots\cdots$};
\end{tikzpicture}
\caption{An illustration of the labeling of the nodes in the tree $T_\ell$ with $\ell$ leaves}\label{fig:binarytree}
\end{center}
\end{figure}

Given an $N$-qubit state $\sigma$, we first recall the following commitment algorithm (\autoref{algo:commit}) from~\cite{ChenM22}. Below we change the procedure slightly so that one leaf of the tree only stores a single qubit from the $N$-qubit quantum state.

\begin{algorithm}[H]
	\caption{Algorithm for committing to an $N$-qubit quantum state}\label{algo:commit}
	\SetKwProg{Fn}{Function}{}{}
	\Fn{$\commit^{\caG}(\sigma,N,\lambda)$}{
		\KwIn{$\sigma$ is an $N$-qubit quantum state; $\lambda$ is the security parameter (recall that $\lambda = 3\blk$)}
		Let $\ell = N$\; %\tcp{We assume $\ell = 2^d$ for $d \in \N$ for simplicity}
		For each node $u$ in $T_\ell$, create a $\blk$-qubit register $\state^{(u)}$\;
		Store $\sigma$ in  $\state^{(\ell)}(1),\state^{(\ell+1)}(1),\dotsc,\state^{(2\ell-1)}(1)$, other qubits of these registers are initialized to the all-zero state\; 
		\For{$u$ from $\ell - 1$ down to $1$}{
			Initialize $\state^{(u)}$ as $\spz{0^\blk}$\;
			Apply $\caG$ on $\state^{(2u)}$, $\state^{(2u+1)}$, and $\state^{(u)}$\;
		}
		\Return all qubits in $\{ \state^{(u)} \}_{u \in [2\ell - 1]}$\;
	}
\end{algorithm}

Next, we describe the following local decommitment algorithm (\autoref{algo:decommit}), which allows multiple rounds of queries.

\newcommand{\Sold}{S^{\mathsf{old}}}
\newcommand{\Snew}{S^{\mathsf{new}}}

\begin{algorithm}[H]
	\caption{Algorithm for recovering part of the original quantum state; it supports multiple rounds of queries }\label{algo:decommit}
	\SetKwProg{Fn}{Function}{}{}
	\Fn{$\decommit^{\caG^\dagger}\left(N,\lambda, \Sold, \Snew, \{ \eta_{u} \}_{u \in R(\Snew) \cup R(\Sold)}\right)$}{
		\KwIn{$\Sold,\Snew \subseteq \{\ell,\ell+1,\dotsc,2\ell - 1\}$ are two subsets of leaves in $T_\ell$, $\Sold$ denotes the subset that is already known and $\Snew$ denotes the subset we are going to decommit;
		$\newline$for each $u \in R(\Snew) \cup R(\Sold)$, $\eta_u$ is a $\blk$-qubit quantum state; $\lambda$ is the security parameter}
		Let $\ell = N$\; %\tcp{We assume $\ell = 2^d$ for $d \in \N$ for simplicity}
		For each node $u$ in $R(\Snew) \cup R(\Sold)$, create a $\blk$-qubit register $\state^{(u)}$, and store $\eta_u$ in $\state^{(u)}$\;
		\For{$u \in \Big(R(\Snew) \setminus R(\Sold)\Big) \cap [\ell - 1]$, from the smallest to the largest\label{line:order-decom}}{
			Apply $\caG^\dagger$ on $\state^{(2u)}$, $\state^{(2u+1)}$, and $\state^{(u)}$\;
			Measure $\state^{(u)}$ in the computational basis to obtain an outcome $z \in \bits^{\blk}$\;
			\lIf{$z \ne 0^{\blk}$}{\Return $\perp$}
		}
		\Return all qubits in $\{ \state^{(u)} \}_{u \in S}$\;
	}
\end{algorithm}

\newcommand{\Sent}{\mathsf{Sent}}

Finally, we are ready to specify the following candidate zero-knowledge succinct argument for $L \in (k,\ellL)\text{-}\SimQMA_{s,c}$.

\begin{construction}{The candidate zero-knowledge succinct argument $\Pizk$ for $L \in (k,\ellL)\text{-}\LocalQMA_{s,c}$}
	\begin{itemize}
		\item Both prover ($\caP$) and verifier ($\caV$) get access to a Haar random quantum unitary $\caG$ acting on $3\blk = \lambda$ qubits and its inverse $\caG^\dagger$. They also both get an input $x \in \bits^n$ to $L$. The goal for the prover is to convince the verifier that $x \in L_{\yes}$.
		
		%They both get an instance of $(k,\eps)$-$\projLH$ with Hamiltonian $G = \sum_{i \in [m]} \Pi_i$ acting on $N$ qubit. The goal is to determine whether $\lmin(G) = 0$ or $\lmin(G) \ge \eps \cdot m$, with the promise that one of them is the case. 
		
		Let $\ell = N$ and we assume that $\ell = 2^d$ for $d \in \N$.
		
		\item (\textbf{First message: $\caP \to \caV$}) The honest prover $\caP$ acts as follows: If $x \in L_{\no}$, $\caP$ aborts immediately. Otherwise, $\caP$ finds an $N$-qubit state $\sigma$ such that $V_L$ accepts with probability at least $c$, and runs $\commit(\sigma,N,\lambda)$ to obtain qubits $\{ \eta_{u} \}_{u \in [2\ell - 1]}$.
		
		$\caP$ then sends $\eta_{1}$ to $\caV$.
		
		\item (\textbf{Simulation of $V_L$}) $\caV$ now simulates the local verifier $V_L$. It first draws $\tau_0 \getsR [m_L(n)]$.
		
		Next for each $i \in [\ellL]$:
		
		\begin{enumerate}
			\item (\textbf{First turn: $\caV \to \caP$}) $\caV$ first simulates $V_L$ to compute a subset $S_i \subseteq [N]$, given previous outcomes $\tau_{\le i-1}$, together with a POVM $\{ \Pi_j \}_{j \in [m_L(n)]}$ on $|S_i|$ qubits. Note that $S_i \cap S_j = \emptyset$ for all $j < i$. It then sends $\tau_{i-1}$ to $\caP$.
			
			Let $W_i$ be the set of leaves in $T_\ell$ that contains the qubits indexed by $S_i$. That is,
			\[
			W_i = \{ \ell + u - 1 : u \in S_i \}.
			\]
			
			Let $\Sent_{\le i}$ to denote $\bigcup_{j \in [i]} W_j$. For simplicity, we  let $\Sent_{\le 0} = \Sent_0 = \{1\}$.
			
			\item (\textbf{Second turn: $\caP \to \caV$}) The honest prover $\caP$ sends the following to $\caV$   \footnote{Given $\tau_{i-1}$, $\caP$ would be able to compute the set $S_i$ and thus $W_i$. $\caP$ also already knows $\Sent_{\le i-1}$.} 
			\[
			\{\eta_u\}_{u \in R(W_i) \setminus R(\Sent_{\le i-1})}.
			\]

			\item (\textbf{Checking}) $\caV$ then runs $\decommit(N,\lambda,\Sent_{\le i-1},W_i,\{ \eta_u \}_{u \in R(\Sent_{\le i})})$. Note that $\caV$ already has $\{ \eta_u \}_{u \in \Sent_{\le i-1}}$. If $\decommit$ returns $\perp$, $\caV$ rejects immediately. Otherwise, $\caV$ continues the simulation of $V_L$ by measuring $\{ \Pi_j \}_{j \in [m_L(n)]}$ on the corresponding qubits from the $\eta_u$'s to obtain an outcome $\tau_i$.
		\end{enumerate}
	
		\item (\textbf{Final decision}) Finally, $\caV$ simulates $V_L$ based on the sequence $\tau_{\le \ellL}$ to decide whether it accepts or not.
		
	\end{itemize}
\end{construction}

\subsection{Analysis of $\Pizk$}

Now we prove the completeness and succinctness of $\Pizk$.

\begin{theorem}\label{theo:Pizk-c-and-s}
	Let $\Pizk$ be the protocol between $\caP$ and $\caV$ for the promise language $L \in (k,\ellL)\text{-}\SimQMA_{s,c}$. For every $x \in \bits^n$, the following hold:
	\begin{description}
		\item[] \textbf{Completeness:}
		{ 	If $x \in L_{\sf yes}$, then for every $\caG \in \Haar(2^\lambda)$,
			\[
			\Pr[(\caV^{\caG,\caG^\dagger} \leftrightarrows \caP^{\caG,\caG^\dagger} )(x) = 1] \ge c.
			\]}
		\item[]  \textbf{Succinctness:} {
			$\caP$ and $\caV$ communicate at most $O(k \cdot \ellL \cdot \log n \cdot \lambda)$ qubits in total. }
		
		\item[] \textbf{Efficiency:} {$\caV$ runs in $\poly(n,k,\gamma)$ time. If $V_L$ is strongly explicit, then $\caV$ runs in $O(k \cdot \ellL \cdot \log n \cdot \lambda + \poly(\log n,k,\gamma))$ time.
		}
	\end{description}
\end{theorem}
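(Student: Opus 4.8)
The three properties are mostly bookkeeping; the only genuinely technical one is completeness, so I would spend the bulk of the argument there. The plan for completeness is to show that, when $\caP$ is honest, the sequence of $\decommit^{\caG^\dagger}$ calls performed by $\caV$ exactly inverts the honest $\commit^{\caG}(\sigma,N,\lambda)$ along the queried root-to-leaf paths. First I would note that $\commit^{\caG}$ is a unitary on the $2\ell-1$ block registers $\{\state^{(u)}\}$, assembled from one $\caG$-gate per internal node $u$, where that gate acts only on $\state^{(u)},\state^{(2u)},\state^{(2u+1)}$ with $\state^{(u)}$ having just been reset to $\spz{0^{\blk}}$. Then, by induction over the internal nodes in $\left(R(W_i)\setminus R(\Sent_{\le i-1})\right)\cap[\ell-1]$ in the processing order of \autoref{line:order-decom} (smallest index first, i.e.\ top-down), I would establish that when the $\caG^{\dagger}$ at node $u$ is applied, the register $\state^{(u)}$ already holds precisely the content it had immediately after the corresponding $\caG$ during $\commit$ — either because $\caV$ received that block ($R(\cdot)$ is designed to contain exactly the sibling blocks needed along each path) or because it was recovered while decommitting an ancestor of $u$ in this round or an earlier one. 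Hence each $\caG^{\dagger}$ inverts its matching $\caG$, so $\state^{(u)}$ returns to $\spz{0^{\blk}}$, the computational-basis measurement sees $0^{\blk}$, and $\decommit$ never returns $\perp$; peeling down to the leaves in $W_i$ then recovers exactly the qubits of $\sigma$ indexed by $S_i$.

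Since the $S_i$ (hence the $W_i$) are pairwise disjoint and $\decommit$ in round $i$ re-opens only nodes outside $R(\Sent_{\le i-1})$, the POVMs $\{\Pi_j\}$ measured across the $\ellL$ rounds act on disjoint blocks of $\sigma$, so their joint distribution equals that of $V_L$'s measurements on $\sigma_{S_1},\dots,\sigma_{S_{\ellL}}$; consequently $\caV$'s final decision is distributed exactly as $V_L$'s decision on $\sigma$, which accepts with probability at least $c$. As only the unitarity of $\caG$ was used, this holds for every $\caG\in\Haar(2^\lambda)$, giving completeness.

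For succinctness I would simply count messages. The first message $\eta_1$ is $\blk=\lambda/3$ qubits. In round $i$, $\caV$ sends $\tau_{i-1}\in[m_L(n)]$, i.e.\ $O(\log n)$ classical bits (as $m_L$ is a polynomial), and $\caP$ sends $\{\eta_u\}_{u\in R(W_i)\setminus R(\Sent_{\le i-1})}$. Since $|W_i|\le|S_i|\le k$, and for a leaf $w$ the set $R_w$ — the root-to-$w$ path together with the siblings of its path nodes — has size $O(\log\ell)=O(\log N)=O(\log n)$ (recall $N=p_L(n)$ is polynomial), and each $\eta_u$ has $\blk=\lambda/3$ qubits, round $i$ contributes at most $O(k\cdot\log n\cdot\lambda)$ qubits. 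Summing over the $\ellL$ rounds and adding the first message yields the claimed $O(k\cdot\ellL\cdot\log n\cdot\lambda)$ total.

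For efficiency, in round $i$ the verifier: simulates $V_L$ on $\tau_{\le i-1}$ to compute $S_i$, $W_i$, and $\{\Pi_j\}_{j\in[m_L(n)]}$; runs $\decommit$, which issues $|R(W_i)\setminus R(\Sent_{\le i-1})|=O(k\log n)$ applications of $\caG^\dagger$ (one oracle call each, on $\lambda$ qubits) together with as many computational-basis measurements; and measures $\{\Pi_j\}$ on the $O(k)$ recovered qubits; at the end it computes the output from $\tau_{\le\ellL}$. The $V_L$-simulation costs $\poly(n,k,\gamma)$ by \autoref{defi:simQMA}, or $\poly(\log n,k,\gamma)$ if $V_L$ is strongly explicit; the $\decommit$ step costs $O(k\log n\cdot\lambda)$; the POVM measurements cost $\poly(n,k,\gamma)$ (resp.\ $\poly(\log n,k,\gamma)$). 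Summing over the $\ellL$ rounds and using $k\cdot\ellL\le O(\log n)$ gives $\poly(n,k,\gamma)$ in general, and $O(k\cdot\ellL\cdot\log n\cdot\lambda+\poly(\log n,k,\gamma))$ when $V_L$ is strongly explicit. The main obstacle throughout is the completeness step: tracking the globally entangled state produced by $\commit$ through a multi-round partial decommitment and checking that opening a later path neither triggers a $0^{\blk}$-check failure nor disturbs blocks already measured in earlier rounds. This is precisely where disjointness of the $S_i$ and the rule that $\decommit$ never re-opens a node already in $R(\Sent_{\le i-1})$ enter, and it reduces matters to the single-round correctness of the quantum Merkle-tree commitment of~\cite{ChenM22}.
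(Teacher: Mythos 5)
Your proposal is correct and follows essentially the same route as the paper: the same message-counting for succinctness and efficiency, and for completeness the same observation that the honest commitment is $U_{\sf com}=\caG_{(1)}\cdots\caG_{(\ell-1)}$ applied to $\sigma\otimes\proj{0}$ and that the top-down $\decommit$ calls apply exactly the gates of $U_{\sf com}^{\dagger}$ in the lightcone of the opened leaves in chronological order, so each $\caG^{\dagger}$ inverts its matching $\caG$, the $0^{\blk}$-checks pass, and $\caV$ faithfully simulates $V_L$ on $\sigma$ for every fixed $\caG$. Your induction over the processing order of Line~\ref{line:order-decom} is just a more explicit rendering of the paper's lightcone/chronological-order argument.
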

\begin{proof}
	We first establish the succinctness part. Examining the protocol $\Pizk$, one can see that the first message takes $O(\lambda)$ qubits. For the later $\ellL$ rounds, the message of the first turn takes $O(\log m_L(n)) = O(\log n)$ classical bits, and the message of the second turn takes at most $O\left( |R_{W_i}| \cdot \lambda \right)$ qubits. Note that $|R_{W_i}| \le |W_i| \cdot O(\log \ell) \le k \cdot O(\log N) \le O(k \cdot \log n)$, the total communication complexity is thus bounded by $O(k \cdot \ellL \cdot \log n \cdot \lambda)$. 
	
	For the running time of $\caV$, one can see that its running time is dominated by the running time of $\decommit$ and the running time of $V_L$ computing $W_i$ and $\{ \Pi_j \}$, which are at most $O(k \cdot \log N \cdot \lambda)$ and $\poly(n,k,\gamma)$ ($\poly(\log n,k,\gamma)$ if $V_L$ is strongly explicit), respectively.
	
	Now we prove the completeness. Let $\caG_{(u)}$ be a $\caG$ gate applying on registers $\state^{(2u)}$, $\state^{(2u+1)}$, and $\state^{(u)}$. Then we know for the honest prover $\caP$, when $x \in L_{\yes}$, it prepares an $N$-qubit state $\sigma$ that makes $V_L$ accepts with probability at least $c$, and then applies $U_{\sf com} \coloneqq \caG_{(1)} \cdot \dotsc\cdot\caG_{(\ell-1)}$ to $\sigma \otimes \proj{0}_{\state^{(1)},\dotsc,\state^{(\ell-1)}}$.
	
	Let $U_{\sf decom} \coloneqq U_{\sf com}^{\dagger} = \caG_{(\ell-1)}^\dagger\cdot \dotsc \cdot \caG_{(1)}^\dagger$. Recall that verifier $\caV$ at the $i$-th round simulates $V_L$ only on registers in $\{ \state^{(u)} \}_{u \in W_i}$. We now argue that $\caV$ is effectively simulating $V_L$ on
	\[
	U_{\sf decom}^{\dagger} U_{\sf com} \sigma \otimes \proj{0}_{\state^{(1)},\dotsc,\state^{(\ell-1)}} = \sigma \otimes \proj{0}_{\state^{(1)},\dotsc,\state^{(\ell-1)}}.
	\]
	
	First, we can see that when $i = 1$ and $\Sent_{\le i-1} = \{ 1 \}$, $\decommit(N,\lambda,\Sent_{\le i-1},W_i,\{ \eta_u \}_{u \in R(\Sent_{\le i})})$ performs all gates in $U_{\sf decom}$ that lies in the lightcone of the registers $\{ \state^{(u)} \}_{u \in W_i}$ in the chronological order (see Line~\ref{line:order-decom} of \autoref{algo:decommit}). Also, since $\caP$ starts with the state $\sigma \otimes \proj{0}_{\state^{(1)},\dotsc,\state^{(\ell-1)}}$, $\decommit$ never outputs $\bot$. For $i > 1$, we also observe that $\decommit(N,\lambda,\Sent_{\le i-1},W_i,\{ \eta_u \}_{u \in R(\Sent_{\le i})})$ performs the additional gates in $U_{\sf decom} $ in the correct order so that all gates in $U_{\sf decom}$ that lie in the lightcone of the registers $\{ \state^{(u)} \}_{u \in \Sent_{\le i}}$ are performed in the chronological order. Therefore, $\caV$ is simulating $V_L$ faithfully on $\sigma$, meaning that it accepts with probability at least $c$.
\end{proof}

Similarly to~\cite{ChenM22}, we conjecture that the soundness also holds.
\begin{conjecture}[$\Pizk$ is sound in $\QHROM$]\label{conj:soundness}
	Let $\Pizk$ be the protocol between $\caP$ and $\caV$ for the promise language $L \in (k,\ellL)\text{-}\SimQMA_{s,c}$. For every $x \in \bits^n$, the following hold:
	\begin{description}
		\item[] \textbf{Soundness:}
		{ 
			If $x \in L_{\no}$, then for every $t \in \N$ and all (potentially malicious) $\caP^*$ that make at most $t$ total queries to $\caG$ and $\caG^\dagger$, for some $\delta = \delta(t,\lambda) = \poly(t)/2^{\Omega(\lambda)}$, it holds that
			\[
			\Pr_{\caG \getsR \Haar(2^\lambda)} \left[ \Pr[(\caV^{\caG,\caG^\dagger}  \leftrightarrows (\caP^*)^{\caG,\caG^\dagger} ) (x) = 1)] \ge s + \delta \right] \le \delta.
			\]
		}
	\end{description}
\end{conjecture}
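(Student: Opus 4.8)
The plan is to mirror the soundness analysis of Kilian's succinct argument in the random-oracle model, adapted to $\QHROM$ and to quantum Merkle trees, upgrading the semi-honest analysis of~\cite{ChenM22} to arbitrary query-bounded provers. The heart of the argument is an \emph{extraction/binding} property of the commitment scheme of~\autoref{algo:commit}: for a Haar-random $\caG$ acting on $\lambda = 3\blk$ qubits, any prover $\caP^*$ making at most $t$ queries to $\caG,\caG^\dagger$ is, except with probability $\poly(t)/2^{\Omega(\lambda)}$ over $\caG$, effectively bound to a single $N$-qubit state. Concretely, I would aim to show: from the interaction between $\caV$ and $\caP^*$ (the first message $\eta_1$, the prover's internal register, and its query history) one can define an $N$-qubit density matrix $\sigma^*$ such that in every round $i$, conditioned on $\decommit$ not returning $\perp$, the qubits handed to $\caV$ at positions $S_i$ are within trace distance $\poly(t)/2^{\Omega(\lambda)}$ of $\sigma^*_{S_i}$, and moreover that these reduced states are \emph{jointly} consistent across the (at most $\ellL$) rounds. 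This is precisely the statement~\cite{ChenM22} established only for semi-honest provers; proving it in general is the real content.

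To get the binding property I would first isolate a clean abstract lemma about Haar-random oracles: a single application of $\caG$ to three $\blk$-qubit blocks, followed later by $\caG^\dagger$ with a measurement of the parent block and a check that it equals $0^\blk$, behaves like a collapsing (state-)binding commitment of a $\blk$-qubit block into a $\blk$-qubit tag, with advantage $\poly(t)/2^{\Omega(\lambda)}$ against $t$-query adversaries; here one uses that a Haar-random unitary on $\lambda$ qubits is, against few queries, indistinguishable from a lazily-sampled oracle, together with standard collision/compression bounds (the $2^{-\Omega(\lambda)}$ term coming from $\blk = \lambda/3$). Composing this up the tree $T_\ell$ --- using that $\decommit$ only applies $\caG^\dagger$ along the light cones $R(W_i)$ in chronological order, so that opening a new set $W_i$ never undoes a previously opened light cone --- one obtains global consistency of the $\sigma^*_{S_i}$'s. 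The multi-round structure is handled by induction on $i$: after round $i-1$ the state of the opened registers is pinned down, up to the commitment error, to $\sigma^*_{\Sent_{\le i-1}}$; the verifier's POVM $\{\Pi_j\}$ acts only on those qubits; and the decommitment in round $i$ extends this to $\Sent_{\le i}$.

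Given the extracted $\sigma^*$, the rest is immediate: $\caV$'s acceptance against $\caP^*$ is, up to additive error $\delta' = \poly(t)/2^{\Omega(\lambda)}$, equal to the acceptance probability of the $\SimQMA$ verifier $V_L$ run on $\sigma^*$, since in each round $\caV$ simply applies $V_L$'s prescribed POVM to (a state close to) $\sigma^*_{S_i}$ and then makes $V_L$'s final decision. Because $x \in L_{\no}$, soundness of the $(k,\ellL)$-$\SimQMA$ verifier gives that $V_L$ accepts \emph{every} $N$-qubit state with probability at most $s$, so $\Pr[(\caV \leftrightarrows \caP^*)(x) = 1] \le s + \delta'$ whenever the bad-$\caG$ event (of probability $2^{-\Omega(\lambda)}$) does not occur; taking $\delta = \delta(t,\lambda)$ to dominate both $\delta'$ and the bad-event probability and applying Markov's inequality yields the stated $\Pr_{\caG}[\,\cdot \ge s + \delta\,] \le \delta$. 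The main obstacle, by far, is the first step: establishing the collapsing/binding and extraction property of the quantum-Merkle-tree commitment against general query-bounded (but computationally unbounded) quantum provers --- exactly the gap left open in~\cite{ChenM22} --- which must contend with the prover's first message being an arbitrary quantum state entangled with its workspace, with the disturbance caused by $\caV$'s intermediate POVM measurements, and with the need to argue joint rather than merely per-round consistency of the openings.
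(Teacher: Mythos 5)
The statement you are addressing is not proved in the paper at all: it is stated as \autoref{conj:soundness}, an open conjecture (the paper, like~\cite{ChenM22}, only establishes completeness and succinctness in~\autoref{theo:Pizk-c-and-s} and explicitly leaves soundness against general query-bounded provers unresolved). Your proposal is therefore best judged as a proof plan, and as such it has a genuine gap: the entire content of the conjecture is the extraction/binding property you defer to a ``clean abstract lemma'' --- that a Haar-random $\caG$ on $\lambda=3\blk$ qubits, used as in \autoref{algo:commit}/\autoref{algo:decommit}, is collapsing/state-binding against arbitrary $t$-query provers with advantage $\poly(t)/2^{\Omega(\lambda)}$, and that this composes along the tree and across the $\ellL$ adaptive rounds to pin down a single witness $\sigma^*$. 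You assert this lemma rather than prove it, and you yourself concede it is ``the real content.'' Everything after it (running $V_L$ on $\sigma^*$, invoking $\SimQMA$ soundness, Markov over $\caG$) is routine; nothing before it is.

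Two concrete reasons this step cannot be waved through with ``lazily-sampled oracle'' plus ``standard collision/compression bounds.'' First, those tools (Zhandry-style compressed oracles) are for random \emph{functions}; no analogous off-the-shelf lazy-sampling machinery for Haar-random \emph{unitaries} is available to this construction, which is precisely why~\cite{ChenM22} could only handle semi-honest provers. Second, the committed data here is quantum, not classical: the malicious prover's first message $\eta_1$ can be arbitrarily entangled with its workspace, the verifier's intermediate POVMs in the $\SimQMA$ simulation disturb the opened registers, and the later openings along $R(W_i)\setminus R(\Sent_{\le i-1})$ interact with earlier ones, so even \emph{defining} the extracted state $\sigma^*$ and formulating ``joint consistency across rounds'' (as opposed to per-round consistency) requires new definitions and a new argument --- classical rewinding/extraction for Merkle trees has no direct quantum analogue. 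Until that binding-and-extraction lemma is actually established, your write-up restates the conjecture's difficulty rather than resolving it.
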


Finally, we conjecture that $\Pizk$ is zero knowledge.

\begin{conjecture}\label{conj:Pizk-is-zk}
	$\Pizk$ is computational zero knowledge in the $\QHROM$.
\end{conjecture}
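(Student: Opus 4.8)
The plan is to reduce \autoref{conj:Pizk-is-zk} to a single \emph{equivocation lemma} for the quantum Merkle tree of~\cite{ChenM22} in $\QHROM$, and then to obtain the simulator by composing that lemma with the local simulability guaranteed by \autoref{defi:simQMA}. Concretely, given a malicious polynomial-time verifier $\caV^*$ with auxiliary input $\zeta$, the simulator $\mathcal{S}_{\caV^*}^{\caG,\caG^\dagger}$ runs $\caV^*$ round by round, playing the role of the honest prover $\caP$, except that it never holds a witness: it produces a \emph{fake root message} first --- committing honestly to the simulated witness is not an option, since \autoref{defi:simQMA} only supplies reduced density matrices on $O(\log n)$-qubit subsets and not a globally consistent $N$-qubit state --- and whenever $\caV^*$ has produced $\tau_{i-1}$ and thus fixed the query set $S_i$ of $V_L$ and the corresponding leaf set $W_i$, the simulator invokes the simulability algorithm of \autoref{defi:simQMA} on $\tau_{\le i-1}$ to obtain a classical description of a density matrix $\sigma'$ with $\|\sigma' - \sigma_{S_{\le i}}\|_1 \le \eps(n)$, and then \emph{equivocates} the commitment so that the leaves in $\Sent_{\le i}$ decommit to $\sigma'$ consistently with all messages sent so far.

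The equivocation lemma one needs is the Haar/$\QHROM$ analogue of the classical fact that Merkle trees are statistically hiding and equivocal in the random oracle model. Informally: there is a stateful simulator $(\widetilde{\commit},\widetilde{\decommit})$, together with an oracle-patching procedure, such that for every $N$-qubit state $\sigma$ and every adaptively chosen sequence of disjoint leaf-sets $W_1,\dotsc,W_{\ellL}$, the joint state consisting of the root message, the decommitment messages $\{\eta_u\}_{u \in R(W_i) \setminus R(\Sent_{\le i-1})}$, and the residual register of a $t$-query verifier $\caV^*$, produced by the honest $\commit/\decommit$ acting on $\sigma$, is within $\poly(t)/2^{\Omega(\lambda)}$ trace distance of the one produced by $(\widetilde{\commit},\widetilde{\decommit})$ that only ever reads the reduced states $\sigma_{S_{\le i}}$. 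Given such a lemma, the simulator above works: it feeds $(\widetilde{\commit},\widetilde{\decommit})$ the simulated reduced states $\sigma'$ in place of the true $\sigma_{S_{\le i}}$ (legitimate because the only data qubits appearing on the revealed paths $R(\Sent_{\le i})$ are exactly those indexed by $S_{\le i}$), and a hybrid argument over the $\ellL$ rounds combined with the triangle inequality and the $\eps(n)$ simulability error shows $\mathcal{S}_{\caV^*}^{\caG,\caG^\dagger}(x,\zeta) \approx \View_{\caV^*}((\caV^*)^{\caG,\caG^\dagger} \leftrightarrows \caP^{\caG,\caG^\dagger})(x)$ up to $\negl(n)$.

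The main obstacle is, of course, proving the equivocation lemma, which is at least as hard as \autoref{conj:soundness} and is genuinely quantum for three reasons. First, one must ``reprogram'' a Haar-random $\caG$ (together with $\caG^\dagger$) on the internal-node registers so that the fake root decommits to the desired values, while a verifier making $t$ superposition queries to $\caG,\caG^\dagger$ cannot detect the change; this is the Haar-measure counterpart of the one-way-to-hiding / compressed-oracle machinery, and, unlike in the classical setting, the patched region is a subspace of dimension $2^{\Theta(\lambda)}$ that must stay invisible to $\poly(t)$ queries, which is the source of the $\poly(t)/2^{\Omega(\lambda)}$ error. Second, because the committed object is a \emph{quantum} state, hiding cannot be argued via ``the commitment is a random string independent of the message''; one has to show that the reduced state on any revealed set of registers $R(\Sent_{\le i})$ is statistically close to a fixed function of $\sigma_{S_{\le i}}$ alone, which is essentially a decoupling statement for the random unitaries $\caG$ composed along the tree $T_\ell$. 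Third, the rounds are adaptive: $W_i$ depends on $\tau_{\le i-1}$, which in turn depends on $\caV^*$'s measurement of previously decommitted qubits, so the equivocation must be carried out \emph{online} and remain consistent with all earlier patches; this forces the patching to proceed incrementally along the newly exposed paths $R(W_i) \setminus R(\Sent_{\le i-1})$, with an induction on $i$ that tracks the accumulated error. Assembling these three ingredients --- quantum oracle reprogramming for Haar-random unitaries, decoupling along the Merkle paths, and online consistency across adaptive rounds --- is precisely what a full proof would require, which is why the statement is posed here only as a conjecture.
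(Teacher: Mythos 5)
There is nothing in the paper to compare your argument against: \autoref{conj:Pizk-is-zk} is stated as a conjecture and the authors deliberately leave it unproved. Your proposal does not close that gap either --- it is a proof plan, not a proof. All of the cryptographic substance is deferred to an ``equivocation lemma'' for the quantum Merkle tree in the $\QHROM$ (reprogramming a Haar-random $\caG,\caG^\dagger$ against $t$ superposition queries, a decoupling statement along the tree, and online consistency across the adaptive rounds), and you yourself concede that this lemma is at least as hard as \autoref{conj:soundness} and is left unproven. The outer layer of your simulator --- run $\caV^*$ round by round, use the simulability algorithm of \autoref{defi:simQMA} to obtain $\sigma'$ with $\|\sigma'-\sigma_{S_{\le i}}\|_1\le\eps(n)$, and do a hybrid over the $\ellL$ rounds --- is the natural strategy and is consistent with what the authors evidently had in mind, but reducing one conjecture to another (harder-looking) conjecture does not establish the statement; the three ``ingredients'' you enumerate are precisely the open problem, not steps you have carried out.

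Beyond that, one concrete claim in your reduction is wrong as written: you assert that the only witness qubits appearing on the revealed registers $R(\Sent_{\le i})$ are those indexed by $S_{\le i}$. Under the paper's definition, $R_u$ contains every node whose parent lies on the path $P_u$, so for each queried leaf it also contains that leaf's \emph{sibling leaf}, whose register $\state^{(v)}$ carries a witness qubit generally outside $S_{\le i}$. These sibling registers are among the $\eta_u$'s sent to the verifier, and once the verifier applies $\caG^\dagger$ at the common parent inside $\decommit$, the sibling's data qubit sits in the clear in its view. Hence even granting your equivocation lemma, the simulator needs (approximate) reduced states on a strictly larger set than \autoref{defi:simQMA} supplies for the sequences $\tau_{\le i-1}$. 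This is plausibly repairable --- e.g.\ pad the witness so that siblings of data leaves are dummy qubits, or enlarge the local verifier's query sets from $k$ to $2k$ (still $O(\log n)$) --- but as stated your reduction has this hole in addition to the unproven core lemma.
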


Similarly to~\cite{ChenM22}, we remark that (1) the constant soundness in \autoref{conj:soundness} and the constant completeness in~\autoref{theo:Pizk-c-and-s} can be easily amplified to $n^{-\omega(1)}$ and $1-n^{-\omega(1)}$ by repeating the protocols $\log^2 n$ times, and (2) assuming $\QPCP$, the protocol works for all languages in $\QMA$.

\begin{corollary}\label{cor:suc-protocol-for-SimQMA}
	Assuming \autoref{conj:soundness} and~\autoref{conj:Pizk-is-zk}, there is a computational zero-knowledge protocol for $L \in (k,\ellL)\text{-}\SimQMA_{s,c}$ with $\lambda \cdot \polylog(n)$ communication complexity, completeness $1-n^{-\omega(1)}$ and soundness $n^{-\omega(1)}$ in $\QHROM$. Also, if $V_L$ is strongly explicit, then the verifier running time of the protocol is also bounded by $\lambda \cdot \polylog(n)$.
	
	Moreover, if we further assume that $\QPCP$ holds, then the aforementioned succinct zero-knowledge protocol exists for every $L \in \QMA$.
\end{corollary}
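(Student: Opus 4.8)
The plan is to obtain the corollary by assembling the three facts already in hand for the single-shot protocol $\Pizk$ and then reducing its error by repetition, finally routing $\QMA$ through \autoref{theo:LocalQMA-in-SimQMA}. Fix $L\in(k,\ellL)\text{-}\SimQMA_{s,c}$ with $s<c$ constant and, as a standing assumption, $k\cdot\ellL\le O(\log n)$. \autoref{theo:Pizk-c-and-s} already gives that $\Pizk$ has completeness $c$, communication $O(k\ellL\log n\cdot\lambda)=\lambda\cdot\polylog(n)$, and verifier time $O(k\ellL\log n\cdot\lambda+\poly(\log n,k,\gamma))=\lambda\cdot\polylog(n)$ when $V_L$ is strongly explicit; \autoref{conj:soundness} gives that in the no case every $t$-query prover $\caP^*$ makes $\caV$ accept with probability at most $s+\delta$ except on a $\delta$-fraction of oracles, with $\delta=\poly(t)/2^{\Omega(\lambda)}$; and \autoref{conj:Pizk-is-zk} gives that $\Pizk$ is computational zero knowledge in $\QHROM$. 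So the only genuinely new content is (i) amplifying the constant gap $(s,c)$ to $(n^{-\omega(1)},\,1-n^{-\omega(1)})$ while preserving succinctness, efficiency, and zero knowledge, and (ii) the inclusion $\QMA\subseteq(O(\log n),2)\text{-}\SimQMA$.

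For (i) I would run $T=\Theta(\log^2 n)$ executions of $\Pizk$ \emph{sequentially}, each with a fresh block of $\lambda$ oracle qubits obtained by domain-separating a single Haar oracle on $\lambda+\log T$ qubits, and have $\caV$ accept iff at least $\theta T$ executions accept, with $\theta=(s+c)/2$. Completeness: the honest prover plays all $T$ executions honestly, each accepts independently with probability $\ge c>\theta$, so a Chernoff bound gives acceptance probability $1-2^{-\Omega((c-\theta)^2 T)}=1-n^{-\omega(1)}$. Soundness: for a $t$-query $\caP^*$ and each execution $j$, conditioning on the $j$-th oracle block being "good" (which fails with probability $\le\delta$), $\caP^*$'s success in execution $j$ is $\le s+\delta<\theta$ no matter what it learned from the earlier transcripts, since \autoref{conj:soundness} bounds an \emph{arbitrary} bounded-query strategy and the earlier executions used disjoint oracle blocks; hence the number of won executions is stochastically dominated by $\mathrm{Binomial}(T,s+\delta)$, and a union bound over the $T$ bad-oracle events plus a Chernoff bound gives overall acceptance probability $\le 2^{-\Omega((\theta-s)^2 T)}+T\delta=n^{-\omega(1)}$ (using that $\lambda$ is superlogarithmic so $\delta$ and $T\delta$ are negligible). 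Zero knowledge is preserved because sequential composition is: the simulator runs the $\Pizk$-simulator of \autoref{conj:Pizk-is-zk} once per execution on the matching oracle block, and a standard $T$-step hybrid argument shows the composite view is indistinguishable with negligible advantage, with no cross-execution rewinding needed since the base simulator is straight-line and oracle-relative. Communication and verifier time each multiply by $T=\polylog(n)$ and hence stay $\lambda\cdot\polylog(n)$.

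For (ii), if the quantum PCP conjecture holds — more precisely its variant $\QPCP_{\negl}$, which is what feeds \autoref{theo:LocalQMA-in-SimQMA} and which the text argues is the plausible formulation — then \autoref{cor:QPCP-negl-and-LocalQMA} gives constants $k_0,\gamma_0$, a constant soundness $\beta_0\in(0,1)$, and a completeness $c_0$ with $1-c_0$ negligible such that $\QMA\subseteq(k_0,\gamma_0)\text{-}\LocalQMA_{\beta_0,c_0}$. Plugging this into \autoref{theo:LocalQMA-in-SimQMA} with $\alpha=1-c_0$ negligible and $\beta=\beta_0$ yields $\QMA\subseteq(O(\log n),2)\text{-}\SimQMA_{s',c'}$ for a constant $s'$ and $c'$ with $1-c'$ negligible, and in particular with locality-times-rounds $O(\log n)$, so part (i) applies to this class and produces the claimed succinct zero-knowledge protocol for every $L\in\QMA$. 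If one insists on literally $\QPCP$ (so \autoref{cor:QPCP-and-LocalQMA} only yields $\LocalQMA$ completeness $1-\alpha$ with $\alpha$ a fixed constant), an additional step is required to amplify the $\LocalQMA$ completeness to $1-\negl$ while keeping a constant soundness gap before \autoref{theo:LocalQMA-in-SimQMA} can be invoked.

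The genuinely hard parts are the two assumed conjectures; beyond those, the main delicate points are (a) choosing a repetition scheme that provably preserves soundness against bounded-query provers who see the transcripts of earlier executions — handled cleanly by sequential composition, whereas a parallel scheme would require a quantum parallel-repetition theorem — and that preserves zero knowledge under composition, which works here only because the base simulator is straight-line; and (b) the gap between plain $\QPCP$, which yields only constant $\LocalQMA$ completeness, and the $1-\negl$ completeness that \autoref{theo:LocalQMA-in-SimQMA} demands, which is precisely why $\QPCP_{\negl}$ is the natural hypothesis for the $\QMA$ statement and why the completeness bound $1-n^{-\omega(1)}$ in the corollary is achievable.
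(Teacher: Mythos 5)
Your proposal follows essentially the route the paper itself takes: the paper gives no formal proof of this corollary, only the preceding remark that the constant completeness/soundness of $\Pizk$ from \autoref{theo:Pizk-c-and-s} and \autoref{conj:soundness} are amplified by repeating the protocol $\log^2 n$ times, and that the $\QMA$ statement follows by combining with \autoref{theo:LocalQMA-in-SimQMA}; your threshold-repetition argument and your routing of $\QMA$ through \autoref{cor:QPCP-negl-and-LocalQMA} and \autoref{theo:LocalQMA-in-SimQMA} are a fleshed-out version of exactly that.

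Two caveats on the detail you add. First, the device of ``domain-separating a single Haar oracle on $\lambda+\log T$ qubits'' into $T$ fresh blocks does not work as stated: a Haar-random unitary on $\lambda+\log T$ qubits does not restrict, for each fixed value of the extra $\log T$ qubits, to a unitary on $\lambda$ qubits at all, let alone to $T$ independent Haar unitaries on $\lambda$ qubits. The intended reading (and the paper's) is that all repetitions reuse the same oracle $\caG$, with the total query count $t$ absorbing every execution; your per-round conditioning argument then needs a bit more care, because the set of ``good'' oracles in \autoref{conj:soundness} is prover-dependent, so one cannot fix a good oracle and then quantify over all residual round-$j$ strategies. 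Since the single-shot soundness is itself only conjectured and the paper asserts the amplification without proof, this is a matter of rigor rather than a divergence from the paper's approach, but the disjoint-blocks justification as written would fail. Second, your observation that literal $\QPCP$ only yields constant completeness for $\LocalQMA$, whereas \autoref{theo:LocalQMA-in-SimQMA} requires $1-\negl(n)$ completeness (i.e., $\QPCP_{\negl}$, equivalently \autoref{conj:quantum-pcp-intro}), is correct and in fact matches the paper's own text at the beginning of \autoref{sec:zk-suc-cand}; the corollary's ``assume $\QPCP$'' is loose on this point, and your handling of it via $\QPCP_{\negl}$ is the right one.
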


\begin{comment}
\subsection{Discussions}\label{sec:suc-discussions}

Moreover, we suspect that if we apply $\Pizk$ to $\SimQMA$\footnote{The prover first commits to a state $\sigma$ and sends $\state^{(1)}$ to the verifier; the verifier then simulates the local verifier $V_L$ for $L \in \SimQMA$ by asking the prover to open relevant qubits in $\sigma$; this simulation requires $\ell$ rounds between the prover and the verifier if $V_L$ is $\ell$-adaptive.} instead of $\LocalQMA$, the resulting protocol would be both succinct and zero-knowledge.\footnote{A candidate simulator first guesses the sequence $\tau_0,\dotsc,\tau_\ell$, and then simulates the protocol with the state $(\sigma') \otimes I_{\bar{S}}$, where $\sigma'$ and $S$ are from~\autoref{defi:simQMA}, and restarts if the simulated run does not give the same sequence $\tau_0,\dotsc,\tau_\ell$. See also~\autoref{sec:zk-proof-of-zk}.} We leave this as an interesting open question.

Another open question is whether we can instantiate the protocol $\Pizk$ in the more standard $\QROM$. A natural way to do this is to simulate a Haar random unitary by a standard random oracle; see~\cite[Section~6]{JiL018} for some candidate constructions. 
\end{comment}
	%\input{zero-knowledge}
	
	\section*{Acknowledgments} L.C. would like to thank Jiahui Liu and Qipeng Liu for helpful discussions and pointing out many related works. This work was done while L.C. did an internship at IBM Quantum Research.
	
	\bibliographystyle{alpha}
	\bibliography{literature}
	
	\appendix
	
\end{document}